\newtheorem{theorem}{Theorem}[section]
\newtheorem{lemma}[theorem]{Lemma}
\newtheorem{prop}[theorem]{Proposition}
\newtheorem{corollary}[theorem]{Corollary}
\newtheorem{definition}[theorem]{Definition}
\def\cG{{\mathcal G}}
\def\cL{{\mathcal L}}
\newcommand{\norm}[1]{\left\lVert#1\right\rVert}
\renewcommand{\bf}[1]{\textbf{#1}}
\newcommand{\change}[1]{#1}
\begin{document}

\vspace{-2em}

\title{Towards Consensus:\\ Reducing Polarization by Perturbing Social Networks}

\author{Miklos Z. R\'acz\thanks{ORFE, Princeton University; \url{mracz@princeton.edu}. Research supported in part by NSF grant DMS 1811724 and by a Princeton SEAS Innovation Award.} 
\and 
Daniel E. Rigobon\thanks{ORFE, Princeton University; \url{drigobon@princeton.edu}. Research supported in part by NSF grant DMS 1811724 and by a Princeton SEAS Innovation Award.}
}

\date{\today}

\maketitle

\vspace{-0.8cm}
\begin{center}
    \renewcommand{\UrlFont}{\normalsize}
    \begin{abstract}
        This paper studies how a centralized planner can modify the structure of a social or information network to reduce polarization. First, polarization is found to be highly dependent on degree and structural properties of the network \change{-- including the well-known isoperimetric number (i.e., Cheeger constant)}. We then formulate the planner's problem under full information, and motivate disagreement-seeking and coordinate descent heuristics. A novel setting for the planner in which the population's innate opinions are adversarially chosen is introduced, and shown to be equivalent to maximization of the Laplacian's spectral gap. We prove bounds for the effectiveness of a strategy that adds edges between vertices on opposite sides of the cut induced by the spectral gap's eigenvector. Finally, these strategies are evaluated on six real-world and synthetic networks. In several networks, we find that polarization can be significantly reduced through the addition of a small number of edges.
    \end{abstract}
\end{center}

\newpage
\section{Introduction}

In recent years there has been a substantial increase in sociopolitical polarization -- it is clear that our society does not agree on issues in politics, science, healthcare, and beyond. Counter-intuitively, this has been accompanied by the growth of social media platforms; individuals are connecting with others and sharing information more than ever before. How is it that ``bringing the world closer together''\footnote{The original mission statement of Facebook.} resulted in our opinions drifting further apart?

This phenomenon is a byproduct of the structure of our social networks; a greater number of connections does not necessarily reflect a closeness to consensus. It is possible for the proliferation of social media to reduce one's exposure to other opinions, and thereby entrench them in a community of like-minded users. This feature is known as an ``echo chamber,'' and has been found to emerge through the incentives of recommender systems rewiring the network~\citep{Chitra2020}. Furthermore, confirmation bias and structural similarity have been found to contribute to increases in polarization as the structure of the network evolves~\citep{Bhalla2021, santos2021link}. Therefore, \emph{how} the population is connected -- as opposed to how connected the population \emph{is} -- may be most important to the emergence of polarization.

In this paper, we seek an understanding of how a network planner can reduce polarization by changing the structure of a population's social or information network. To that end, we present a model of \emph{budgeted network perturbation}, where the planner is given a small budget with which to modify the structure of a given network. We study the planner's problem in two different settings, and evaluate simple heuristics on both real-world and synthetic networks.

There has been a significant research effort towards reducing polarization in networks~\citep{Chen2018a, garimella2017reducing, haddadan2021repbublik, matakos2017measuring,rahaman2021model}. In contrast to both~\citet{matakos2017measuring} and~\citet{rahaman2021model}, we hold fixed the population's opinions -- while allowing the network structure to be modified. This paper differs from~\citet{garimella2017reducing} and~\citet{haddadan2021repbublik} in both our use of a distinct measure of polarization and incorporating opinion dynamics. Finally, we improve upon the closely related work of~\citet{Chen2018a} through a more detailed theoretical analysis of edge effects, \change{consideration of weighted networks,} and study of larger datasets.

A very similar paper to our own is recent work by~\citet{zhu2021minimizing}, where the authors present a variation of the problem studied in~\citet{Musco2018}. Both these studies aim to minimize the sum of polarization and disagreement by changing the network structure, but~\citet{zhu2021minimizing} impose a budget that ensures only a small number of edges can be changed. These authors use a similar budget constraint to our own, but their polarization-disagreement index varies greatly with the edge density of the graph. Although it is convenient for analysis and computation, their index is inadequate for capturing the dynamics of polarization alone. Nonetheless, we believe the formulation in this paper and~\citet{zhu2021minimizing} to be practical. The network structure is not assumed to be completely malleable, but small changes are permitted. For instance, while social media platforms such as Facebook or Twitter cannot dictate who an individual chooses to `friend' or `follow', these platforms can curate an individual's feed to change one's relative exposure levels to certain content. This process perturbs the structure of external influence on an individual, so that it differs from their endogenously created network of `friends' or `follows'. If, instead, any of these platforms suddenly decided to completely rewire their social networks, users may be upset.

It is then natural to consider the questions in this paper: how does the network planner decide to allocate their budget? How much of an impact can be made? How large of a budget is needed to achieve a significant reduction in polarization?

We begin by first establishing a relationship between structural properties of a social network and its level of polarization. We find that \change{both the degree profiles and the strength of information bottlenecks -- quantified by the well-known Cheeger constant in spectral graph theory -- are closely tied to polarization}. This result naturally captures the intuition and dangers of echo chambers in real-world networks.

Next, we focus on the formulation and analysis of two settings for network optimization. In the first, the planner has full information of the population's opinions. We provide theoretical motivation for two heuristics: coordinate descent and a stepwise disagreement-seeking algorithm. The former is standard in optimization, while the latter is the antithesis of confirmation bias. Existing research has shown that addition of edges between like-minded individuals contributes to increasing polarization~\citep{Bhalla2021}. Moreover, according to~\citet{Bindel2015} it is `costly' for individuals to be connected to others who disagree with them, and recommender systems can be designed to minimize disagreement~\citep{Chitra2020}. Therefore, the incentives of both individuals and social media platforms may naturally lead to polarization growing over time. In contrast, we show that a simple disagreement-seeking approach taken by the planner leads to substantial reductions in polarization. \change{This result is closely tied to our choice of the opinion dynamics model. In this paper, interactions between individuals are always \emph{attractive} -- bringing opinions closer together. In reality, this is not the case (see \citet{bail2018exposure, balietti2021reducing}). This simplification, however, will facilitate theoretical results -- which we believe can be leveraged for partial understanding of polarization-reduction strategies in a richer class of models.}

This paper also presents a novel setting for the network planner, wherein the population's opinions are chosen adversarially. In several papers from the literature (see, for instance~\citet{Chen2022,Gaitonde2020,matakos2017measuring,rahaman2021model}), an adversary is able to change the individuals' opinions -- seeking to maximize polarization. The setting we study represents a planner whose network design must be \emph{robust} to the adversary's disruption. We show that this setting for the planner's problem is intimately related to maximizing the spectral gap of the graph's Laplacian, which is a well-studied problem~\citep{Donetti2006,Wang2008,Watanabe2010}. We provide theoretical guarantees for a heuristic that connects vertices on opposite sides of the cut corresponding to the spectral gap.

We then evaluate several natural heuristics on real-world and synthetic networks. There are significant reductions in polarization for networks with strong initial community structures. Furthermore, we study how the spectral gap and homophily are affected by the planner's modifications. We find that the largest reductions in polarization are accompanied by reductions in homophily. In many cases, however, one of our heuristics effectively reduces polarization with little effect on homophily. We also observe that two heuristics lead to vertices with extreme opinions becoming more central in the graph structure. In many of the networks studied, a small budget yields substantial reductions in polarization.

The paper is organized as follows. Section~\ref{sec:literature} provides a detailed review of recent and related work. Next, Section~\ref{sec:model} introduces relevant notation, definitions, and preliminaries. Section~\ref{sec:theoretical_results} provides theoretical ground for three heuristics, which are described and evaluated on several networks in Section~\ref{sec:emp_res}. Finally, Section~\ref{sec:conclusion} concludes and discusses potential directions for future work.

\section{Relevant Literature}
\label{sec:literature}

\change{
The papers most similar to our own are recent studies by~\citet{Chen2018a}, \citet{Gaitonde2020}, \cite{Chitra2020}, and~\citet{zhu2021minimizing}. \citet{Gaitonde2020} motivates our adversarial disruption of the population's opinions, while both \citet{Chen2018a} and \citet{zhu2021minimizing} aim to modify a social network's structure by adding a small number of edges. \citet{Chitra2020} impose a constraint on the edge weight modified -- but not the number of edges. In particular, they focus on changing a large number of edges by a small amount, whereas we seek to do the converse. Our work differs from \citet{Chen2018a} through greater emphasis on theory and generalization to weighted graphs. The objective function in~\citet{zhu2021minimizing} fundamentally differs from our own, and represents a different problem faced by the network planner. 

In addition, this paper is broadly tied to the literature on opinion dynamics, perturbation of network structures, and influencing polarization. Relevant studies in each of these areas are discussed in the following.
}

\subsubsection*{Opinion Dynamics}

The study of consensus-forming begins with the seminal work of~\citet{DeGroot1974}, where under weak conditions on the social network, the opinions eventually converge to a perfect consensus. This model was expanded by~\citet{Friedkin1990} \change{(and more recently by \citet{conjeaud2022degroot})}, so that the long-term opinions are heterogeneous. Because of this feature and its simplicity, the Friedkin-Johnsen (FJ) model has appeared in several recent studies on opinion polarization and disagreement -- see for instance,~\citet{matakos2017measuring,Musco2018,Chen2018a,Chitra2020,Gaitonde2020,zhu2021minimizing,Chen2022}. In this paper, we will also use the FJ model. Not only is it standard in the literature, but it is mathematically convenient for analysis. There are also rich areas of work which justify and extend the FJ model. For instance,~\citet{Bindel2015} show that the expressed opinions of this model correspond to the Nash equilibrium of a cost-minimizing game between individuals.

There are a few notable extensions to the FJ model, in which individuals have more complex behavior. For example, a recent survey by~\cite{biondi2022dynamics} presents several generalizations and (relevantly) assesses if polarization can occur in each. A fundamental feature of the FJ model is that individuals are always drawn toward the opinions of their neighbors -- but experimental evidence of this feature is inconclusive and contextual~\citep{bail2018exposure, balietti2021reducing}. Motivated by this observation, new models have been developed in which individuals have bounded confidence~\citep{hegselmann2002opinion} or even experience repulsion~\citep{rahaman2021model,cornacchia2020polarization}. \change{It is also possible to incorporate geometric structures into the dynamics, such as recent work by \citet{hazla2019geometric} and \citet{gaitonde2021polarization}. Finally, we note that there are several related studies within the controls literature, which focus on consensus dynamics on a network, for instance, when agents have antagonistic dynamics \citep{altafini2012consensus}, or are stubborn \citep{mao2018spread} -- see \citet{qin2016recent} for a more complete survey.}

\subsubsection*{Optimizing Network Structures}

This paper formulates an optimization problem over network structures, aiming to reduce a particular definition of polarization. There are several related works in the literature. For example, \citet{Musco2018} allows unconstrained rewiring of the social network to reduce the \textit{polarization-disagreement index}, which is defined as the sum of polarization and disagreement. A recent paper of~\citet{zhu2021minimizing} optimizes the same index via addition of a limited number of edges. This index is analytically and computationally convenient because of its monotonicity and convexity, but it is highly sensitive to the edge density of the graph.\footnote{The polarization-disagreement index consists of adding polarization, which is on the order of $n$ (the number of vertices), and disagreement, which is of order $m$ (the number of edges). Therefore this index is dominated by disagreement for dense graphs (specifically, if $m \gg n$).} We instead focus exclusively on minimizing polarization, which is shown to be neither convex nor monotone in Section~\ref{sec:given_ops}. However, this paper restricts edge modifications similarly to~\citet{zhu2021minimizing}.

A more closely related work by~\citet{Chen2018a} presents several definitions of `conflict' in social networks, and studies how they can be minimized through iterative perturbations to the graph. One such measure of conflict equals polarization. We expand on the authors' work by providing a detailed theoretical analysis of edge perturbations on polarization, \change{generalizing the analysis to weighted graphs,} and conducting simulations on larger real-world and synthetic networks.

The aforementioned papers share with ours a definition of polarization. However, it is possible to optimize for other notions of `cohesiveness' or `consensus'. For instance, \citet{garimella2017reducing} and ~\citet{haddadan2021repbublik} both present measures of polarization based on random walks, and propose algorithms for reducing it via edge addition. The greatest similarity between their work and ours lies in the use of a greedy, stepwise approach to a combinatorial optimization problem. However, the authors' definitions of polarization do not directly incorporate opinion dynamics.\footnote{We note that the Friedkin-Johnsen model has a random walk interpretation of the long-term opinions, see~\citet{Gionis2013}.} Moreover, in~\citet{haddadan2021repbublik}, nodes represent webpages, not individuals.

Another definition of cohesiveness, which does not depend on any node opinions or labels, is the spectral gap of a graph. The spectral gap controls the synchronizability of dynamical systems and mixing times of Markov chains~\citep{Donetti2006}, and therefore its maximization is of great interest. For instance,~\citet{Watanabe2010} seek to increase the spectral gap by removing nodes. Unlike these authors, we focus on changes to a graph's edges. More relevantly,~\citet{Wang2008} study how the algebraic connectivity (i.e., spectral gap) can be increased by adding edges. The authors present two strategies for doing so, one of which is derived from the eigenvector corresponding to the spectral gap. In this paper, we show that the adversarial setting of the planner's problem is closely related to their work, and provide bounds on polarization using this eigenvector-based strategy.

\subsubsection*{Natural Network Dynamics}

A different branch of research aims to understand how polarization is shaped by \emph{rewiring dynamics} in the network. For instance, a recent paper by~\citet{Bhalla2021} studies how individuals' local rewiring rules can lead to higher polarization. The authors conclude that confirmation bias and friend-of-friend behavior are critical for this result. However, their theoretical results focus on the polarization-disagreement index. Moreover, we derive an improved upper bound for polarization in Section~\ref{sec:contraction_and_pol}. A similar paper by~\citet{santos2021link} shows that allowing individuals to rewire according to structural similarity leads to polarization, although the authors use a distinct model of opinion dynamics.

It is also possible to study the dynamics driven by a network administrator. \citet{Chitra2020} present a setting in which a network administrator rewires the network over time by providing `recommendations' to users based on minimizing disagreement. They show that without a regularization term in the optimization problem, the administrator greatly increases polarization. The authors' result contrasts with one of the main findings of this paper, namely that connecting disagreeing individuals is effective for reducing polarization.

\subsubsection*{Optimizing Opinion Profiles}

While less relevant to this paper, a complimentary line of work assumes that the network structure remains fixed, but the innate opinions are subject to change. For instance, \citet{Gionis2013} establish NP-Hardness of an opinion maximization problem, in which an administrator takes over a small set of individuals and sets their opinions to the largest possible value. Papers by~\citet{matakos2017measuring} (resp.~\citet{matakos2020tell}) seek to minimize polarization (resp. maximize diversity, i.e., disagreement) by choosing a small subset of individuals to have neutral opinions. Finally, the work of~\citet{rahaman2021model} aims to minimize polarization in an extension of the FJ model, but by shifting each individuals opinion by a small amount.

These studies have generally taken the perspective of a benevolent network planner. It is also possible to consider the perspective of an adversary, who takes over a small number of individuals and seeks to maximize polarization or disagreement~\citep{Chen2022}. A more powerful adversary in~\citet{Gaitonde2020} chooses the opinions of the entire population to the same end. In particular,~\citet{Gaitonde2020} present a problem of \textit{defending} the network from this adversary by making some opinions more resistant to change. In this paper, we will consider a similar setting, but where the network is defended by altering its structure instead. Nonetheless, the adversary faced is modeled on their work.

\section{Model}
\label{sec:model}

\change{
An undirected graph $\cG(V, E, W)$ is defined by a set of vertices $V$ given by $[n] := \{ 1, \ldots, n \}$, a set of edges $E \subset V \times V$ consisting of unordered pairs of vertices, and weight matrix $W\in [0,\bar w]^{n\times n}$. $W$ is assumed to be a symmetric matrix of non-negative edge weights such that $w_{ij} > 0$ if and only if $(i,j) \in E$, and $\bar w < \infty$ indicates the maximum possible edge weight. For a graph $\cG$, its degree matrix $D$ is diagonal, and satisfies $D_{ii} = d_{i}$, where $d_{i} = \sum_{j} w_{ij}$ is the (weighted) degree of vertex $i$. Let $L = D-W$ denote the combinatorial graph Laplacian, and $\cL = D^{-1/2}LD^{-1/2}$ denote the normalized Laplacian. We write $N(i) := \{ j \in [n] : (i,j) \in E \}$ for the neighbors of vertex $i$.
}

Vertices are given \emph{innate opinions} $\mathbf{s} \in [0,1]^n$, which represent a continuum between two extreme positions on an issue. For instance, an individual who is totally in favor of strict firearm laws may have an opinion of $0$, whereas one extremely against any such regulations would have an opinion of $1$. The population's opinions evolve over time, beginning from the innate opinions $\mathbf{s}$. The evolution of opinions follows the dynamics of~\citet{Friedkin1990} (see below), and the opinions converge to a fixed point -- denoted $\mathbf{z}$ and called the \emph{expressed opinions} of the population. In this paper, we are interested in modifications to the underlying graph~$\cG$, and therefore take the innate opinions $\mathbf{s}$ to be fixed. Consequently, we write $\mathbf{z}$ and $\mathbf{z}'$ for the expressed opinions corresponding to the social networks $\cG$ and $\cG'$, respectively. Occasionally, to emphasize the underlying graph $\cG$, we will write $\mathbf{z}_{\cG}$.

\subsection{Opinion Dynamics}

In the seminal model of~\citet{DeGroot1974}, the population's expressed opinions converge to a perfect consensus under weak conditions. A notable extension of the DeGroot model is by~\citet{Friedkin1990}, whose model preserves long-term heterogeneity of opinions. In particular, $\mathbf{z} = c \vec{\mathbf{1}}$ if and only if $\mathbf{s} = c \vec{\mathbf{1}}$. This model is convenient for analysis because the expressed opinions can be written explicitly. 
Furthermore, several recent works in the literature have leveraged this opinion dynamics model -- see Section~\ref{sec:literature} for more detail.

The Friedkin-Johnsen (FJ) opinion dynamics model is specified by the discrete-time mapping $\mathbf{s}(t) \to \mathbf{s}(t+1)$ as follows. We initialize $\mathbf{s}(0) = \mathbf{s}$, and iterate

\change{
\begin{equation}
    \label{eq:op_dyn}
    s_{i}(t+1) = \frac{s_{i}(0) + \sum_{j \in N(i)}w_{ij}s_{j}(t)}{1+\sum_{j \in N(i)}w_{ij}},
\end{equation}
where $w_{ij}$ is the weight associated with edge $(i,j)$, and is non-zero if and only if $j\in N(i)$.} The expressed opinions $\mathbf{z}$ are the fixed point of this mapping, given by

\begin{equation}\label{eq:z}
    \mathbf{z} = (I+L)^{-1}\mathbf{s},
\end{equation}
\change{where $I$ denotes the $n\times n$ identity matrix.} Notice that $I+L \succcurlyeq I$ is necessarily invertible. Thus, there exist unique expressed opinions $\mathbf{z}$ for any given $\mathcal{G}$ and $s$. Moreover, since the eigenvalues of $(I+L)^{-1}$ are no greater than $1$, the expressed opinions of the FJ dynamics are a contraction of the innate opinions. \change{This observation also follows from the fact that the FJ model is purely \emph{attractive} -- opinions of connected individuals are always drawn to each other over time. One of the heuristics in this paper will depend on this feature of the dynamics. However, exposure to substantially differing opinions in the real-world may yield no effect, or even strengthen one's original position. In Section~\ref{sec:conclusion} we discuss how our results might be leveraged for such a class of richer opinion dynamics models, and relevant directions for future work.
}

\subsection{Polarization and Disagreement}
\label{sec:pol_and_dis}

In practice, a perfect consensus is rare; therefore, we seek to understand ``closeness'' to consensus. Accordingly, we define \emph{polarization} to be proportional to the variance of the expressed opinions. Large polarization indicates that the population is far from achieving a consensus, and vice-versa. Formally, we define:

\begin{definition}[Polarization]
    Given a vector of opinions $\mathbf{x} = \left( x_{1}, \ldots, x_{n} \right)$ and the mean of its entries $\overline{x} := \frac{1}{n} \sum_{i=1}^{n} x_{i}$, the \emph{polarization} of $\mathbf{x}$ is

    \begin{equation}
        P(\mathbf{x}) := \sum_{i=1}^n\left(x_i - \overline{x}\right)^2 = \norm{\widetilde{\mathbf{x}}}^2,
    \end{equation}
    where $\widetilde{\mathbf{x}} := \mathbf{x} - \overline{x} \vec{\mathbf{1}}$ are the mean-centered opinions.
\end{definition}

In particular, $P(\mathbf{z})$ is \emph{expressed} polarization, and $P(\mathbf{s})$ is \emph{innate} polarization.

It is also useful to define \emph{disagreement}, which captures distance from consensus on a local scale. Intuitively, if two vertices have very distinct opinions, then their disagreement is large.

\begin{definition}[Disagreement]
    For any vector of opinions $\mathbf{x} = \left( x_{1} \dots x_{n} \right)$, the disagreement between vertices $i$ and $j$ is given by:

    \begin{equation}
        D_{ij}(\mathbf{x}) := (x_i - x_j)^2.
    \end{equation}
\end{definition}

Again, between vertices $i$ and $j$, $D_{ij}(\mathbf{z})$ is the \emph{expressed} disagreement, while $D_{ij}(\mathbf{s})$ is the \emph{innate} disagreement. The two quantities above have been studied in several recent papers on social and information networks; see~\citet{matakos2017measuring, Musco2018, Chen2018a, Chitra2020, Gaitonde2020, zhu2021minimizing, Bhalla2021, rahaman2021model, santos2021link, Chen2022} and references therein.

\section{Theoretical Results}
\label{sec:theoretical_results}

We now present several theoretical results on polarization. We study how its magnitude depends on structural properties of the graph, and how it can vary as a planner modifies the edges.

\subsection{Opinion Contraction and Polarization}
\label{sec:contraction_and_pol}

This paper is primarily concerned with polarization of expressed opinions, $P(\mathbf{z})$. However, the relationship between expressed and innate polarization depends on $\cG$. Since the opinion dynamics model performs a contraction on the opinions, it follows that $P(\mathbf{z}) \le P(\mathbf{s})$. In fact, more is true: the contraction ratio is controlled by the degrees and structural properties of~$\cG$.

To present this result, we must introduce some notation. For any two disjoint subsets of vertices $B_1$ and $B_2$, let $E(B_1,B_2)$ denote the set of edges with one incident vertex in $B_1$ and the other in $B_2$. We define the conductance of a nonempty subset of vertices $X$ as

\change{
\begin{equation}
    h_{\cG}(X) := \frac{\sum_{(i,j) \in E(X,X^C)} w_{ij}}{\min\{ \sum_{v\in X} d_{v}, \sum_{u\in X^{C}} d_{u}\}}.
\end{equation}
}
The isoperimetric number \change{(also known as the Cheeger constant)} of a graph $\cG$ is given by

\begin{equation}
    \label{eq:isoperimetric}
    h_{\cG} := \min_{X\subset V, 0 < |X| < |V|} h_{\cG}(X),
\end{equation}
as in~\citet{chung1997spectral}, and will appear in the results. Note that $h_{\cG}\le 1$, since $h_{\cG}(X)= 1$ when $X$ consists of a single vertex. Furthermore, $h_{\cG} = 0$ if and only if $\cG$ is disconnected. The isoperimetric number of a graph is an indication of the presence of bottlenecks -- it is small when there exists a large set of vertices that is sparsely connected to the remainder of the graph.

We now arrive at a first result on the contraction properties of the FJ model on polarization.

\begin{prop}
	\label{prop:p_bounds}
	Let $d_{\min}$ and $d_{\max}$ be the minimum and maximum weighted degrees in $\cG$, and let $h_{\cG}$ be its isoperimetric number. 
    Then,
 
	\change{
    \begin{equation}
        \frac{P(\mathbf{s})}{\left(1+(2d_{\max})\wedge (\bar w n) \right)^{2}} \le P(\mathbf{z}) \le \frac{P(\mathbf{s})}{\left(1+\frac{1}{2}d_{\min}h_{\cG}^{2}\right)^{2}}.
    \end{equation}
	}
\end{prop}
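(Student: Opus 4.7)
The plan is to reduce the proposition to a standard spectral-gap estimate for the combinatorial Laplacian $L$. First, since $L\vec{\mathbf{1}}=0$, one has $(I+L)^{-1}\vec{\mathbf{1}}=\vec{\mathbf{1}}$, and by symmetry $\vec{\mathbf{1}}^T\mathbf{z}=\vec{\mathbf{1}}^T\mathbf{s}$, so $\overline{z}=\overline{s}$ and hence $\widetilde{\mathbf{z}}=(I+L)^{-1}\widetilde{\mathbf{s}}$ with $\widetilde{\mathbf{s}}\perp\vec{\mathbf{1}}$. Because $(I+L)^{-1}$ preserves the subspace $\vec{\mathbf{1}}^\perp$ on which its eigenvalues are $\{(1+\lambda_k(L))^{-1}\}_{k=2}^{n}$, the two-sided operator-norm inequality
\begin{equation}
    \frac{\|\widetilde{\mathbf{s}}\|^{2}}{(1+\lambda_{n}(L))^{2}} \;\le\; \|\widetilde{\mathbf{z}}\|^{2} \;\le\; \frac{\|\widetilde{\mathbf{s}}\|^{2}}{(1+\lambda_{2}(L))^{2}}
\end{equation}
holds. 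Since $P(\mathbf{z})=\|\widetilde{\mathbf{z}}\|^{2}$ and $P(\mathbf{s})=\|\widetilde{\mathbf{s}}\|^{2}$, the proposition is reduced to producing (i) an upper bound $\lambda_{n}(L)\le (2d_{\max})\wedge(\bar w n)$ and (ii) a lower bound $\lambda_{2}(L)\ge\tfrac{1}{2}d_{\min}h_{\cG}^{2}$.

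For (i), I would write $L=\sum_{(i,j)\in E} w_{ij}(e_{i}-e_{j})(e_{i}-e_{j})^{T}$ and use two variational estimates. Bounding $(x_{i}-x_{j})^{2}\le 2(x_{i}^{2}+x_{j}^{2})$ in the Rayleigh quotient of $L$ gives $x^{T}Lx\le 2\sum_{i}d_{i}x_{i}^{2}\le 2d_{\max}\|x\|^{2}$. The alternative bound follows by dominating $L$ in the Loewner order by $\bar w$ times the Laplacian of the complete graph $K_{n}$, whose largest eigenvalue equals $n$.

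For (ii) — which I expect to be the main obstacle — the Cheeger inequality applied to the normalized Laplacian $\cL$ gives $g^{T}Lg\ge\tfrac{1}{2}h_{\cG}^{2}\sum_{v}d_{v}g_{v}^{2}$ for every $g$ satisfying $\sum_{v}d_{v}g_{v}=0$. The delicate point is that the constraint $f\perp\vec{\mathbf{1}}$ dictated by $\lambda_{2}(L)$ is \emph{not} the same as the weighted orthogonality $\sum_{v}d_{v}g_{v}=0$. I would handle this by the following shift trick: given $f\perp\vec{\mathbf{1}}$, define $c:=(\sum_{v}d_{v}f_{v})/(\sum_{v}d_{v})$ and $g:=f-c\vec{\mathbf{1}}$, so that $g$ is admissible for the weighted Cheeger bound while $g^{T}Lg=f^{T}Lf$ (because $L\vec{\mathbf{1}}=0$). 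A short computation using $\sum f_{v}=0$ shows $\|g\|^{2}=\|f\|^{2}+nc^{2}\ge\|f\|^{2}$. Chaining the inequalities,
\begin{equation}
    f^{T}Lf \;=\; g^{T}Lg \;\ge\; \tfrac{1}{2}h_{\cG}^{2}\sum_{v}d_{v}g_{v}^{2} \;\ge\; \tfrac{1}{2}h_{\cG}^{2}d_{\min}\|g\|^{2} \;\ge\; \tfrac{1}{2}h_{\cG}^{2}d_{\min}\|f\|^{2},
\end{equation}
yields the desired estimate $\lambda_{2}(L)\ge\tfrac{1}{2}d_{\min}h_{\cG}^{2}$.

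Finally, inserting the two spectral bounds into the operator-norm inequality at the start yields exactly the claimed double inequality on $P(\mathbf{z})$. The conceptual obstacle throughout is the mismatch between the standard (normalized) Cheeger statement and the combinatorial quantity $\lambda_{2}(L)$; the shift-by-a-constant argument handles it cleanly without losing any factor beyond the $d_{\min}$ already present.
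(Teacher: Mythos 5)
Your proof is correct, and while the outer structure matches the paper's (write $P(\mathbf{z})=\widetilde{\mathbf{s}}^{T}(I+L)^{-2}\widetilde{\mathbf{s}}$, use $\widetilde{\mathbf{s}}\perp\vec{\mathbf{1}}$ to replace the trivial eigenvalue $1$ by $(1+\lambda_{2}(L))^{-2}$ in the upper bound, and reduce everything to the spectral estimates $\lambda_{n}(L)\le(2d_{\max})\wedge(\bar w n)$ and $\lambda_{2}(L)\ge\tfrac{1}{2}d_{\min}h_{\cG}^{2}$), you establish those spectral estimates by a genuinely different route. The paper's Lemma~\ref{lem:L_eigenvalues} transfers Cheeger's inequality from the normalized Laplacian $\cL$ to the combinatorial Laplacian $L$ by viewing $L$ as having the spectrum of the product $\cL D$ and invoking Weyl--Horn multiplicative eigenvalue inequalities for products of positive semidefinite Hermitian matrices; it also bounds $\lambda_{n}(L)$ via the triangle inequality $\norm{D-W}\le\norm{D}+\norm{W}\le 2d_{\max}$. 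You instead work entirely variationally: the bound $x^{T}Lx\le 2\sum_{i}d_{i}x_{i}^{2}$ from $(x_{i}-x_{j})^{2}\le 2(x_{i}^{2}+x_{j}^{2})$, and, for the spectral gap, the recentering trick $g=f-c\vec{\mathbf{1}}$ with $c=\sum_{v}d_{v}f_{v}/\sum_{v}d_{v}$, which converts a test vector for $\lambda_{2}(L)$ (orthogonal to $\vec{\mathbf{1}}$) into an admissible test vector for the weighted Cheeger bound without changing the quadratic form and without decreasing the Euclidean norm. Your argument is more elementary and self-contained (no appeal to multiplicative majorization), and it makes transparent exactly where the factor $d_{\min}$ enters; the paper's route is shorter on the page but outsources the key step to a cited inequality. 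Both yield identical constants, so the resulting proposition is the same.
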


Proposition~\ref{prop:p_bounds} quantifies the effects of the FJ model on polarization. In particular, if $\cG$ has strong expander properties (i.e., $h_{\cG}$ is large), then we expect the expressed polarization to be small, relative to the innate polarization. The proof of this result can be found in Appendix~\ref{sec:proofs}, and follows from simple eigenvalue bounds and a version of Cheeger's inequality.

This result provides a tighter upper bound on polarization than that of~\citet{Bhalla2021}. The tightening is achieved by observing that the mean-centered innate opinions $\widetilde{\mathbf{s}}$ are orthogonal to the eigenvector of $(I+L)^{-2}$ that has corresponding eigenvalue 1. In addition, we can use Proposition~\ref{prop:p_bounds} to show that the complete graph $K_n$, \change{with all edge weights equal to the maximal $\bar w$,} is a global minimum for polarization.

\begin{corollary}
    \label{cor:global_min}
    Fix innate opinions $\mathbf{s}$, and let $\cG$ be any graph on $n$ vertices with maximal edge weight $\bar w$. Let $\mathbf{z}_{K_n}$ and $\mathbf{z}_{\cG}$ denote the expressed opinions on $K_n$ and $\cG$, respectively. 
    Then, 
    
    \begin{equation}
        P(\mathbf{z}_{K_n}) \le P(\mathbf{z}_{\cG}).
    \end{equation}
    Moreover, \change{$P(\mathbf{z}_{K_n}) = \frac{P(\mathbf{s})}{(1+\bar w n)^2}$}.
\end{corollary}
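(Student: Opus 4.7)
The plan is to establish the formula for $P(\mathbf{z}_{K_n})$ directly by diagonalizing $(I+L)^{-1}$ for the complete graph, and then compare it against the lower bound from Proposition~\ref{prop:p_bounds}.

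First I would compute $P(\mathbf{z}_{K_n})$ explicitly. For $K_n$ with all edge weights equal to $\bar w$, the Laplacian takes the form $L = \bar w\,(nI - J)$, where $J$ is the all-ones matrix. Consequently,
\begin{equation}
    I + L = (1 + \bar w n)\, I - \bar w\, J.
\end{equation}
Because $J$ has the simple eigenvalue $n$ with eigenvector $\vec{\mathbf{1}}$ and eigenvalue $0$ on the orthogonal complement, $(I+L)^{-1}$ acts as the identity on $\mathrm{span}(\vec{\mathbf{1}})$ and as $\frac{1}{1+\bar w n}\,I$ on $\vec{\mathbf{1}}^{\perp}$. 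Decomposing $\mathbf{s} = \overline{s}\vec{\mathbf{1}} + \widetilde{\mathbf{s}}$, this immediately gives $\mathbf{z}_{K_n} = \overline{s}\vec{\mathbf{1}} + \tfrac{1}{1+\bar w n}\widetilde{\mathbf{s}}$, so $\overline{z}_{K_n} = \overline{s}$ and $\widetilde{\mathbf{z}}_{K_n} = \tfrac{1}{1+\bar w n}\widetilde{\mathbf{s}}$. Taking squared norms yields
\begin{equation}
    P(\mathbf{z}_{K_n}) = \frac{P(\mathbf{s})}{(1+\bar w n)^2}.
\end{equation}

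Next, for the comparison against an arbitrary graph $\cG$, I would simply invoke the lower bound of Proposition~\ref{prop:p_bounds}. Since $(2d_{\max}) \wedge (\bar w n) \le \bar w n$, we have
\begin{equation}
    P(\mathbf{z}_{\cG}) \;\ge\; \frac{P(\mathbf{s})}{\bigl(1 + (2 d_{\max}) \wedge (\bar w n)\bigr)^{2}} \;\ge\; \frac{P(\mathbf{s})}{(1+\bar w n)^{2}} \;=\; P(\mathbf{z}_{K_n}),
\end{equation}
which is the desired inequality.

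In truth there is no serious obstacle: the only substantive computation is diagonalizing $I + L$ on the complete graph, which rests on the two-eigenvalue structure of $J$. The comparison step is then a one-line consequence of Proposition~\ref{prop:p_bounds}, provided we are careful to note that the lower bound there is monotone decreasing in $d_{\max}$ and capped by $\bar w n$, which is precisely what makes $K_n$ (with edges weighted at the maximum $\bar w$) the extremal case. The one subtlety worth flagging is that one must use the $\bar w n$ side of the minimum in the bound, rather than the $2d_{\max}$ side, to match the complete-graph formula exactly; this is why the statement explicitly assumes the maximum edge weight $\bar w$.
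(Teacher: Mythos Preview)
Your proposal is correct and mirrors the paper's own proof essentially step for step: both compute $P(\mathbf{z}_{K_n})$ from the two-eigenvalue structure of $L_{K_n}$ (eigenvalue $0$ on $\vec{\mathbf{1}}$ and $\bar w n$ on $\vec{\mathbf{1}}^{\perp}$), and both invoke the lower bound of Proposition~\ref{prop:p_bounds} together with $(2d_{\max})\wedge(\bar w n)\le \bar w n$ to conclude. The only difference is the order of presentation.
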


\change{The key observation in the proof of Corollary~\ref{cor:global_min} is that all non-zero eigenvalues of $L_{K_n} = \bar w \left( n I - \vec{\mathbf{1}} \vec{\mathbf{1}}^T\right)$ (the Laplacian of the complete graph) are equal to $\bar w n$.} Therefore, for any~$\cG$, the value of polarization on $K_n$ achieves with equality the smallest lower bound from Proposition~\ref{prop:p_bounds}. This result also provides a useful reference point for indicating the planner's closeness to global optimality.

\subsection{Given Opinions}
\label{sec:given_ops}

We now turn to studying how the planner can decrease polarization by modifying the graph.

In a first setting, we assume that the innate opinions are known. If the planner can change (by adding or removing) the edge weight between at most $k$ pairs of vertices, what is the least polarization they can achieve? Formally, given a graph $\mathcal{G}$ with innate opinions $\mathbf{s}$, and integer budget $k > 0$, we wish to solve

\change{
\begin{equation}
	\label{opt:pol_min}
	\begin{split}
	\min_{\mathcal{G}'} \ & P(\mathbf{z}') \\
	 \mathrm{s.t.} \ & ||W-W'||_0 \le 2k,
	\end{split}
\end{equation}
where the expressed opinions $\mathbf{z}'$ correspond to $\cG'$, which must also be an undirected graph with maximal edge weight $\bar w$.} The factor of two in the constraint of~\eqref{opt:pol_min} follows from our assumption of undirected graphs. \change{The constraint naturally captures the assumption that it is costly for the planner to modify an edge, but upon committing to doing so, they may freely change the edge weight.}\footnote{\change{In principle, the constraint may bound the absolute difference in edge weights ($\ell_1$ norm). This is an entirely different problem, (more similar to \citet{Chitra2020}) but an interesting direction for future work. We believe that with an $\ell_1$ constraint, the planner would distribute its edge weight to maximize the minimal marginal return of polarization with respect to edge weight. We will also see that relaxing the $\ell_0$ constraint to $\ell_1$ is insufficient for obtaining a convex optimization problem.}}

\change{
Problem~\eqref{opt:pol_min} is challenging to solve efficiently since it is non-convex. Beyond the fact that the $\ell_0$ constraint gives a non-convex feasible set, the objective function (over valid Laplacian matrices) is also not convex -- see Figure~\ref{fig:obj_ncvx} for a small example. Therefore, relaxing the $\ell_0$ constraint to $\ell_1$ will still yield a non-convex optimization problem.} Instead of seeking an optimal set of $k$ edges to add, we propose a greedy stepwise approach where \change{the weight of $k$ edges are saturated} iteratively, one at a time. This simpler setting is tractable for analysis.

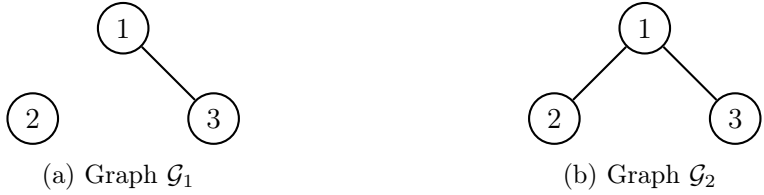
\begin{figure}
    \centering
    \begin{minipage}{0.6\linewidth}
        \begin{subfigure}{0.3\linewidth}
            \centering
            \begin{tikzpicture}[node distance=1cm]
              \node[circle, draw, thick] (1) {1};
              \node[circle, draw, thick] (2) [below left = of 1] {2};
              \node[circle, draw, thick] (3) [below right = of 1] {3};
        
              \path[thick]
                (1) edge (3);
            \end{tikzpicture}
            \caption{Graph $\cG_1$}
        \end{subfigure}
        \hfill
        \begin{subfigure}{0.3\linewidth}
            \centering
            \begin{tikzpicture}[node distance=1cm]
              \node[circle, draw, thick] (1) {1};
              \node[circle, draw, thick] (2) [below left = of 1] {2};
              \node[circle, draw, thick] (3) [below right = of 1] {3};
        
              \path[thick]
                (2) edge (1)
                (1) edge (3);
            \end{tikzpicture}
            \caption{Graph $\cG_2$}
        \end{subfigure}
    \end{minipage}
    \caption{\change{A simple example of the non-convex objective function. With innate opinions $\mathbf{s} = [0, 0.4, 1]$, it can be seen that $P\left( \frac{1}{2} \left[L_1 + L_2\right] \right) > \frac{1}{2} \left[ P(L_1) + P(L_2) \right]$. (Note the abuse of notation to illustrate $P(\cdot)$'s dependence on only the Laplacian.) In this particular example, the addition of any amount of weight to edge $(1,3)$ \textit{increases} polarization.}}
    \label{fig:obj_ncvx}
\end{figure}

It seems intuitive that adding \change{edge weight} to $\cG$ promotes the flow of information, and thereby reduces polarization. However, this is not the case in general. We will see that for most \change{non-saturated} edges, there exists a value of the innate opinions for which the addition of \change{weight to} that edge will increase polarization. The exact expression for the change in polarization when adding \change{edge weight} is given in the following.

\begin{lemma}
    \label{lem:p_diff_exact}
    Let $\cG(V,E)$ be an undirected graph yielding expressed opinions $\mathbf{z}$, \change{and $(i,j)$ be a pair of vertices with non-maximal weight, that is, $w_{ij} < \bar w$}. Let $\mathbf{v}_{ij} := \mathbf{e}_i - \mathbf{e}_j$. \change{For $\delta \in (0, \bar w - w_{ij}]$, we construct $\cG^+(V,E^+, W^+)$ according to $w_{ij}^+ = w_{ij}+\delta$, and $E^+ = \{(i,j): w_{ij}^+ > 0\}$.} If the expressed opinions on $\cG^+$ are given by $\mathbf{z}^+ := (I+L^+)^{-1}\mathbf{s}$, 
    then
    
    \change{
    \begin{equation}
        \label{eq:p_diff_exact}
        P(\mathbf{z}) - P(\mathbf{z}^+) 
        = D_{ij}(\mathbf{z}) \left[  \frac{2 \delta \widetilde{\mathbf{z}}^T(I+L)^{-1}\mathbf{v}_{ij}}{\widetilde{\mathbf{z}}^T\mathbf{v}_{ij}\left(1+\delta \mathbf{v}_{ij}^T(I+L)^{-1}\mathbf{v}_{ij}\right)}  - \frac{\delta^2 \mathbf{v}_{ij}^T(I+L)^{-2}\mathbf{v}_{ij}}{\left(1+\delta\mathbf{v}_{ij}^T(I+L)^{-1}\mathbf{v}_{ij}\right)^2}\right].
    \end{equation}
    }
\end{lemma}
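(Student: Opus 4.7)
The plan is to view the perturbation as a rank-one update to the Laplacian and apply the Sherman--Morrison formula. Since adding weight $\delta$ to edge $(i,j)$ increases $L_{ii}$ and $L_{jj}$ by $\delta$ and decreases $L_{ij}$, $L_{ji}$ by $\delta$, we have the clean identity $L^{+} = L + \delta\, \mathbf{v}_{ij}\mathbf{v}_{ij}^{T}$. Thus
\begin{equation}
    (I+L^{+})^{-1} = (I+L)^{-1} - \frac{\delta\,(I+L)^{-1}\mathbf{v}_{ij}\mathbf{v}_{ij}^{T}(I+L)^{-1}}{1+\delta\,\mathbf{v}_{ij}^{T}(I+L)^{-1}\mathbf{v}_{ij}},
\end{equation}
which, applied to $\mathbf{s}$, gives an explicit formula for $\mathbf{z}^{+}$ in terms of $\mathbf{z}$ and the single vector $\mathbf{u} := (I+L)^{-1}\mathbf{v}_{ij}$.

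Next I would reduce the difference of polarizations to a vector calculation. The key observation is that $\vec{\mathbf{1}}$ lies in the kernel of every graph Laplacian, so $\vec{\mathbf{1}}^{T}(I+L)^{-1} = \vec{\mathbf{1}}^{T} = \vec{\mathbf{1}}^{T}(I+L^{+})^{-1}$; consequently the mean of $\mathbf{z}^{+}$ equals the mean of $\mathbf{s}$, which equals the mean of $\mathbf{z}$. Therefore the mean-centered expressed opinions satisfy
\begin{equation}
    \widetilde{\mathbf{z}^{+}} = \widetilde{\mathbf{z}} - c\,\mathbf{u}, \qquad c := \frac{\delta\,(\mathbf{v}_{ij}^{T}\widetilde{\mathbf{z}})}{1+\delta\,\mathbf{v}_{ij}^{T}(I+L)^{-1}\mathbf{v}_{ij}},
\end{equation}
using the identity $\mathbf{v}_{ij}^{T}\mathbf{z} = z_{i}-z_{j} = \mathbf{v}_{ij}^{T}\widetilde{\mathbf{z}}$.

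Expanding $P(\mathbf{z}^{+}) = \|\widetilde{\mathbf{z}^{+}}\|^{2}$ gives $P(\mathbf{z}) - P(\mathbf{z}^{+}) = 2c\,\widetilde{\mathbf{z}}^{T}\mathbf{u} - c^{2}\|\mathbf{u}\|^{2}$. Substituting $\|\mathbf{u}\|^{2} = \mathbf{v}_{ij}^{T}(I+L)^{-2}\mathbf{v}_{ij}$, $\widetilde{\mathbf{z}}^{T}\mathbf{u} = \widetilde{\mathbf{z}}^{T}(I+L)^{-1}\mathbf{v}_{ij}$, and the definition of $c$, and then factoring out one power of $(\widetilde{\mathbf{z}}^{T}\mathbf{v}_{ij})^{2} = D_{ij}(\mathbf{z})$ from each summand (leaving one matching factor of $\widetilde{\mathbf{z}}^{T}\mathbf{v}_{ij}$ in the denominator of the first term), yields exactly the claimed expression \eqref{eq:p_diff_exact}.

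The proof is essentially linear-algebraic bookkeeping; the only substantive points to verify are the rank-one structure of the update, the applicability of Sherman--Morrison (which requires $1 + \delta\,\mathbf{v}_{ij}^{T}(I+L)^{-1}\mathbf{v}_{ij} \neq 0$, automatic because $(I+L)^{-1} \succ 0$ and $\delta > 0$), and invariance of the opinion mean under the FJ operator. I expect the main source of friction to be pure algebraic organization — in particular, carefully factoring the quadratic term so that $D_{ij}(\mathbf{z})$ cleanly appears as a prefactor without introducing sign errors when $z_{i}=z_{j}$, a degenerate case for which the formula should be interpreted by continuity (both sides vanish).
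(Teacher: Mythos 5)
Your proposal is correct and follows essentially the same route as the paper: both exploit the rank-one structure $L^{+}=L+\delta\,\mathbf{v}_{ij}\mathbf{v}_{ij}^{T}$, apply Sherman--Morrison (the paper to $T=(I+L)^{-1}(I+L^{+})$, you to $(I+L^{+})^{-1}$ directly, which yields the identical update $\widetilde{\mathbf{z}^{+}}=\widetilde{\mathbf{z}}-c\,(I+L)^{-1}\mathbf{v}_{ij}$), and expand the squared norm. Your explicit remark on mean preservation and on the degenerate case $z_i=z_j$ is a welcome clarification of details the paper leaves implicit, but the argument is the same.
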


The proof of this result can be found in Appendix~\ref{sec:proofs}. To discuss this result in more detail, it is useful to define the following.

\begin{definition}[$\partial_{w_{ij}} P(L)$]
    \label{def:dP_dw}
    Fix some innate opinions $\mathbf{s}$. Let $\mathbf{z}_{L}$ denote the resulting expressed opinions when the underlying graph $\cG$ has Laplacian $L$. We write:

    \begin{equation}
        \label{eq:pol_deriv_def}
        \partial_{w_{ij}} P(L) = \lim_{t\to0^+} \frac{P(\mathbf{z}_{L+tL_{ij}}) - P(\mathbf{z}_L)}{t}
    \end{equation}
    where $L_{ij} = \mathbf{v}_{ij}\mathbf{v}_{ij}^T$.
\end{definition}

This definition allows us to analyze the first-order effects of edge modifications on polarization. Notice that even if a graph were unweighted, we can define this derivative for its equivalent \emph{weighted} graph, where the weight of each existing edge equals one. In the following proposition, we derive a closed form expression for these partial derivatives.

\begin{prop}
    \label{prop:dP_dw}
    For fixed innate opinions $\mathbf{s}$, we have

    \begin{equation}
        \begin{split}
            \partial_{w_{ij}} P(L) &= -2\widetilde{\mathbf{s}}^T \left(I+L\right)^{-2}L_{ij}\left(I+L\right)^{-1}\widetilde{\mathbf{s}}\\
            &= -2\widetilde{\mathbf{z}}^T \left(I+L\right)^{-1}L_{ij}\widetilde{\mathbf{z}}.
        \end{split}
    \end{equation}
\end{prop}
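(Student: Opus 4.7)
The plan is to derive a closed-form expression for $P(L)$ in terms of $\widetilde{\mathbf{s}}$, and then differentiate directly using standard matrix calculus. Alternatively, one could take the $\delta\to 0^+$ limit of the identity in Lemma~\ref{lem:p_diff_exact}, but the direct approach is cleaner because it avoids handling the ratio $D_{ij}(\mathbf{z})/(\widetilde{\mathbf{z}}^T\mathbf{v}_{ij})$ when $\widetilde{\mathbf{z}}^T\mathbf{v}_{ij}=0$.

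The first step is to note that $L\vec{\mathbf{1}}=0$, so applying $\vec{\mathbf{1}}^T$ to $(I+L)\mathbf{z}=\mathbf{s}$ gives $\vec{\mathbf{1}}^T\mathbf{z}=\vec{\mathbf{1}}^T\mathbf{s}$, i.e.\ $\overline{z}=\overline{s}$. Subtracting $\overline{s}\vec{\mathbf{1}}$ from both sides of $(I+L)\mathbf{z}=\mathbf{s}$ yields $(I+L)\widetilde{\mathbf{z}}=\widetilde{\mathbf{s}}$, so $\widetilde{\mathbf{z}}=(I+L)^{-1}\widetilde{\mathbf{s}}$. Therefore
\begin{equation}
    P(L) = \widetilde{\mathbf{z}}^T\widetilde{\mathbf{z}} = \widetilde{\mathbf{s}}^T(I+L)^{-2}\widetilde{\mathbf{s}}.
\end{equation}

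The second step is to differentiate this quadratic form along the perturbation $L\mapsto L+tL_{ij}$ at $t=0$. Using $\frac{d}{dt}(A+tB)^{-1}=-(A+tB)^{-1}B(A+tB)^{-1}$ with $A=I+L$ and $B=L_{ij}$, and applying the product rule to $(I+L+tL_{ij})^{-2}$, I obtain
\begin{equation}
    \frac{d}{dt}\bigg|_{t=0}(I+L+tL_{ij})^{-2} = -(I+L)^{-1}L_{ij}(I+L)^{-2} - (I+L)^{-2}L_{ij}(I+L)^{-1}.
\end{equation}
Sandwiching with $\widetilde{\mathbf{s}}^T$ and $\widetilde{\mathbf{s}}$, the two resulting scalar quadratic forms are transposes of each other (since $L_{ij}$ and $(I+L)^{-k}$ are symmetric) and hence equal, giving
\begin{equation}
    \partial_{w_{ij}} P(L) = -2\widetilde{\mathbf{s}}^T(I+L)^{-2}L_{ij}(I+L)^{-1}\widetilde{\mathbf{s}}.
\end{equation}

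Finally, substituting $(I+L)^{-1}\widetilde{\mathbf{s}}=\widetilde{\mathbf{z}}$ on either factor of $(I+L)^{-2}=(I+L)^{-1}(I+L)^{-1}$ produces the second claimed form $-2\widetilde{\mathbf{z}}^T(I+L)^{-1}L_{ij}\widetilde{\mathbf{z}}$. There is no real obstacle here; the only subtlety worth flagging is the mean-preservation argument in Step 1, since without it one might mistakenly differentiate $\mathbf{s}^T(I+L)^{-2}\mathbf{s}$ and be left with spurious boundary terms from $\overline{z}\vec{\mathbf{1}}$.
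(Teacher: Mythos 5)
Your proof is correct and follows essentially the same route as the paper: both differentiate the closed form $P(L)=\widetilde{\mathbf{s}}^T(I+L)^{-2}\widetilde{\mathbf{s}}$ along $L\mapsto L+tL_{ij}$ at $t=0$, the only cosmetic difference being that you invoke the resolvent derivative identity and the product rule where the paper expands via Sherman--Morrison (exploiting $L_{ij}=\mathbf{v}_{ij}\mathbf{v}_{ij}^T$ being rank one) and discards the $o(t)$ terms. Your explicit justification that $\widetilde{\mathbf{z}}=(I+L)^{-1}\widetilde{\mathbf{s}}$ via mean preservation is a welcome detail the paper defers to the proof of Proposition~\ref{prop:p_bounds}.
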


This result allows us to re-write~\eqref{eq:p_diff_exact} in Lemma~\ref{lem:p_diff_exact} as:

\change{
\begin{equation}
    \label{eq:p_diff_exact_dP_dw}
    P(\mathbf{z}) - P(\mathbf{z}^+) = \frac{-\delta \partial_{w_{ij}} P(L)}{1+\delta \mathbf{v}_{ij}^T(I+L)^{-1}\mathbf{v}_{ij}} - \frac{\delta^2 \mathbf{v}_{ij}^T(I+L)^{-2}\mathbf{v}_{ij}}{\left(1+\delta \mathbf{v}_{ij}^T(I+L)^{-1}\mathbf{v}_{ij}\right)^2}(z_i - z_j)^2.
\end{equation}
}
Therefore, the necessary and sufficient condition for a reduction in polarization \change{due to adding weight $\delta$ to edge $(i,j)$} is:

\change{
\begin{equation}
    \label{eq:pol_grad_suff}
    -\partial_{w_{ij}} P(L) > (z_i - z_j)^2 \frac{\mathbf{v}_{ij}^T(I+L)^{-2}\mathbf{v}_{ij}}{\delta^{-1}+\mathbf{v}_{ij}^T(I+L)^{-1}\mathbf{v}_{ij}},
\end{equation}
}
which amounts to a steep enough first derivative.

Lemma~\ref{lem:p_diff_exact} also allows us to study when polarization \emph{increases} after adding weight to edge~$(i,j)$. In particular, if $\frac{\widetilde{\mathbf{z}}^T(I+L)^{-1}\mathbf{v}_{ij}}{\widetilde{\mathbf{z}}^T\mathbf{v}_{ij}} = 0$, then $P(\mathbf{z}^+) \ge P(\mathbf{z})$. Notice that if $\mathbf{v}_{ij}$ is not an eigenvector of $L$, then the addition of $(i,j)$ can increase polarization when the mean-centered innate opinions $\widetilde{\mathbf{s}}$ lie on the $(n-1)$-dimensional subspace orthogonal to $(I+L)^{-2}\mathbf{v}_{ij}$. \change{This condition is sufficient -- but not necessary -- the example in Figure~\ref{fig:obj_ncvx} illustrates this point.} Therefore, the planner cannot add edge weight arbitrarily and expect polarization to be reduced -- the innate opinions can determine the sign of the effect.

However, there are special cases in which polarization is always reduced, such as the following.

\begin{corollary}
	\label{cor:p_diff_special}
    If $\cG$, $i$, and $j$ satisfy $N(i) = N(j)$, 
    then polarization is always reduced by adding \change{weight $\delta$ to} the edge~$(i,j)$, and the difference is 
 
	\change{
    \begin{equation}
	   P(\mathbf{z}) - P(\mathbf{z}^+) = (z_i - z_j)^2 \frac{2\delta(1+\delta+d_i - w_{ij})}{(1+2\delta+d_i - w_{ij})^2}. 
	\end{equation}
    }
\end{corollary}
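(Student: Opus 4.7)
The plan is to specialize Lemma~\ref{lem:p_diff_exact} to this symmetric setting by showing that $\mathbf{v}_{ij} := \mathbf{e}_i - \mathbf{e}_j$ is an eigenvector of $L$, which collapses the bracketed expression in~\eqref{eq:p_diff_exact} into a clean closed form. First, I would observe that the hypothesis $N(i) = N(j)$ forces $w_{ij} = 0$: if $(i,j) \in E$ then $j \in N(i) = N(j)$, requiring $j$ to be its own neighbor, which is ruled out by the no-self-loops convention. The same symmetry gives $d_i = d_j$, since the two vertices have identical incident edge weights.

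Next, I would compute $L\mathbf{v}_{ij}$ coordinate-by-coordinate. For $k \notin \{i,j\}$, the $k$-th entry of $W\mathbf{v}_{ij}$ equals $w_{ki} - w_{kj}$, which vanishes by $N(i) = N(j)$ with matching weights on shared neighbors. The $i$-th and $j$-th entries of $W\mathbf{v}_{ij}$ vanish as well, since $w_{ij} = 0$ and there are no self-loops. Hence $W\mathbf{v}_{ij} = 0$, and $D\mathbf{v}_{ij} = d_i\mathbf{e}_i - d_j\mathbf{e}_j = d_i\mathbf{v}_{ij}$, so $L\mathbf{v}_{ij} = d_i\mathbf{v}_{ij}$ with eigenvalue $\lambda = d_i$.

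Finally, I would substitute $(I+L)^{-p}\mathbf{v}_{ij} = (1+d_i)^{-p}\mathbf{v}_{ij}$ for $p = 1, 2$, together with $\mathbf{v}_{ij}^T\mathbf{v}_{ij} = 2$, into~\eqref{eq:p_diff_exact}. The ratio $\widetilde{\mathbf{z}}^T(I+L)^{-1}\mathbf{v}_{ij} / \widetilde{\mathbf{z}}^T\mathbf{v}_{ij}$ collapses to $1/(1+d_i)$, and combining fractions reduces the bracket to
$$\frac{2\delta}{1+d_i+2\delta} - \frac{2\delta^2}{(1+d_i+2\delta)^2} = \frac{2\delta(1+d_i+\delta)}{(1+d_i+2\delta)^2}.$$
Multiplying by $D_{ij}(\mathbf{z}) = (z_i - z_j)^2$ and using $w_{ij} = 0$ to rewrite $d_i = d_i - w_{ij}$ recovers the stated formula; it is manifestly nonnegative for $\delta > 0$ and strictly positive whenever $z_i \neq z_j$. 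I do not anticipate any genuine obstacle here: the essential insight is the eigenvector property of $\mathbf{v}_{ij}$ under the symmetry hypothesis, after which only routine algebraic simplification remains.
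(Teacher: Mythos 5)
Your proof is correct and follows the paper's own route exactly: show that the neighborhood hypothesis makes $\mathbf{v}_{ij}$ an eigenvector of $L$, then substitute the resulting scalar identities into Lemma~\ref{lem:p_diff_exact} and simplify. The only divergence is your observation that the literal reading of $N(i)=N(j)$ forces $w_{ij}=0$, so your eigenvalue $d_i$ coincides with the paper's stated $d_i - w_{ij}$; this is in fact the safer reading, since under the lenient interpretation $N(i)\setminus\{j\}=N(j)\setminus\{i\}$ with a pre-existing edge of weight $w_{ij}>0$ one computes $L\mathbf{v}_{ij}=(d_i+w_{ij})\mathbf{v}_{ij}$, so the displayed formula with $d_i - w_{ij}$ is exactly right only when $w_{ij}=0$.
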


This result follows from proving that \change{$L\mathbf{v}_{ij} = (d_i - w_{ij}) \mathbf{v}_{ij}$} under the assumptions; see Appendix~\ref{sec:proofs} for full details.

Corollary~\ref{cor:p_diff_special} is somewhat counter-intuitive -- if we strengthen connections between individuals who share the same set of neighbors, we may expect to form an `echo chamber'. However, the opinion dynamics show that the addition of \change{weight to} such an edge $(i,j)$ will \textit{only} affect the expressed opinions of vertices $i$ and~$j$. While this edge fails to have any global effect, it does indeed bring the opinions of its incident vertices closer together -- hence reducing polarization. \change{The limitation of these effects to only its incident vertices suggests that in practice, the return on polarization may be small.}

Lemma~\ref{lem:p_diff_exact} is also used for arriving at the main result of this Section.

\begin{theorem}
    \label{thm:p_diff_bounds}
    Let $\mathbf{z},$ $\mathbf{z}^+,$ \change{$\delta,$} and $\mathbf{v}_{ij}$ be as before. 
    Then, 
    \change{
    $$P(\mathbf{z}) - P(\mathbf{z}^+) \le \frac{1+\lambda_n(L)}{1+2\delta+\lambda_n(L)} \left( - \delta \partial_{w_{ij}} P(L) \right).$$
    }
    
    Furthermore, if there exists $\epsilon > 0$ for which
    \change{
    $$\frac{\widetilde{\mathbf{z}}^T(I+L)^{-1}\mathbf{v}_{ij}}{\widetilde{\mathbf{z}}^T\mathbf{v}_{ij}} \ge \epsilon + \frac{\delta}{2\delta+(1+\lambda_{2}(L))^2},$$
    }
    then we also have
    \change{
    $$P(\mathbf{z}) - P(\mathbf{z}^+)  \ge \frac{2\delta \epsilon (z_i - z_j)^2}{1+2\delta}.$$
    }
\end{theorem}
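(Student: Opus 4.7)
The starting point is the exact identity in Lemma~\ref{lem:p_diff_exact}, equivalently the form in equation~\eqref{eq:p_diff_exact_dP_dw}. My plan is to apply spectral bounds on the quadratic forms $\mathbf{v}_{ij}^T(I+L)^{-k}\mathbf{v}_{ij}$ for $k=1,2$, exploiting the fact that $\mathbf{v}_{ij} = \mathbf{e}_i - \mathbf{e}_j$ is orthogonal to $\vec{\mathbf{1}}$, which spans the null space of $L$. Abbreviating $\beta := \mathbf{v}_{ij}^T(I+L)^{-1}\mathbf{v}_{ij}$ and $\gamma := \mathbf{v}_{ij}^T(I+L)^{-2}\mathbf{v}_{ij}$, expansion in the eigenbasis of $L$ yields $\frac{2}{1+\lambda_n(L)} \le \beta \le \frac{2}{1+\lambda_2(L)} \le 2$ together with $\gamma \le \frac{\beta}{1+\lambda_2(L)} \le \frac{2}{(1+\lambda_2(L))^2}$. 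These are the essential spectral ingredients.

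For the upper bound, I would discard the second term of~\eqref{eq:p_diff_exact_dP_dw}, which is non-positive; this leaves $P(\mathbf{z}) - P(\mathbf{z}^+) \le \frac{-\delta \partial_{w_{ij}} P(L)}{1+\delta\beta}$ in the interesting case $-\partial_{w_{ij}}P(L) \ge 0$. Substituting the lower bound $\beta \ge 2/(1+\lambda_n(L))$ gives the matching upper bound $\frac{1}{1+\delta\beta} \le \frac{1+\lambda_n(L)}{1+\lambda_n(L)+2\delta}$ and produces precisely the stated inequality. The edge case $-\partial_{w_{ij}} P(L) < 0$ can be handled by observing that the same spectral lower bound on $\beta$ makes the bracket $\frac{1}{1+\delta\beta} - \frac{1+\lambda_n(L)}{1+\lambda_n(L)+2\delta}$ non-positive, which together with the non-positivity of the discarded quadratic correction closes the argument.

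For the lower bound, let $\alpha := \widetilde{\mathbf{z}}^T(I+L)^{-1}\mathbf{v}_{ij}/(\widetilde{\mathbf{z}}^T \mathbf{v}_{ij})$, so that Lemma~\ref{lem:p_diff_exact} reads $P(\mathbf{z}) - P(\mathbf{z}^+) = (z_i - z_j)^2 \left[\frac{2\delta\alpha}{1+\delta\beta} - \frac{\delta^2\gamma}{(1+\delta\beta)^2}\right]$. I would cross-multiply the desired inequality by $(1+\delta\beta)^2(1+2\delta) > 0$ and invoke $(1+\delta\beta)^2 \le (1+\delta\beta)(1+2\delta)$ (which follows from $\beta \le 2$) to reduce the goal to the cleaner sufficient condition $\alpha - \epsilon \ge \frac{\delta\gamma}{2(1+\delta\beta)}$. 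To bound this right-hand side by the hypothesized slack on $\alpha$, I would combine $\gamma \le \beta/(1+\lambda_2(L))$ with the monotonicity of the map $\beta \mapsto \beta/(1+\delta\beta)$ (which is maximized at $\beta = 2/(1+\lambda_2(L))$) to obtain $\frac{\delta\gamma}{2(1+\delta\beta)} \le \frac{\delta}{(1+\lambda_2(L))(1+\lambda_2(L)+2\delta)}$. The final step $\frac{\delta}{(1+\lambda_2(L))(1+\lambda_2(L)+2\delta)} \le \frac{\delta}{2\delta + (1+\lambda_2(L))^2}$ is equivalent to $\lambda_2(L) \ge 0$ and closes the chain.

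The main obstacle I expect is in the lower bound: producing the exact form $\delta / (2\delta + (1+\lambda_2(L))^2)$ appearing in the hypothesis requires threading multiple spectral estimates and the monotonicity step in precisely the right order, with a small amount of controlled slackness built in for a clean closed form. The upper bound, by contrast, is essentially a one-line consequence of dropping the quadratic correction and applying a single spectral estimate on $\beta$.
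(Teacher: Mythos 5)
Your main line of argument is the paper's own, for both halves. The upper bound is obtained exactly as in the paper: drop the non-negative quadratic correction in Lemma~\ref{lem:p_diff_exact} and apply the eigenvalue bound $\mathbf{v}_{ij}^T(I+L)^{-1}\mathbf{v}_{ij} \ge 2/(1+\lambda_n(L))$. The lower bound likewise reduces, as in the paper, to showing $\frac{\delta\,\mathbf{v}_{ij}^T(I+L)^{-2}\mathbf{v}_{ij}}{1+\delta\,\mathbf{v}_{ij}^T(I+L)^{-1}\mathbf{v}_{ij}} \le \frac{2\delta}{2\delta+(1+\lambda_2(L))^2}$ together with $\mathbf{v}_{ij}^T(I+L)^{-1}\mathbf{v}_{ij}\le 2$; your chain ($\gamma\le\beta/(1+\lambda_2)$, then monotonicity of $\beta\mapsto\beta/(1+\delta\beta)$) differs cosmetically from the paper's ($\gamma\le\beta$, then monotonicity of $\gamma\mapsto\delta\gamma/(1+\delta\gamma)$ with $\gamma\le 2/(1+\lambda_2)^2$), and in fact lands on a marginally tighter intermediate constant, but both use the same spectral ingredients and the same hypothesis in the same way.

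The one place you go beyond the paper is the edge case $-\partial_{w_{ij}}P(L)<0$ in the upper bound, and there your patch does not work. Writing $X=-\delta\,\partial_{w_{ij}}P(L)$ and $c=\frac{1+\lambda_n}{1+2\delta+\lambda_n}$, what must be shown is $X\bigl(\frac{1}{1+\delta\beta}-c\bigr)\le Q$ with $Q\ge 0$ the discarded correction; when $X<0$ and the bracket is non-positive, their product is \emph{non-negative}, so the two sign observations you invoke point the wrong way and you are left needing a genuine quantitative comparison. That comparison can in fact fail: taking $\widetilde{\mathbf{z}}$ with $z_i-z_j\to 0$ while $\widetilde{\mathbf{z}}^T(I+L)^{-1}\mathbf{v}_{ij}$ stays bounded away from zero (possible whenever $\mathbf{v}_{ij}$ is not an eigenvector of $L$) makes the left side of order $|z_i-z_j|$ and $Q$ of order $(z_i-z_j)^2$. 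So the stated upper bound genuinely requires $-\partial_{w_{ij}}P(L)\ge 0$; the paper's proof makes the same tacit restriction without acknowledging it, so this is a defect of the theorem statement rather than a shortfall of your argument relative to the paper's. If you keep the case split, replace the patch with an explicit hypothesis that the first-order effect is a decrease.
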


Theorem~\ref{thm:p_diff_bounds} directly motivates two heuristics for the planner. First, we see that the largest possible reduction in polarization is proportional to the first order effect \change{$-\delta \partial_{w_{ij}} P(L)$}. Therefore, it is natural for the planner to iteratively add \change{maximal edge weight} along the direction of steepest descent -- a heuristic well-known as a \emph{coordinate descent}. Additionally, for fixed~$\epsilon$, the lower bound grows with the expressed disagreement. Therefore, edges with large $(z_i-z_j)^2$ are also good candidates for the planner to \change{add weight to}; we name this strategy \emph{disagreement-seeking}.

The upper bound in Theorem~\ref{thm:p_diff_bounds} implies that there is a diminishing return in adding more weight to a single edge, as \change{$\frac{{1+\lambda_n(L)}}{{1+2\delta+\lambda_n(L)}} < 1$.} In particular, this shows that although $P(\mathbf{z})$ is not globally convex, it is indeed convex along the direction of $w_{ij}$.

\subsection{Adversarial Opinions}
\label{sec:adv_ops}

In some cases, the planner may not reliably use the innate or expressed opinions. For instance, they may be difficult (even impossible) to measure, or vertices may be susceptible to takeovers; see~\citet{Gionis2013, matakos2017measuring, matakos2020tell, Gaitonde2020, rahaman2021model, Chen2022} for examples of the latter. \change{Moreover, individuals' opinions may be multidimensional -- capturing many distinct issues (e.g., firearm regulation, universal basic income, healthcare, etc.), all of which are shaped by the network's structure.} Such cases may require the planner to take a robust approach: they seek to design a network structure that minimizes polarization for \emph{any} possible vector of innate opinions.\footnote{There is one other possible justification for this formulation -- a robust (or minimax) optimization problem arises when the decision-maker is ambiguity averse, as is shown axiomatically by~\citet{Gilboa1989}.} \change{Formally, they aim to solve the following:}

\change{
\begin{equation}
	\label{opt:robust_pol_min}
	\begin{split}
	\min_{\cG'} \max_{\mathbf{s} \in \mathbb{R}^n: \norm{\mathbf{s}}_2^2 \le R} \ & \mathbf{\widetilde{s}}^T\left( I+L'\right)^{-2}\mathbf{\widetilde{s}} \\
	\mathrm{s.t.} \ & ||W-W'||_0 \le 2k.
	\end{split}
\end{equation}
Polarization in the resulting graph $\cG'$ will be robust to the choice of innate opinions}, and this optimization problem yields different graph structures than problem~\eqref{opt:pol_min}. As before, the factor of two in the constraint captures all graphs being undirected.

\change{
This optimization problem can be interpreted as a game -- an adversary selects $\mathbf{s}$ from the $n$-dimensional sphere of radius $R$, and the planner evaluates polarization on this choice of $\mathbf{s}$. A similar problem appears in \citet{Gaitonde2020}, who studies a `network defender' that decreases vertices' susceptibility to the adversary. In contrast, we consider defending the network through modification of its structure. However, we note that both our defender and theirs face the same adversary. This choice allows us to directly compare the effectiveness of these two defensive strategies. Although such an adversary may not be realistic, we believe this setting has numerous other justifications.
}

Note that the innate opinions now lie in the $n$-dimensional sphere, as opposed to the hypercube. This formulation allows us to relate the adversary's problem to spectral properties of the resultant graph~$\cG'$. In fact, the planner's problem~\eqref{opt:robust_pol_min} is equivalent to maximization of $\lambda_2$, the spectral gap of the Laplacian.

\begin{prop}
    \label{prop:robust_pol_min}
    The optimal solution $\cG'$ to~\eqref{opt:robust_pol_min} is the same as that of
	
    \change{
    \begin{equation}
	\label{opt:spectral_gap_max}
		\begin{split}
		\max_{\cG'} \ & \lambda_{2}(L') \\
		\mathrm{s.t.} \ & ||W-W'||_0 \le 2k,
		\end{split}
	\end{equation}
    }
	If the optimal solution to~\eqref{opt:spectral_gap_max} is $L^*$, 
    then the optimal value of~\eqref{opt:robust_pol_min} is $\frac{R}{\left(1+\lambda_2(L^*)\right)^2}$.
\end{prop}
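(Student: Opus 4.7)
The plan is to compute the inner maximization in closed form via spectral decomposition, then observe that the outer minimization reduces to pure spectral gap maximization. Fix any candidate graph $\cG'$ with Laplacian $L'$ and pick an orthonormal eigenbasis $\mathbf{u}_1 = \vec{\mathbf{1}}/\sqrt{n}, \mathbf{u}_2, \ldots, \mathbf{u}_n$ of $L'$ with eigenvalues $0 = \lambda_1 \le \lambda_2(L') \le \cdots \le \lambda_n(L')$; the first eigenvector is always $\vec{\mathbf{1}}$ since $L'$ is a Laplacian. Because $\widetilde{\mathbf{s}}$ is mean-centered, it is orthogonal to $\vec{\mathbf{1}}$ and hence lies in the span of $\mathbf{u}_2, \ldots, \mathbf{u}_n$, on which $(I+L')^{-2}$ has largest eigenvalue $(1+\lambda_2(L'))^{-2}$. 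The Rayleigh bound therefore yields
\begin{equation*}
    \widetilde{\mathbf{s}}^T (I+L')^{-2} \widetilde{\mathbf{s}} \le \frac{\norm{\widetilde{\mathbf{s}}}_2^2}{(1+\lambda_2(L'))^2}.
\end{equation*}

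Next I would translate the constraint $\norm{\mathbf{s}}_2^2 \le R$ into a constraint on $\widetilde{\mathbf{s}}$. Decomposing $\mathbf{s}$ into its mean and mean-centered components gives $\norm{\mathbf{s}}_2^2 = \norm{\widetilde{\mathbf{s}}}_2^2 + n \bar{s}^2 \ge \norm{\widetilde{\mathbf{s}}}_2^2$, so $\norm{\widetilde{\mathbf{s}}}_2^2 \le R$. Both this inequality and the Rayleigh bound are saturated simultaneously by the choice $\mathbf{s} = \sqrt{R}\,\mathbf{u}_2$: this vector has norm $\sqrt{R}$ and mean zero (since $\mathbf{u}_2 \perp \mathbf{u}_1 \propto \vec{\mathbf{1}}$), so $\widetilde{\mathbf{s}} = \mathbf{s}$, yielding the inner optimal value $R/(1+\lambda_2(L'))^2$.

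Substituting into the outer minimization, the robust problem becomes $\min_{\cG'} R/(1+\lambda_2(L'))^2$ under the edge-budget constraint. Since $R > 0$ is fixed and $\lambda \mapsto R/(1+\lambda)^2$ is strictly decreasing on $[0,\infty)$, this is equivalent to maximizing $\lambda_2(L')$ under the same constraint, which is exactly problem~\eqref{opt:spectral_gap_max}; the two optimal Laplacians therefore coincide, and plugging $L^*$ back in yields the stated optimal value $R/(1+\lambda_2(L^*))^2$. No genuine obstacle arises; the only subtlety is to verify that the argument survives when $\lambda_2(L') = 0$ (e.g., $\cG'$ disconnected), but the derivation goes through unchanged with any unit eigenvector for the second-smallest eigenvalue playing the role of $\mathbf{u}_2$.
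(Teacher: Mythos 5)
Your proof is correct and follows essentially the same route as the paper: solve the adversary's inner problem by noting $\widetilde{\mathbf{s}}\perp\vec{\mathbf{1}}$, obtain the value $R/(1+\lambda_2(L'))^2$ attained at $\sqrt{R}\,\mathbf{u}_2$, and observe that the outer minimization of a decreasing function of $\lambda_2$ is spectral-gap maximization. Your write-up is in fact slightly more careful than the paper's, since you explicitly justify passing from the constraint $\norm{\mathbf{s}}_2^2\le R$ to $\norm{\widetilde{\mathbf{s}}}_2^2\le R$ and verify that the bound is attained.
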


For two graphs $\cG$ and $\cG'$, if \change{$W \le W'$ elementwise}, then $L \preccurlyeq L'$, and therefore $\lambda_2(L) \le \lambda_2(L')$. Therefore, the planner must only \change{add edge weight to $\cG$, as reducing weights} cannot increase the spectral gap.\footnote{We remark that this monotonicity of the spectral gap in the edge set does not hold for the normalized Laplacian~$\cL$, see for instance~\citet{Eldan2017}.} The spectral gap of the Laplacian is intimately tied to the synchronizability of various types of dynamical systems and the mixing time of Markov chains \citep{Donetti2006} and hence several studies seek to maximize it~\citep{Watanabe2010,Wang2008}. In this adversarial setting, where perfect synchronization is impossible, the spectral gap controls the best achievable consensus.

The proof of Proposition~\ref{prop:robust_pol_min} follows from solving the inner maximization problem, for which the optimal solution is the eigenvector of $L'$ corresponding to the second-smallest eigenvalue. This eigenvector is called the \emph{Fiedler} vector of $\cG'$, and describes a partition of vertices that approximates the normalized sparsest cut of~$\cG$~\citep{chung1997spectral}.

For a graph with Laplacian $L$, Proposition~\ref{prop:robust_pol_min} indicates that the \emph{worst-case} polarization is equal to $P(L) = \frac{R}{(1+\lambda_2(L))^2}$. The adversary achieves this by choosing $\widetilde{\mathbf{s}}$ along the span of the Fiedler vector. The planner's effectiveness in problem~\eqref{opt:robust_pol_min} is controlled by $P(L) - P(L')$, the difference in \emph{worst-case} polarization.

As in the previous setting, we approach this problem by iteratively choosing \change{edges to saturate -- starting from} the initial graph until no further budget remains. Therefore, the principal results address how \change{increasing an edge's weight} affects the spectral gap and thereby polarization. This is quantified in Theorem~\ref{thm:spectral_gap_bds}, which relates changes in the spectral gap to elementwise differences in the Fiedler vector.

\begin{theorem}
	\label{thm:spectral_gap_bds}
    \change{
    Let $\cG$ be an undirected graph, and $(i,j)$ be an edge with non-maximal weight, that is, $w_{ij} < \bar w$. Let also $\mathbf{v}$ be the Fiedler vector of $\cG$ of unit magnitude with corresponding eigenvalue $\lambda_{2}(L)$. Recall that $\lambda_{3}(L)$ is the third smallest eigenvalue of $L$, and define $\beta = \lambda_{3}(L) - \lambda_{2}(L)$. 
 
    For some $\delta \in (0,\bar w - w_{ij}]$, let $\cG^+$ be constructed by adding weight $\delta$ to edge $(i,j)$. If $\alpha = |v_i - v_j|$, 
    then we have that
    
	\begin{equation}
	    \label{eq:spectral_gap_bds}
	    \max \left\{1 - \frac{2\delta }{\beta}, 0 \right\}\delta \alpha^2 \le \lambda_{2}(L^+) - \lambda_{2}(L) \le \delta \alpha^2.
	\end{equation}
	}
\end{theorem}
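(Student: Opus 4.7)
The plan is to view $L^+ = L + \delta \mathbf{v}_{ij}\mathbf{v}_{ij}^T$ as a rank-one positive semidefinite perturbation of $L$, and then handle the two inequalities separately. The upper bound is the easy direction: by the Courant--Fischer characterization, $\lambda_{2}(L^+) = \min_{\mathbf{x} \perp \vec{\mathbf{1}},\,\|\mathbf{x}\|=1} \mathbf{x}^T L^+ \mathbf{x}$, and plugging in $\mathbf{x} = \mathbf{v}$ (the unit Fiedler vector of $L$, which lies in the feasible set) gives $\lambda_{2}(L^+) \le \mathbf{v}^T L \mathbf{v} + \delta (v_i - v_j)^2 = \lambda_{2}(L) + \delta\alpha^2$.

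For the lower bound, I would use the secular equation for a symmetric rank-one update. Diagonalize $L = \sum_k \lambda_k \mathbf{v}_k \mathbf{v}_k^T$ and set $c_k := \mathbf{v}_k^T \mathbf{v}_{ij}$, so that $c_1 = 0$ (because $\mathbf{v}_{ij} \perp \vec{\mathbf{1}}$), $c_2^2 = \alpha^2$, and $\sum_k c_k^2 = \|\mathbf{v}_{ij}\|^2 = 2$. Cauchy interlacing for a PSD rank-one perturbation yields $\lambda_2(L^+) \in [\lambda_2(L), \lambda_3(L)]$, so write $\lambda_2(L^+) = \lambda_2(L) + t$ with $0 \le t \le \beta$. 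Applying the matrix determinant lemma to $\det(L^+ - \mu I) = 0$ at $\mu = \lambda_2(L)+t$ yields the scalar identity
\begin{equation*}
    \frac{\delta\alpha^2}{t} = 1 + \delta \sum_{k \ge 3} \frac{c_k^2}{\lambda_k - \lambda_2(L) - t}.
\end{equation*}
Bounding each denominator on the right by $\beta - t$, using $\sum_{k \ge 3} c_k^2 \le 2$, and combining with the already-proved upper bound $t \le \delta\alpha^2 \le 2\delta$ (the last step since $\alpha^2 \le 2\|\mathbf{v}\|^2 = 2$) gives $\frac{\delta\alpha^2}{t} \le 1 + \frac{2\delta}{\beta - 2\delta}$ in the regime $2\delta < \beta$. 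Rearranging recovers exactly $t \ge \delta\alpha^2(1 - 2\delta/\beta)$. In the complementary regime $2\delta \ge \beta$, the claimed lower bound is zero and follows at once from the PSD monotonicity $\lambda_2(L^+) \ge \lambda_2(L)$.

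The main obstacle is working with the secular equation when $\lambda_2$ or higher eigenvalues are degenerate. When $\beta = 0$ the lower bound collapses to zero and there is nothing to prove, so I would restrict the nontrivial analysis to $\beta > 0$, where $\lambda_2(L)$ is simple and $\mathbf{v}$ is determined up to sign (so $\alpha$ is well-defined). Any multiplicities among the $\lambda_k$ for $k \ge 3$ enter only through the inequality $\lambda_k - \lambda_2(L) \ge \beta$ and cause no trouble. A smaller care point is ensuring that $t < \beta$ so the secular-equation manipulation is well-posed, which follows by combining $t \le 2\delta$ with the standing assumption $2\delta < \beta$ in the nontrivial case.
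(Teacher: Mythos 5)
Your proof is correct, but it reaches the bounds by a genuinely different route than the paper. The paper does not derive the perturbation bound from first principles: it quotes the interlacing-type theorem of Maas (1987) for simple graphs, asserts that the proof adapts to adding weight $\delta$ to an edge, and then optimizes the free parameter $\epsilon$ in Maas's lower bound (choosing $\epsilon^*$ so the two terms of the min coincide) to extract $\left(1-\tfrac{2\delta}{\beta}\right)\delta\alpha^2$. You instead work directly with the rank-one update $L^+ = L + \delta\mathbf{v}_{ij}\mathbf{v}_{ij}^T$: Courant--Fischer with test vector $\mathbf{v}$ for the upper bound, and the secular equation from the matrix determinant lemma for the lower bound, using $c_1=0$, $c_2^2=\alpha^2$, $\sum_k c_k^2 = \norm{\mathbf{v}_{ij}}^2 = 2$, and the separation $\lambda_k - \lambda_2(L) \ge \beta$ for $k\ge 3$. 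What your approach buys is self-containedness (no appeal to an external theorem or a free parameter to tune) and transparency about the constants; indeed, keeping $\sum_{k\ge3}c_k^2 = 2-\alpha^2$ exactly rather than bounding it by $2$ would recover Maas's slightly sharper intermediate bound, which is the form the paper passes through before loosening it to the stated one. One point you should make explicit: the identity $\tfrac{\delta\alpha^2}{t} = 1 + \delta\sum_{k\ge3}\tfrac{c_k^2}{\lambda_k-\lambda_2(L)-t}$ presupposes $t>0$, i.e.\ that $\lambda_2(L^+)$ is a genuine root of the secular function rather than a persisting eigenvalue of $L$ --- and this matters, since your claimed lower bound is strictly positive in the nontrivial regime. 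It does hold: when $\alpha>0$ and $\beta>0$ the eigenspace of $\lambda_2(L)$ within $\vec{\mathbf{1}}^{\perp}$ is one-dimensional and not orthogonal to $\mathbf{v}_{ij}$, so $\lambda_2(L)$ cannot survive as the second eigenvalue of $L^+$, and the secular function is strictly positive on $(-\infty,\lambda_2(L))$, forcing the relevant root strictly above $\lambda_2(L)$. You flag the analogous care point $t<\beta$ but not this one; adding that sentence closes the only gap.
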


The proof follows from adapting the result of~\citet{Maas1987}. The bounds are tightest when $\beta$ is largest, equivalently when $\lambda_2(L)$ is the sole small eigenvalue of $L$.

This result motivates a simple heuristic for maximizing the spectral gap, which appears in~\citet{Wang2008}. The planner can iteratively compute the Fiedler vector and \change{add weight to non-saturated edges whose incident vertices have large absolute difference in $\mathbf{v}$.}

Corollary~\ref{cor:spectral_gap_red_bds} quantifies the effects on polarization induced by the perturbation in Theorem~\ref{thm:spectral_gap_bds}.

\begin{corollary}
    \label{cor:spectral_gap_red_bds}
    \change{
    Let $P(L) = \frac{R}{\left(1+\lambda_2(L)\right)^2}$ be the worst-case polarization on a graph with Laplacian~$L$. 
    In the setting of Theorem~\ref{thm:spectral_gap_bds}, we have
    
    \begin{equation}
        \label{eq:adv_pol_bounds}
        \frac{2R\delta}{\left(1+2\delta+\lambda_2(L)\right)^3}\max\left\{1-\frac{2\delta}{\beta},0\right\}\alpha^2 \le P(L) - P(L^+) \le \frac{4R \left(\delta \vee \delta^2\right)}{\left(1+\lambda_2(L)\right)^3}\alpha^2
    \end{equation}
    }
\end{corollary}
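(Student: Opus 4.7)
The plan is to combine Theorem~\ref{thm:spectral_gap_bds} with the explicit dependence of $P(L) = R/(1+\lambda_2(L))^2$ on the spectral gap. Write $a := 1+\lambda_2(L)$ and $\Delta := \lambda_2(L^+) - \lambda_2(L) \ge 0$. The algebraic identity $\tfrac{1}{a^2} - \tfrac{1}{(a+\Delta)^2} = \tfrac{\Delta(2a+\Delta)}{a^2(a+\Delta)^2}$ yields the exact equality
\begin{equation*}
P(L) - P(L^+) = R\cdot\frac{\Delta(2a+\Delta)}{a^2(a+\Delta)^2} = \frac{R}{a^2}\left(1-\frac{a^2}{(a+\Delta)^2}\right).
\end{equation*}
The second form makes it plain that this expression is monotonically increasing in $\Delta\ge 0$. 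Therefore substituting the lower and upper bounds on $\Delta$ from Theorem~\ref{thm:spectral_gap_bds} translates directly into lower and upper bounds on $P(L)-P(L^+)$. Throughout I will use that since the Fiedler vector has unit norm, $\alpha^2 = (v_i-v_j)^2 \le 2(v_i^2+v_j^2) \le 2$, and also $a \ge 1$.

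For the lower bound, I plug in $\Delta \ge \Delta_- := \max\{1-2\delta/\beta,0\}\,\delta\alpha^2$ and discard $\Delta_-$ in the numerator via $2a+\Delta_- \ge 2a$, obtaining $P(L)-P(L^+) \ge 2R\Delta_-/[a(a+\Delta_-)^2]$. Since $\Delta_- \le \delta\alpha^2 \le 2\delta$, I get $a(a+\Delta_-)^2 \le (a+2\delta)\cdot(a+2\delta)^2 = (a+2\delta)^3$, and substituting this denominator gives exactly the stated lower bound $\tfrac{2R\delta \max\{1-2\delta/\beta,0\}\alpha^2}{(1+2\delta+\lambda_2(L))^3}$.

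For the upper bound, I use $\Delta\le\delta\alpha^2$ and the crude estimate $(a+\Delta)^2\ge a^2$ to obtain $P(L)-P(L^+) \le R\Delta(2a+\Delta)/a^4 = 2R\Delta/a^3 + R\Delta^2/a^4$. The linear term is bounded by $2R\delta\alpha^2/a^3$; for the quadratic term I use $\Delta^2 \le \delta^2\alpha^4 \le 2\delta^2\alpha^2$ together with $a\ge 1$ to get $R\Delta^2/a^4 \le 2R\delta^2\alpha^2/a^3$. Summing yields $P(L)-P(L^+)\le (2R\alpha^2/a^3)(\delta+\delta^2)$, and a simple case split on whether $\delta\le 1$ shows $\delta+\delta^2 \le 2(\delta\vee\delta^2)$, which recovers the stated upper bound $\tfrac{4R(\delta\vee\delta^2)\alpha^2}{(1+\lambda_2(L))^3}$.

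The derivation is essentially algebraic once the monotonicity in $\Delta$ is established and Theorem~\ref{thm:spectral_gap_bds} is invoked as a black box. The only mild subtleties are recognizing $\alpha^2 \le 2$ (which is what permits both $\Delta_- \le 2\delta$ in the lower bound and $\alpha^4 \le 2\alpha^2$ in the upper bound) and the $\delta \vee \delta^2$ form, which emerges naturally from splitting the linear and quadratic contributions in $\Delta$ and then unifying the two regimes $\delta \le 1$ and $\delta > 1$.
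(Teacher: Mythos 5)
Your proposal is correct and follows essentially the same route as the paper's proof: both substitute the bounds on $\lambda_2(L^+)-\lambda_2(L)$ from Theorem~\ref{thm:spectral_gap_bds} into the exact algebraic expression for $\tfrac{1}{a^2}-\tfrac{1}{(a+\Delta)^2}$, drop the quadratic term in the lower bound, use $\alpha^2\le 2$ to control the denominators and the $\alpha^4$ term, and unify $\delta+\delta^2\le 2(\delta\vee\delta^2)$. The only cosmetic difference is that you make the monotonicity in $\Delta$ explicit and use $a\ge 1$ where the paper uses $\alpha^2\le 2(1+\lambda_2(L))$; the resulting estimates are identical.
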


In contrast to the setting with full information, the worst-case polarization $P(L)$ cannot increase when the planner increases an edge's weight. Recall that this follows from the monotonicity of the spectral gap in \change{$W$}. However, it is possible that \change{the resulting graph $\cG^+$ has greater} polarization for \emph{some} particular innate opinions. The settings in~\eqref{opt:pol_min} and~\eqref{opt:robust_pol_min} are distinct, and therefore the quantities compared before and after edge-weight addition are fundamentally different.

\section{Empirical Simulations}
\label{sec:emp_res}

If we solved problems~\eqref{opt:pol_min} or~\eqref{opt:spectral_gap_max} naively, it would be necessary to test all $\sum_{i = 1}^k \binom{\binom{n}{2}}{i}$ possibilities. Given that computing polarization (or the spectral gap) requires $O(n^3)$ time, we obtain a crude upper bound of $O(k n^{2k+3})$. Note that for fixed $k$, this rate is polynomial in $n$ -- albeit still not scalable. However, in subsequent experiments we choose $k$ to grow linearly with $n$, which results in superexponential runtime. It is therefore extremely impractical to compute the optimal solution, and we resort to theoretically motivated heuristics.

In Sections~\ref{sec:given_ops} and~\ref{sec:adv_ops}, we briefly discussed three heuristics for solving the planner's problem in a greedy, iterative fashion. Our theoretical results studied how polarization is reduced by \textit{addition} of \change{weight to} a single edge. Therefore, all of the following heuristics are based on \change{increasing edge weights}. These are presented below -- detailing the edge to \change{be saturated (i.e. setting edge weight to $\bar w$)} at every step and briefly discussing the time complexity of each iteration. We will compare these approaches with a random baseline.

\begin{itemize}
    \item \textbf{Random:} Add an edge from $E^C$ chosen uniformly at random; this has runtime of $O(\log(n))$.
    
    \item \textbf{Disagreement Seeking (DS):} \change{$\underset{(i,j) \in E^C}{\mathrm{argmax}} \ (\bar w - w_{ij})(z_i - z_j)^2$}.
    
    Computing the expressed opinions requires $O(n^3)$ time, and it takes $O(|E^C|)$ time to check all candidate nonedges.
    \item \textbf{Coordinate Descent (CD):} \change{$\underset{(i,j) \in E^C}{\mathrm{argmax}} \   -(\bar w - w_{ij}) \partial_{w_{ij}} P(\mathbf{z})$}
    
    Requires $O(n^3)$ runtime for computing a matrix inverse and multiplication, and $O(|E^C|)$ to find the optimal edge.\footnote{Naively, one might think we need $O(n^3 |E^C|)$ time to find the optimum, as we perform a matrix multiplication to compute the gradient of every candidate edge. However, the matrix multiplication is extremely sparse, and can be reduced to operating on four entries of a fixed, pre-computed matrix.}
    
    \item \textbf{Fiedler Difference (FD):} \change{$\underset{(i,j) \in E^C}{\mathrm{argmax}} \ (\bar w - w_{ij})|v_i - v_j|$, where $\lambda_2\mathbf{v} = L\mathbf{v}$}
    
    Takes $O(n^3)$ time to compute the eigendecomposition of $L$, and $O(|E^C|)$ to find the argmax.
\end{itemize}

\change{Notice two effects at play: the maximal weight that can be added ($\bar w - w_{ij}$), and some measure of effectiveness per unit weight (disagreement, partial derivative, or absolute difference in Fiedler vector). Naturally, each heuristic attempts to maximize the two's product.}

In addition, note that the three non-random heuristics have total runtime of $O(k(n^3+|E^C|))$. The random baseline has runtime of only $O(k \log(n))$, but computing polarization at each step (for purposes of comparison) comes with an additional cost of $O(kn^3)$. \change{We believe that the random heuristic is useful for two reasons. First, it captures a totally naive recommendation system, which does not curate a user's content exposure based on their opinions. Second, in two of the random graph models we study -- Erd\H{o}s-R\'enyi and stochastic block -- the result of the random heuristic is another graph from the same model, but with slightly higher edge density. Therefore, this heuristic allows us to study how much \textit{additional} polarization is reduced by adding edges in an informed, targeted, manner.}

We now study the performance of these heuristics on \change{six unweighted graphs.} First we look at three real-world networks -- sourced from Twitter, Reddit, and political blogs -- and then three synthetic networks with different characteristics: the Erd\H{o}s-R\'enyi, stochastic block, and preferential attachment models. Table~\ref{tab:graph_params} provides basic information about the graphs studied. In what follows, the planner's budget is given by $k = \lfloor \frac{n}{2} \rfloor $, such that on average each vertex receives one new edge. We plot the value of polarization with the planner's budget, along with the reference point $P(\mathbf{z}_{K_n})$, which represents the global minimum of polarization.

\begin{table}[t]
    \centering
    \begin{tabular}{c|c c}
        \toprule
        Network & Vertices $n$ & Edges $m$ \\
        \midrule
        Twitter & 548 & 3638  \\
        Reddit & 556 & 8969  \\
        Blogs & 1222 & 16717 \\
        Erd\H{o}s-R\'enyi & 1000 & 9990  \\
        SBM & 1000 & 13726 \\
        PA & 1000 & 4883  \\
        \bottomrule
    \end{tabular}
    \caption{Initial networks for evaluation of polarization-reducing heuristics.}
    \label{tab:graph_params}
\end{table}

\begin{table*}[t]
    \centering
    \begin{tabular}{c c | c c c | c c c}
        \toprule
        \multirow{2}{*}{\textbf{Quantity}} 
        & \multirow{2}{*}{\textbf{Heuristic}} 
        & \multicolumn{3}{c}{\textbf{Real-World Networks}} 
        & \multicolumn{3}{c}{\textbf{Synthetic Networks}}
        \\
        & & Twitter & Reddit & Blogs & Erd\H{o}s-R\'enyi & SBM & PA \\
        \midrule
        \multirow{5}{*}{\stackanchor{Expressed}{Polarization}} 
        & Initial   & 0.1664     & 0.0053      & 36.6      & 0.2422      & 3.526      & 1.71      \\
        & Random    & 0.1011     & 0.0035      & 22.1      & 0.2185      & 2.579      & 1.35      \\
        & DS        & 0.0221     & \bf{0.0006} & \bf{8.2}  & 0.1432      & 1.774      & 0.62      \\
        & CD        & \bf{0.0200}& \bf{0.0006} & \bf{8.2}  & \bf{0.1420} & \bf{1.767} & \bf{0.61} \\
        & FD        & 0.0754     & 0.0013      & 15.1      & 0.2011      & 1.814      & 1.23      \\
        \midrule
        \multirow{5}{*}{\stackanchor{Spectral}{Gap}} 
        & Initial   & 0.439      & 0.960      & 0.169      & 7.381       & 4.58      & 2.85         \\
        & Random    & 0.689      & 0.977      & 0.300      & 8.183       & 5.51      & 3.19         \\
        & DS        & 0.792      & 0.972      & 1.393      & 7.439       & 6.67      & 3.10         \\
        & CD        & 0.799      & 2.820      & 1.258      & 7.435       & 6.79      & 3.21         \\
        & FD        & \bf{2.052} & \bf{9.170} & \bf{2.327} & \bf{12.046} & \bf{6.87} & \bf{4.02}    \\
        \midrule
        \multirow{5}{*}{\stackanchor{Assortativity of}{Innate Opinions}} 
        & Initial   & 0.023     & -0.007     & 0.811      & -0.016     & 0.687     & 0.025  \\
        & Random    & 0.018     & -0.005     & 0.779      & -0.015     & 0.661     & 0.029  \\
        & DS        & -0.143    & -0.142     & 0.747      & -0.114     & 0.606     & -0.138 \\
        & CD        & -0.090    & -0.093     & 0.747      & -0.102     & 0.618     & -0.114 \\
        & FD        & 0.029     & -0.007     & 0.780      & -0.013     & 0.635     & 0.026  \\
        \bottomrule
    \end{tabular}
    \caption{Values for expressed polarization, spectral gap, and innate assortativity computed before and after the planner applies each heuristic to six networks. The planner adds $k=\lfloor \frac{n}{2} \rfloor$ edges -- an average of one new edge per vertex. The best-performing heuristics are highlighted in bold. Appendix~\ref{app:figures} contains additional figures showing changes in the spectral gap and assortativity with the planner's budget.}
    \label{tab:comp_results}
\end{table*}

Table~\ref{tab:comp_results} shows three quantities: expressed polarization, spectral gap, and assortativity of innate opinions. Expressed polarization is the principal concern of this study, and through Proposition~\ref{prop:robust_pol_min} is closely related to the spectral gap. Assortativity is introduced by~\citet{newman2003mixing}, and captures the degree of homophily in a network \change{-- which has been shown to control the speed of consensus-forming \citep{golub2012homophily}.} In particular, assortativity lies in $[-1,1]$, and measures the correlation of an attribute across edges. In this paper, this metric is evaluated for the innate opinions.

In general, the random baseline decreases polarization the least, and both the DS and CD heuristics outperform the Fiedler vector-based strategy. This is expected, as the FD heuristic is blind to the innate opinions, and uses strictly less information. However, we observe that DS and CD tend to result in negative values of homophily, while the FD heuristic does not share this tendency. As an interesting implication, it does not appear that a reduction in polarization requires negative values of homophily. Namely, it may not be necessary to directly connect the most polarized individuals in a society to reduce its level of polarization.

In the figures that follow, vertices are colored according to their innate opinions. Graphs are plotted using the python module \verb!networkx!~\citep{Hagberg2008}. Vertices are placed in two-dimensional space using force-directed algorithms, in which vertices repel each other and edges behave like springs in tension. Therefore, the vertex layout reflects their relative attraction. The same random seed for initial node placement is used for every graph type studied. All code and data used in this paper is publicly available \href{https://github.com/drigobon/minimizing-polarization}{{\color{blue} here}}.

\begin{figure}[ht]
    \centering
    \begin{subfigure}{0.53\linewidth}
        \centering
	    \includegraphics[width=\linewidth]{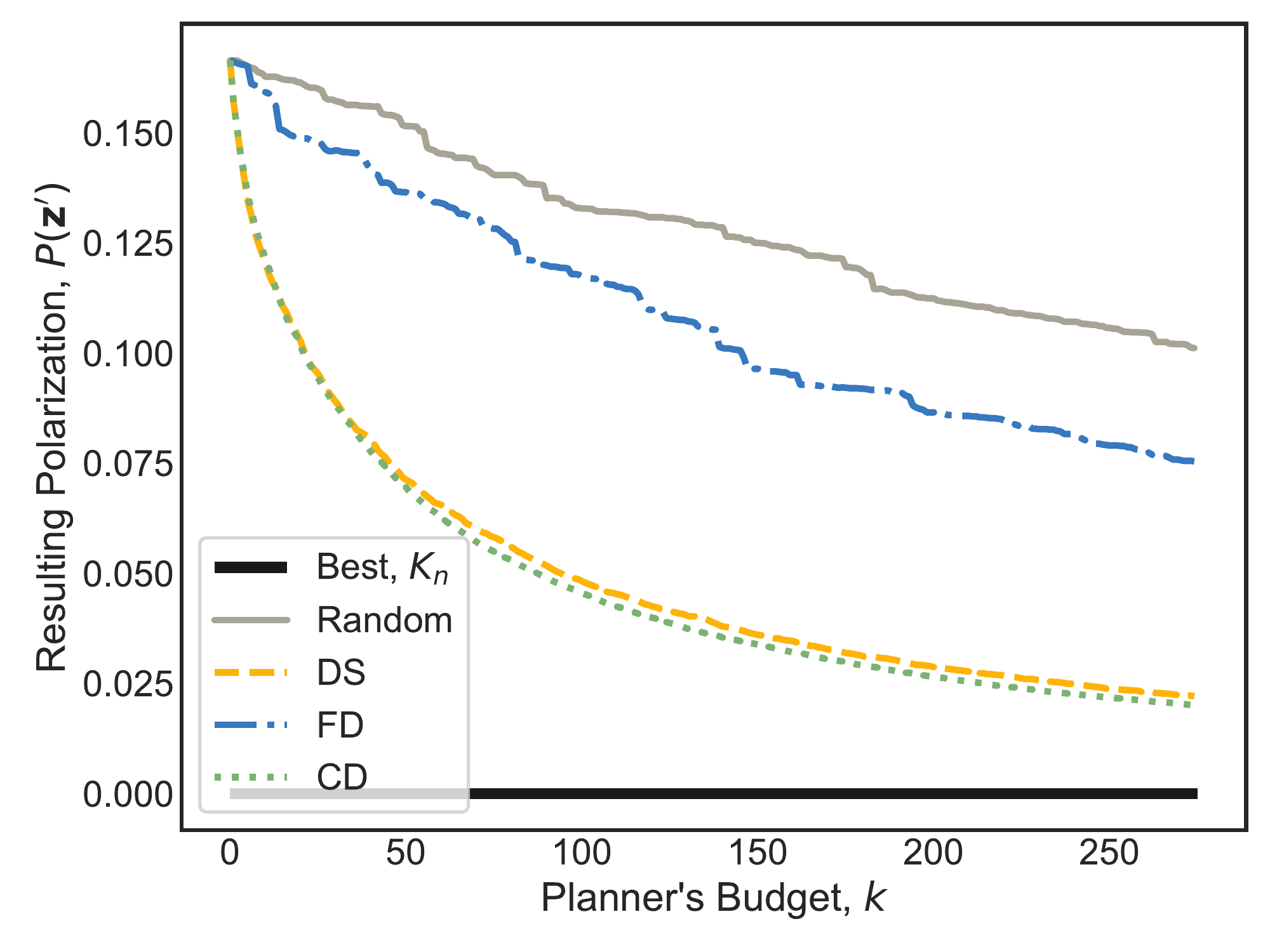}
        \caption{Reduction of Polarization}
        \label{fig:pol_tw}
    \end{subfigure}
    \begin{subfigure}{0.05\linewidth}
    	\raisebox{0.2in}{\includegraphics[width=\linewidth]{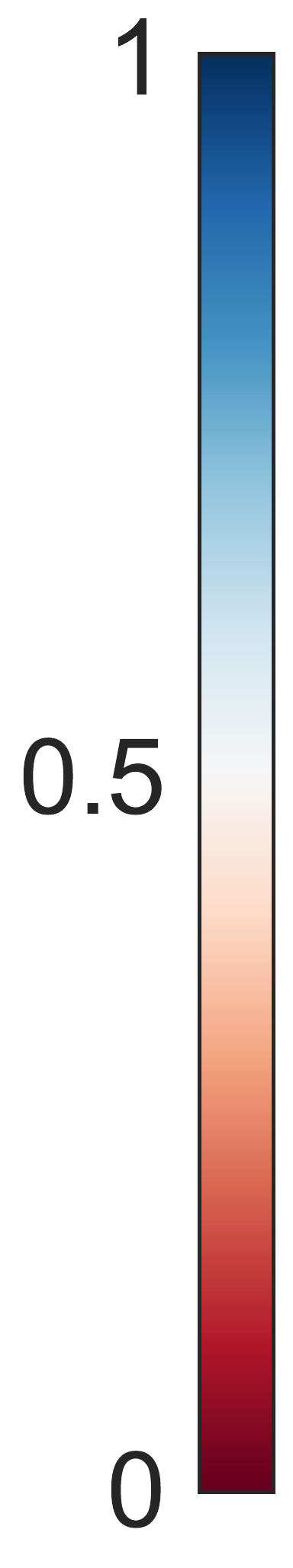}}
    \end{subfigure}
    \begin{subfigure}{0.4\linewidth}
    	\includegraphics[trim=0 10 0 0, clip, width=\linewidth]{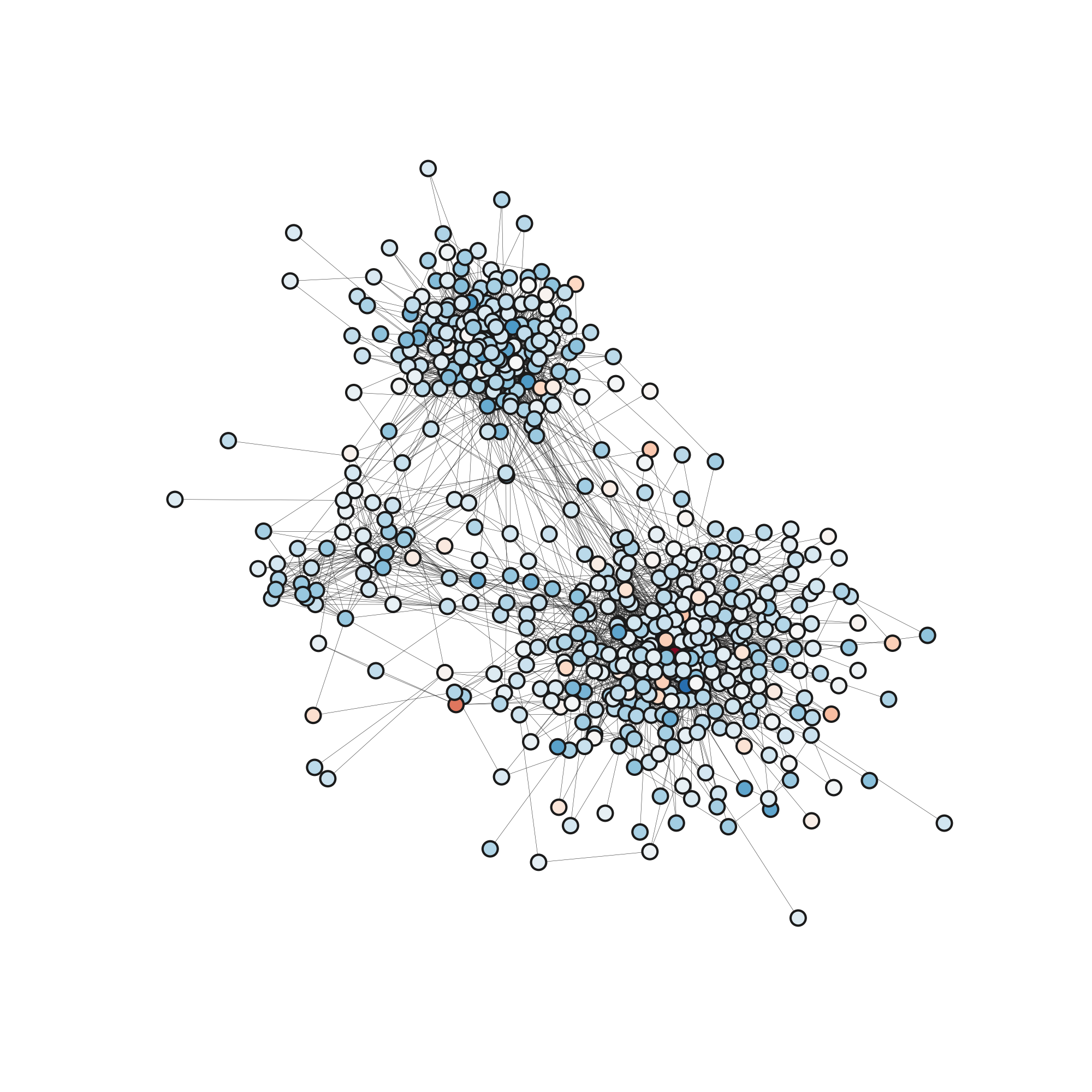}
    	\caption{Initial Graph Structure} \label{fig:tw_pre}	
    \end{subfigure}
    \\
    \begin{subfigure}{0.03\linewidth}
    	\raisebox{0.2in}{\includegraphics[width=\linewidth]{colormap}}
    \end{subfigure}
    \begin{subfigure}{0.23\linewidth}
        \centering
    	\includegraphics[width=\linewidth]{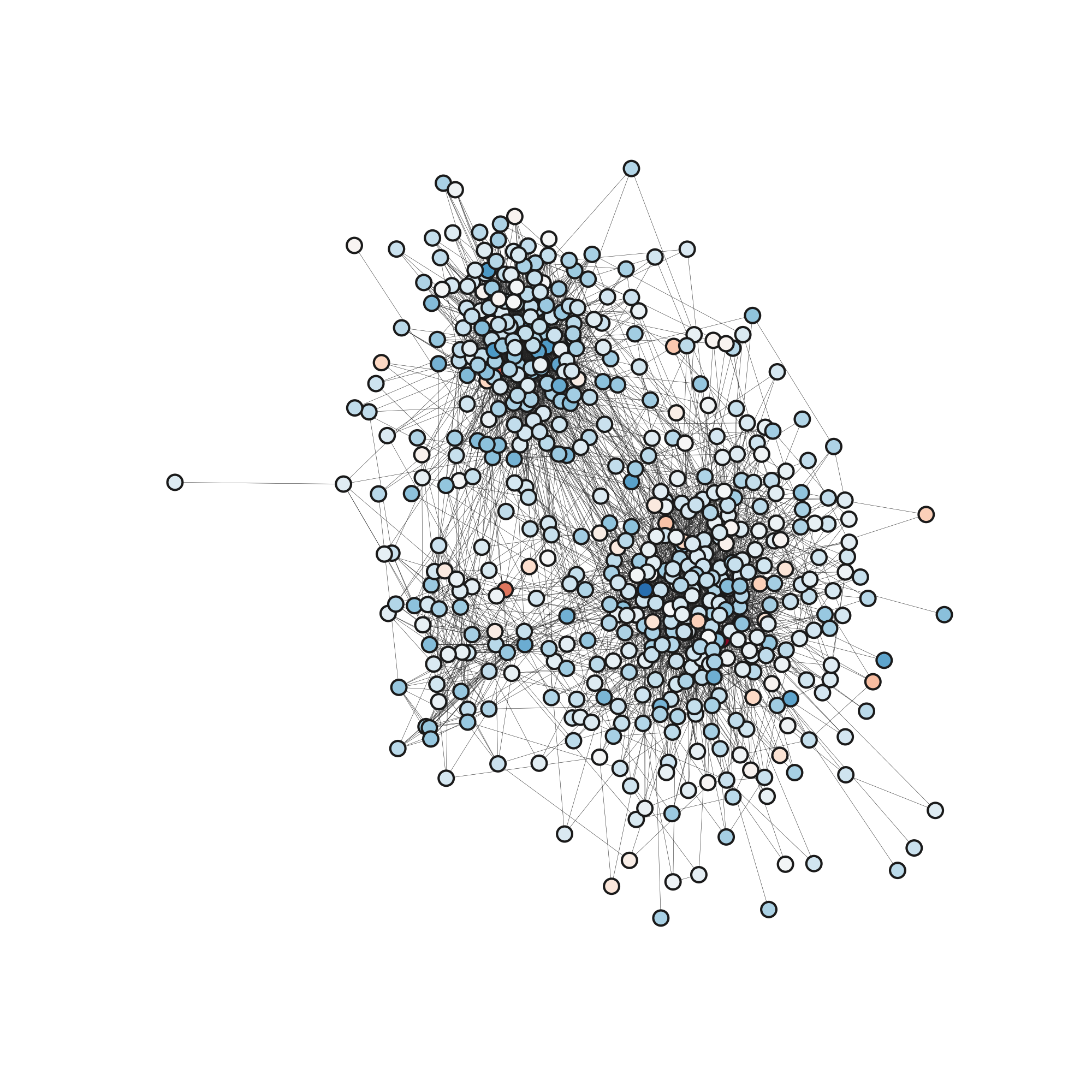}
    	\caption{Random} \label{fig:tw_post_random_add}	
    \end{subfigure}
    \begin{subfigure}{0.23\linewidth}
    	\includegraphics[width=\linewidth]{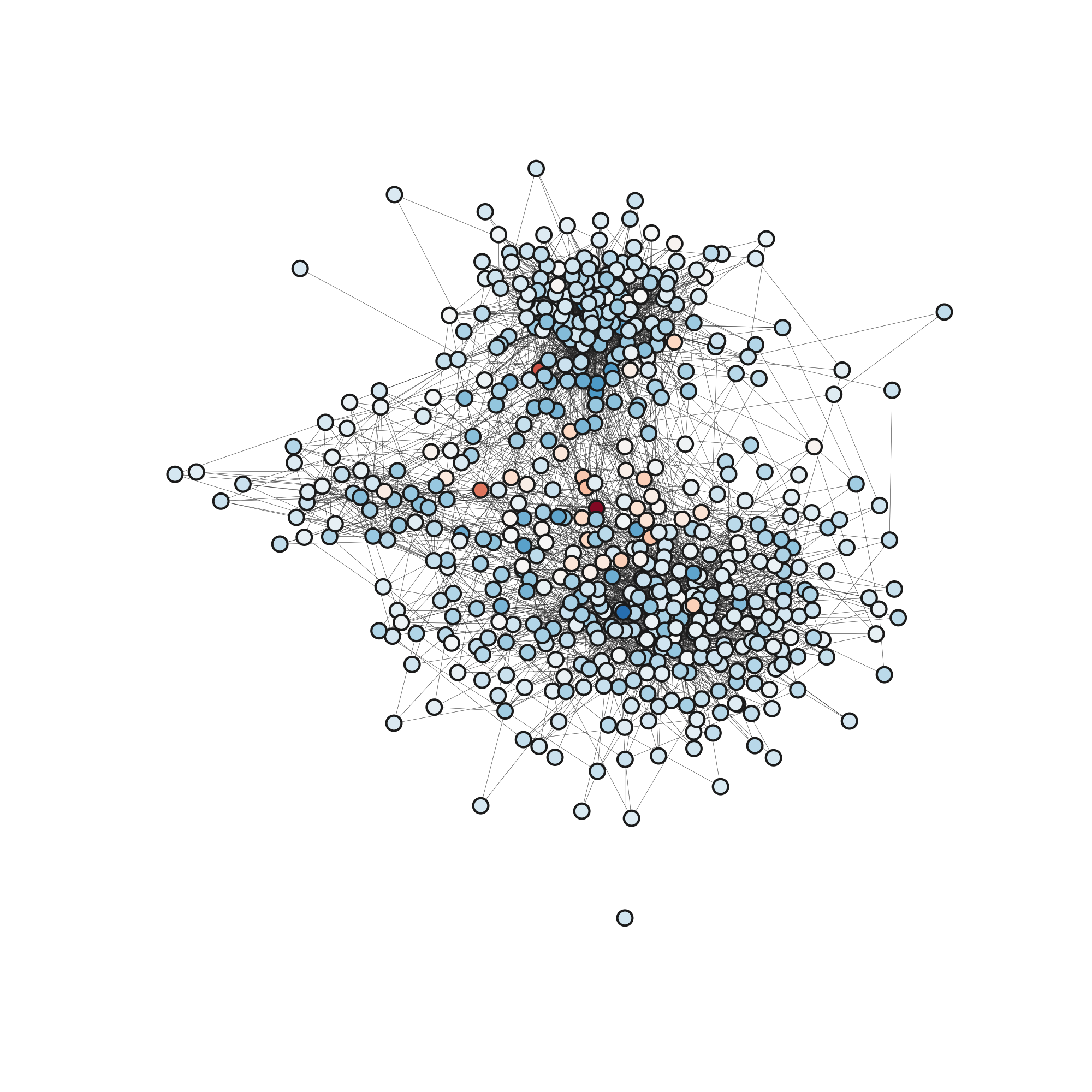}
    	\caption{DS} \label{fig:tw_post_max_dis}	
    \end{subfigure} 
    \begin{subfigure}{0.23\linewidth}
    	\includegraphics[width=\linewidth]{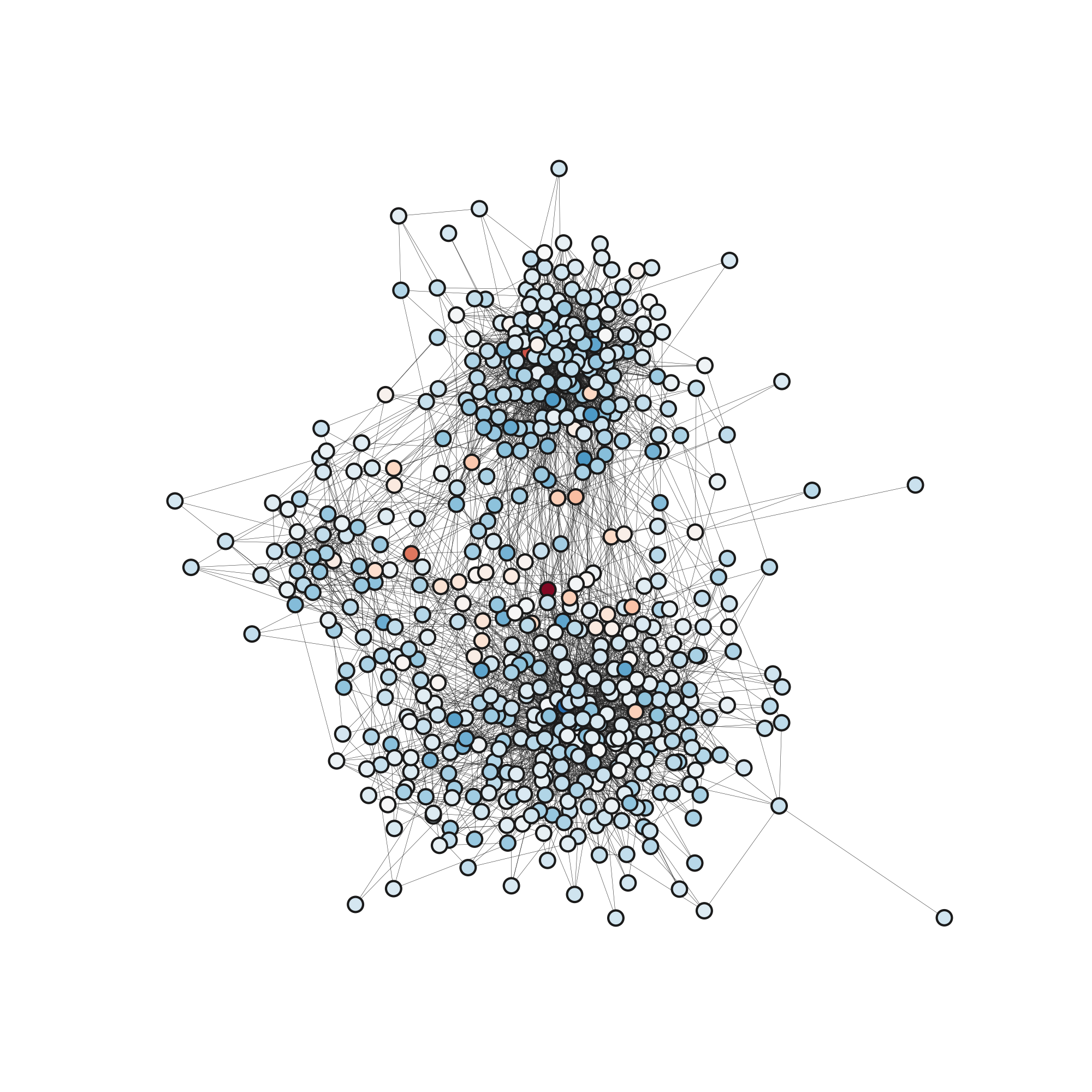}
    	\caption{CD} \label{fig:tw_post_max_grad}	
    \end{subfigure}   
    \begin{subfigure}{0.23\linewidth}
    	\includegraphics[width=\linewidth]{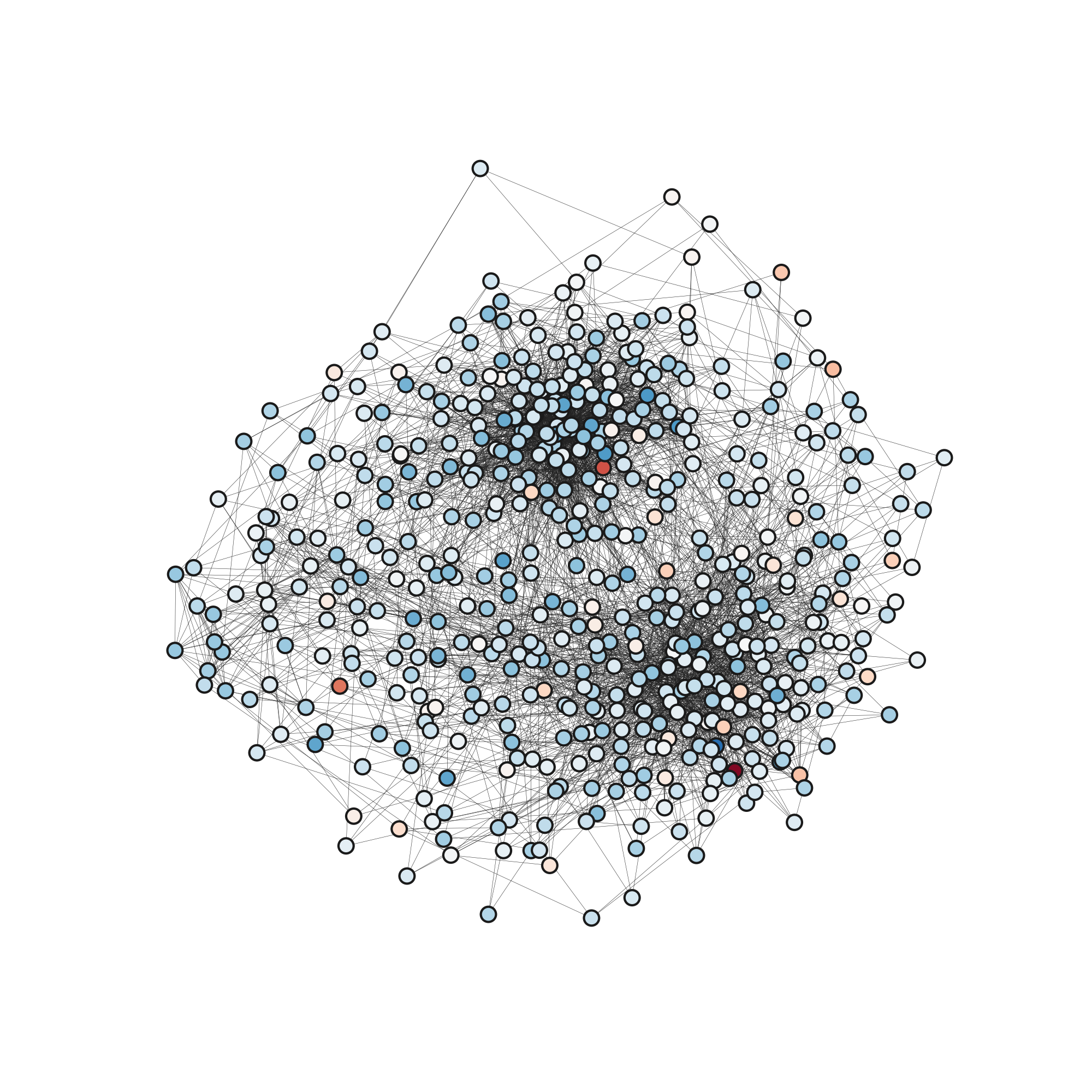}
    	\caption{FD} \label{fig:tw_post_max_fiedler_diff}	
    \end{subfigure} 
    \caption{Evaluation of the planner's heuristics on the Twitter network. Panel (a) shows the reduction achieved as the planner gradually adds edges. Panel (b) shows the initial network, while (c)-(f) visualize the network after the planner has exhausted their budget according to each heuristic. Vertices are colored according to their innate opinions.}
\end{figure}

\begin{figure}[ht]
    \centering
    \begin{subfigure}{0.53\linewidth}
        \centering
	    \includegraphics[width=\linewidth]{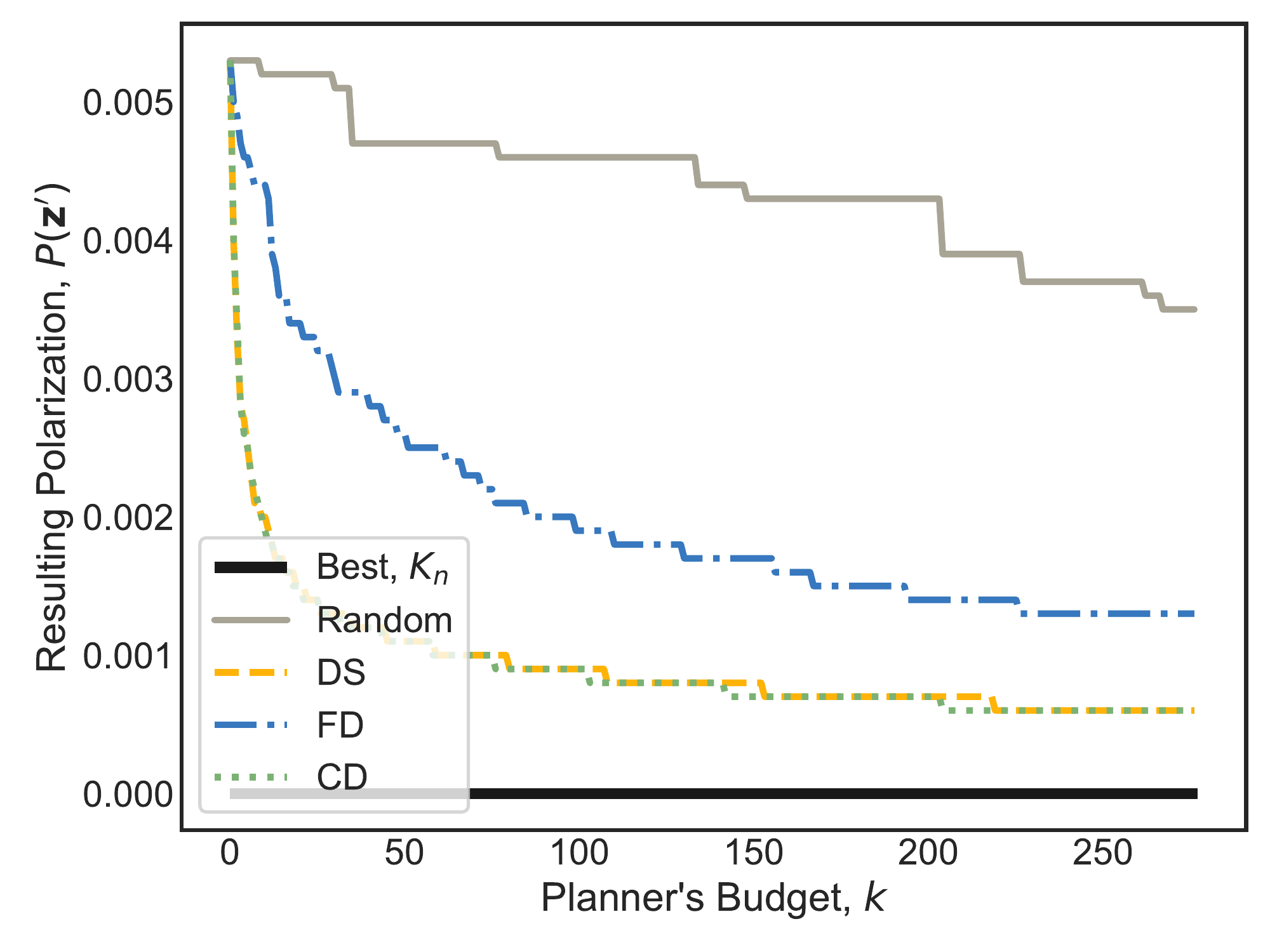}
        \caption{Reduction of Polarization}
        \label{fig:pol_rd}
    \end{subfigure}
    \begin{subfigure}{0.05\linewidth}
    	\raisebox{0.2in}{\includegraphics[width=\linewidth]{colormap}}
    \end{subfigure}
    \begin{subfigure}{0.4\linewidth}
    	\includegraphics[trim=0 10 0 0, clip, width=\linewidth]{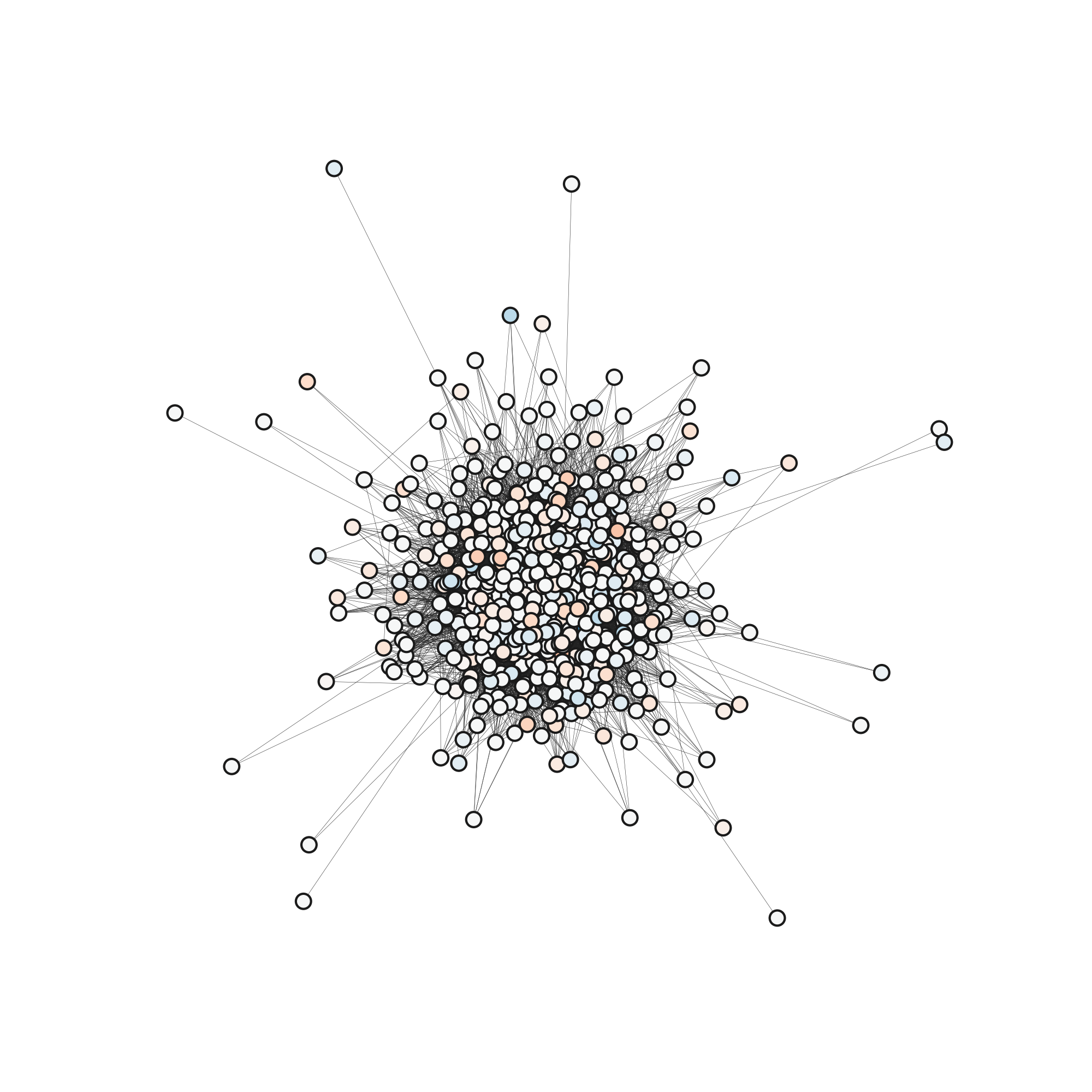}
    	\caption{Initial Graph Structure} \label{fig:rd_pre}	
    \end{subfigure}
    \\
    \begin{subfigure}{0.03\linewidth}
    	\raisebox{0.2in}{\includegraphics[width=\linewidth]{colormap}}
    \end{subfigure}
    \begin{subfigure}{0.23\linewidth}
        \centering
    	\includegraphics[width=\linewidth]{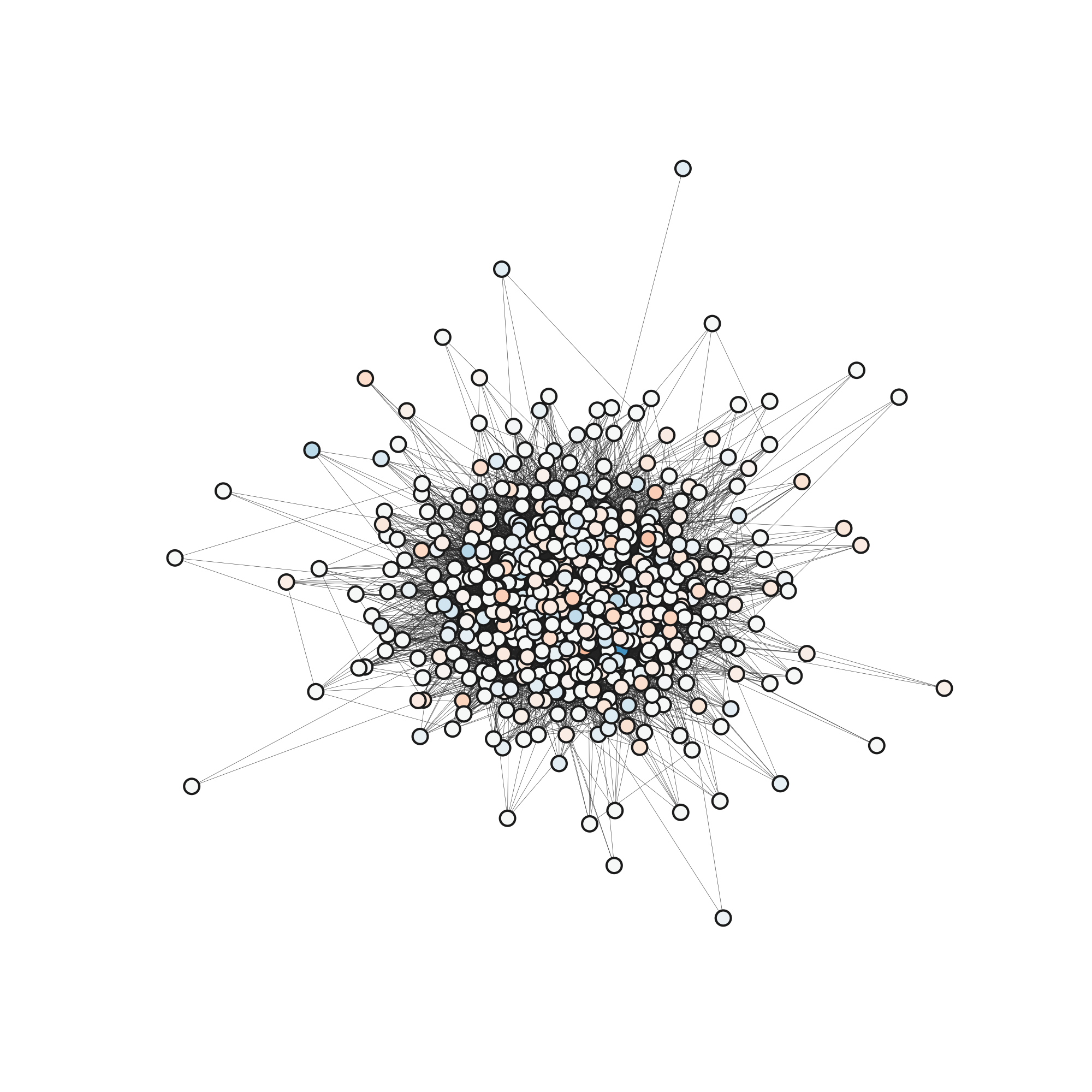}
    	\caption{Random} \label{fig:rd_post_random_add}	
    \end{subfigure}
    \begin{subfigure}{0.23\linewidth}
    	\includegraphics[width=\linewidth]{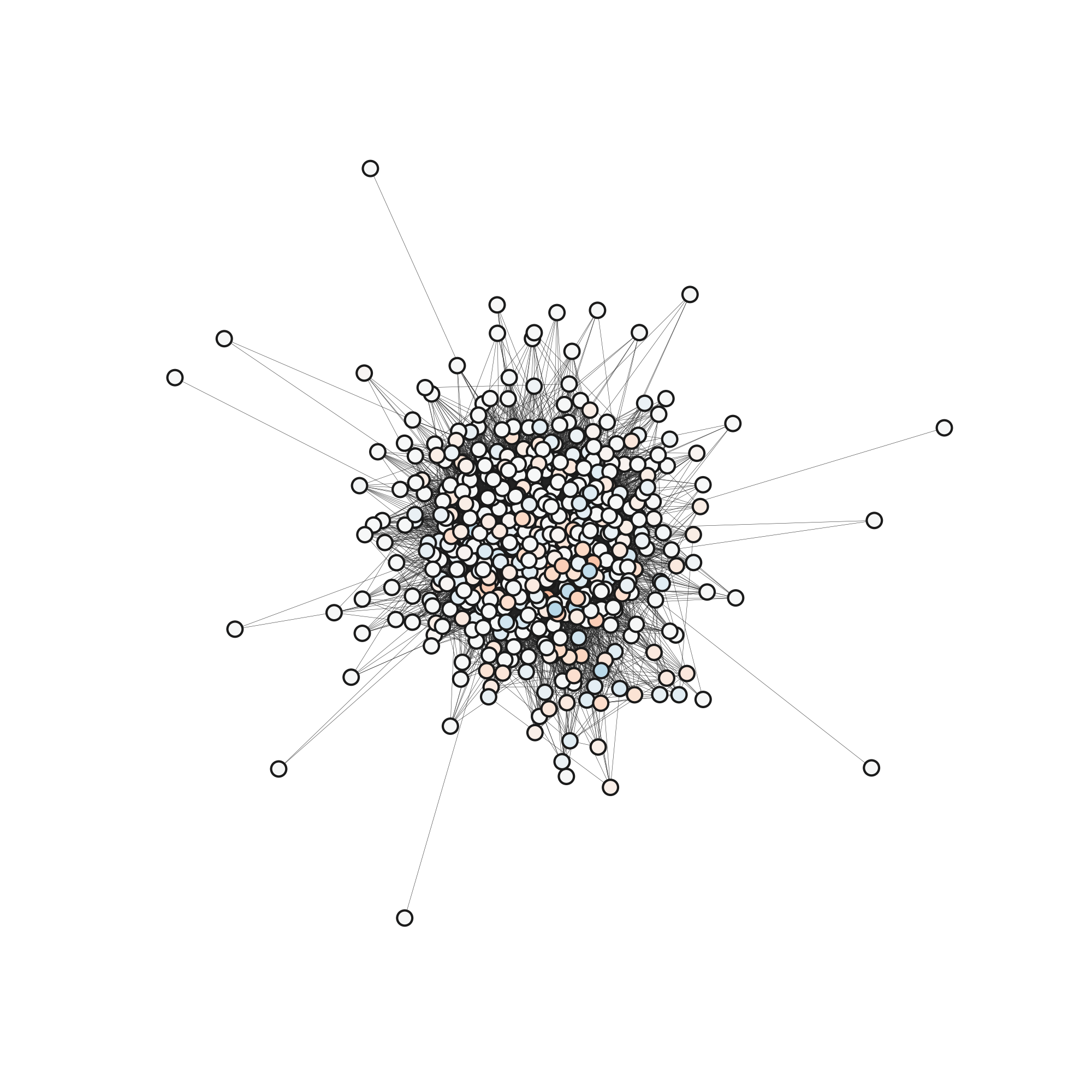}
    	\caption{DS} \label{fig:rd_post_max_dis}	
    \end{subfigure} 
    \begin{subfigure}{0.23\linewidth}
    	\includegraphics[width=\linewidth]{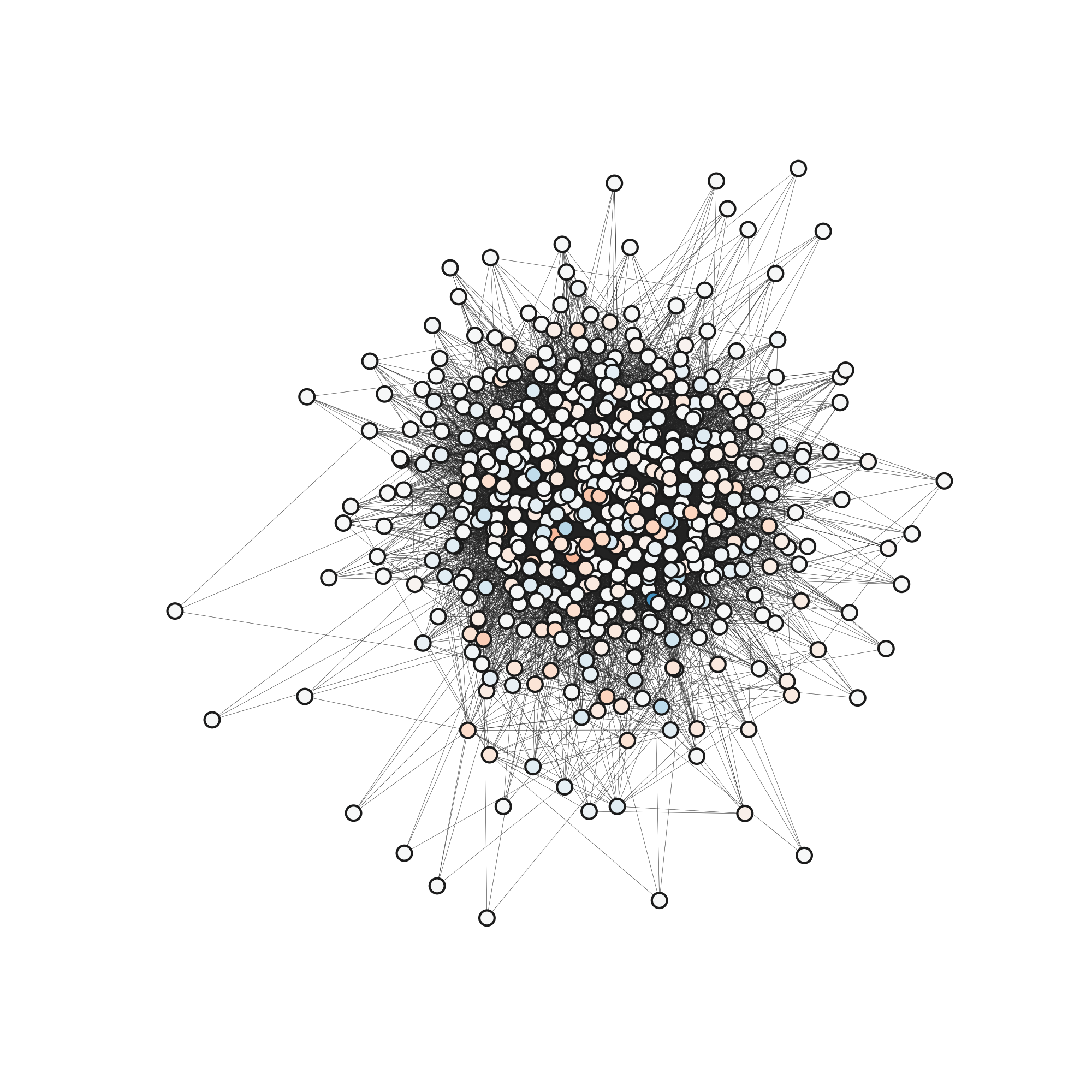}
    	\caption{CD} \label{fig:rd_post_max_grad}	
    \end{subfigure}   
    \begin{subfigure}{0.23\linewidth}
    	\includegraphics[width=\linewidth]{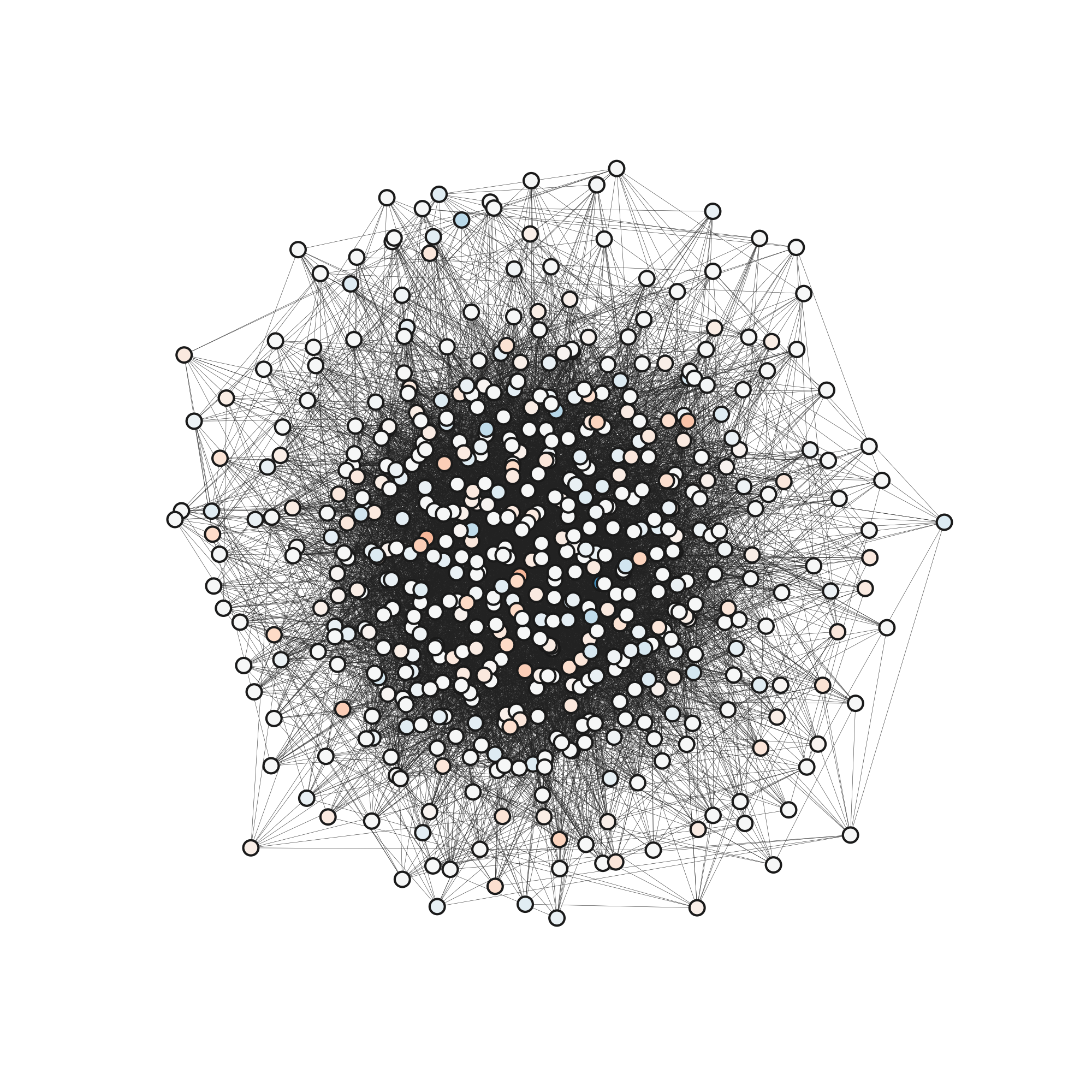}
    	\caption{FD} \label{fig:rd_post_max_fiedler_diff}	
    \end{subfigure} 
    \caption{Evaluation of the planner's heuristics on the Reddit network. Panel (a) shows the reduction achieved as the planner gradually adds edges. Panel (b) shows the initial network, while (c)-(f) visualize the network after the planner has exhausted their budget according to each heuristic. Vertices are colored according to their innate opinions.}
\end{figure}

\begin{figure}[ht]
    \centering
    \begin{subfigure}{0.53\linewidth}
        \centering
	    \includegraphics[width=\linewidth]{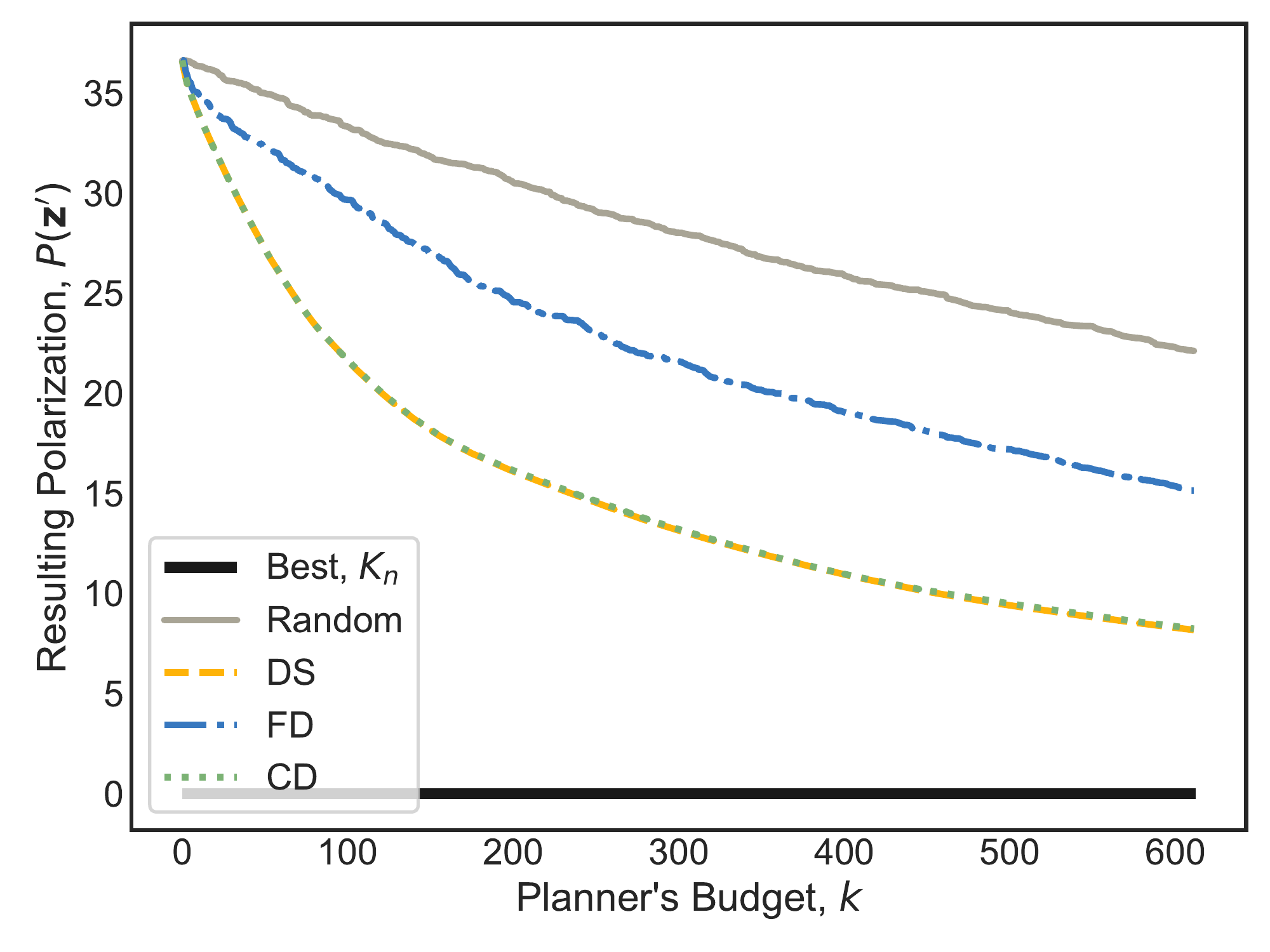}
        \caption{Reduction of Polarization}
        \label{fig:pol_bg}
    \end{subfigure}
    \begin{subfigure}{0.05\linewidth}
    	\raisebox{0.2in}{\includegraphics[width=\linewidth]{colormap}}
    \end{subfigure}
    \begin{subfigure}{0.4\linewidth}
    	\includegraphics[trim=0 10 0 0, clip, width=\linewidth]{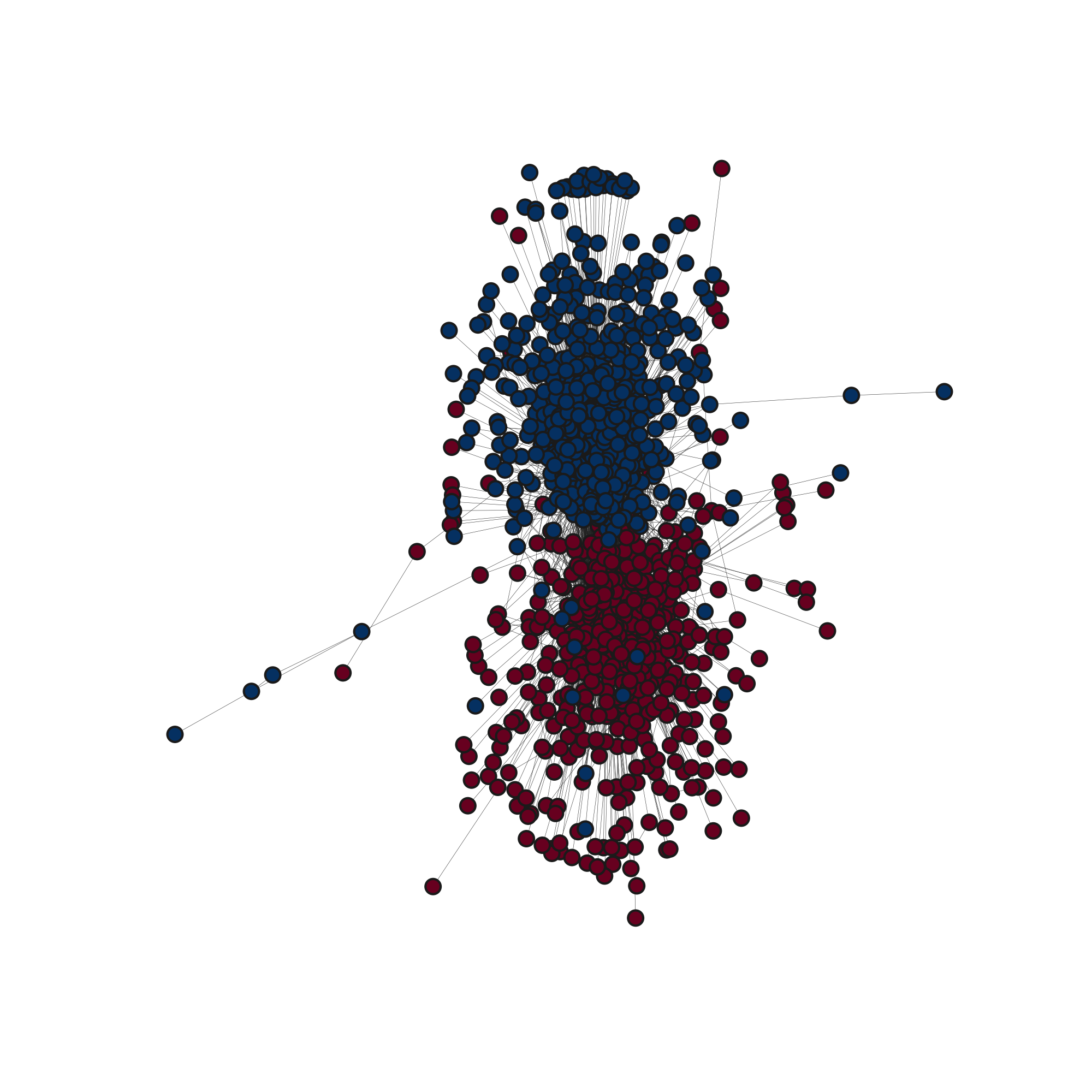}
    	\caption{Initial Graph Structure} \label{fig:bg_pre}	
    \end{subfigure}
    \\
    \begin{subfigure}{0.03\linewidth}
    	\raisebox{0.2in}{\includegraphics[width=\linewidth]{colormap}}
    \end{subfigure}
    \begin{subfigure}{0.23\linewidth}
        \centering
    	\includegraphics[width=\linewidth]{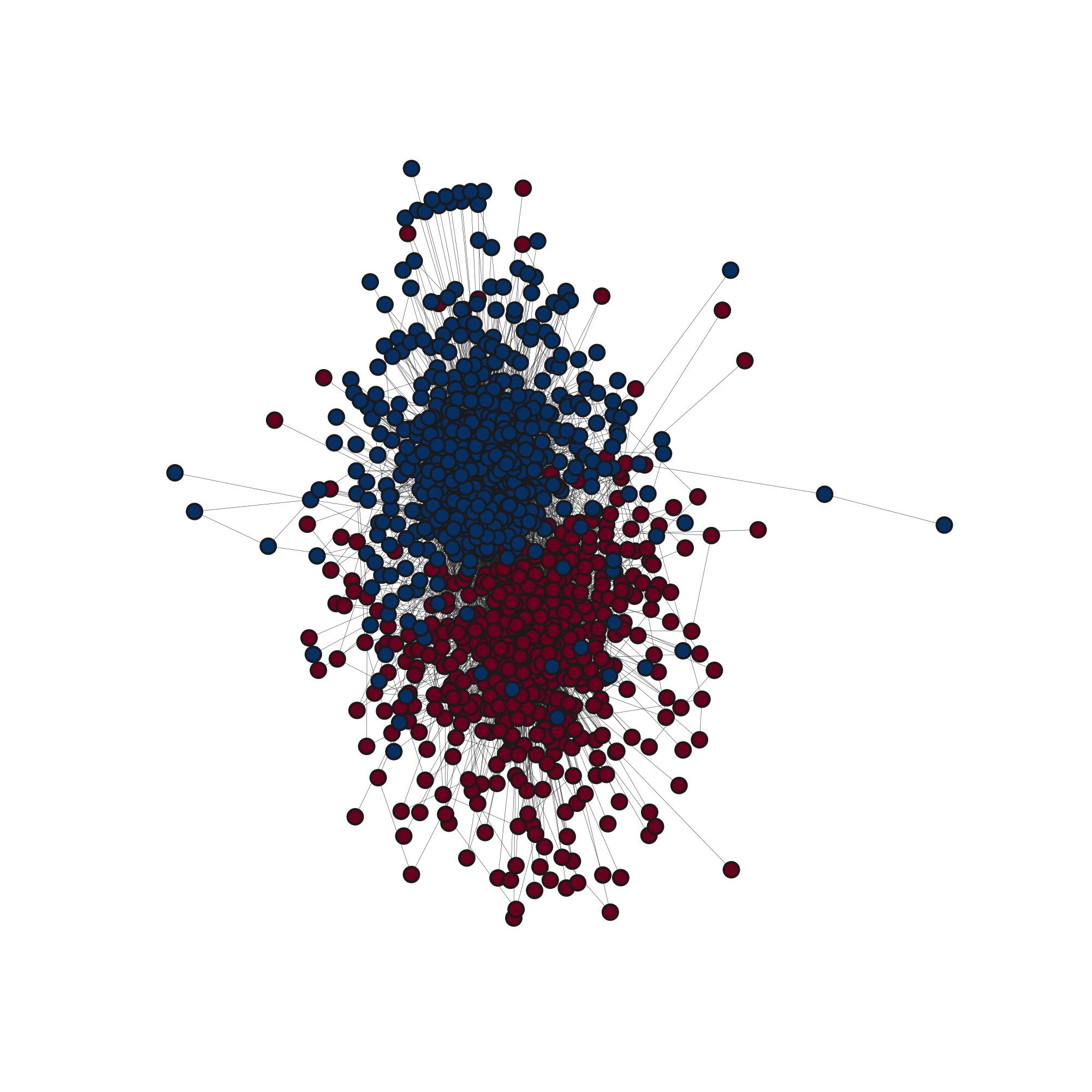}
    	\caption{Random} \label{fig:bg_post_random_add}	
    \end{subfigure}
    \begin{subfigure}{0.23\linewidth}
    	\includegraphics[width=\linewidth]{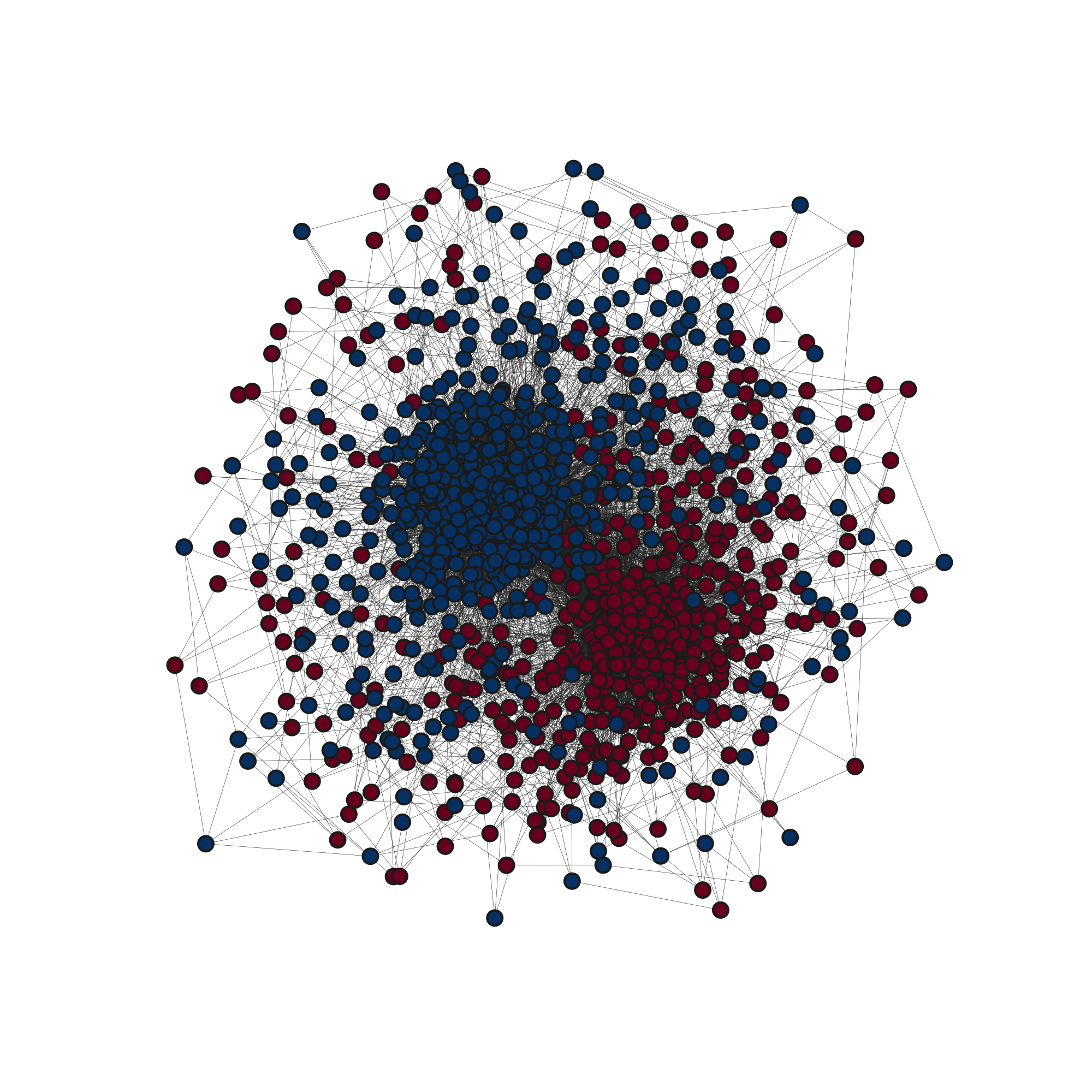}
    	\caption{DS} \label{fig:bg_post_max_dis}	
    \end{subfigure} 
    \begin{subfigure}{0.23\linewidth}
    	\includegraphics[width=\linewidth]{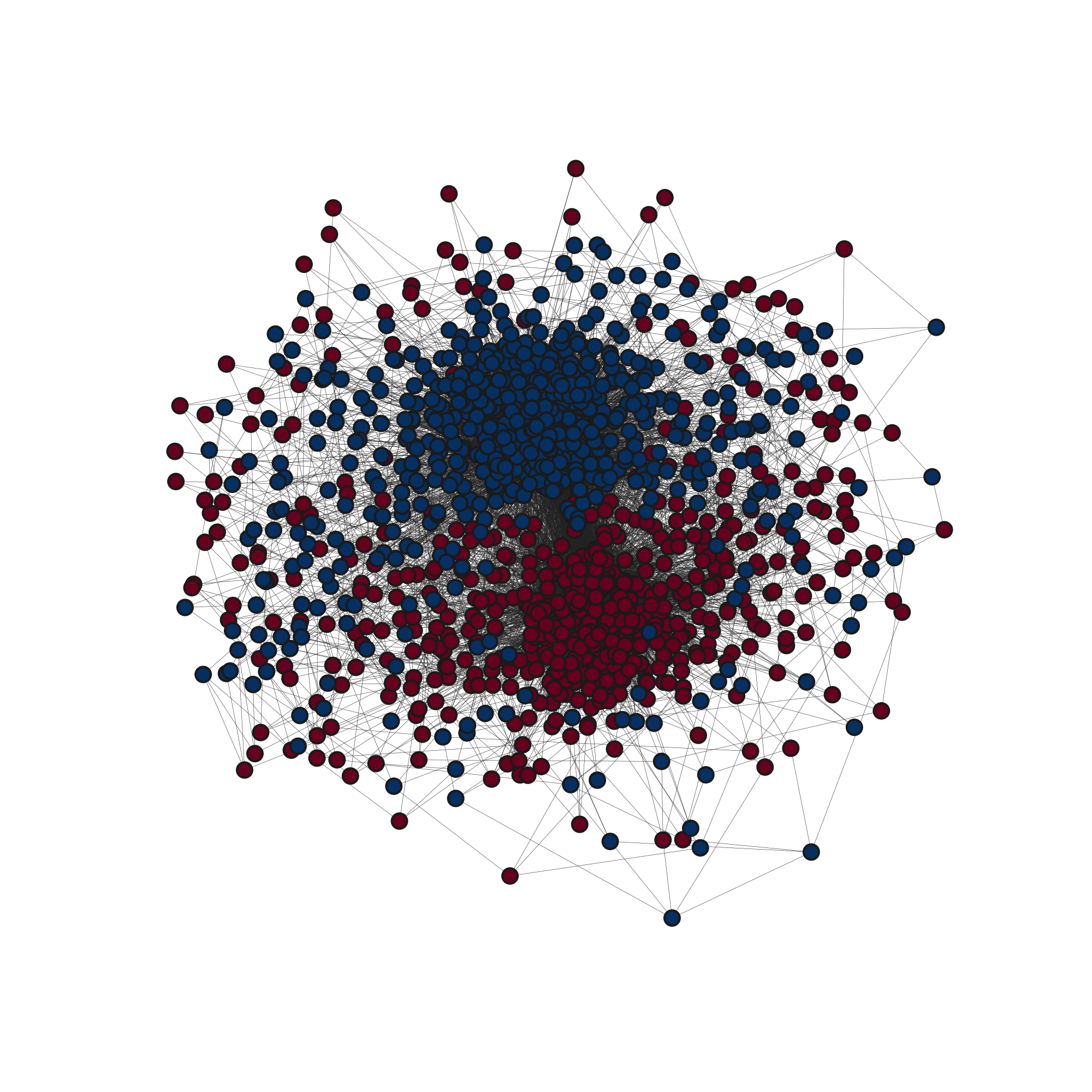}
    	\caption{CD} \label{fig:bg_post_max_grad}	
    \end{subfigure}   
    \begin{subfigure}{0.23\linewidth}
    	\includegraphics[width=\linewidth]{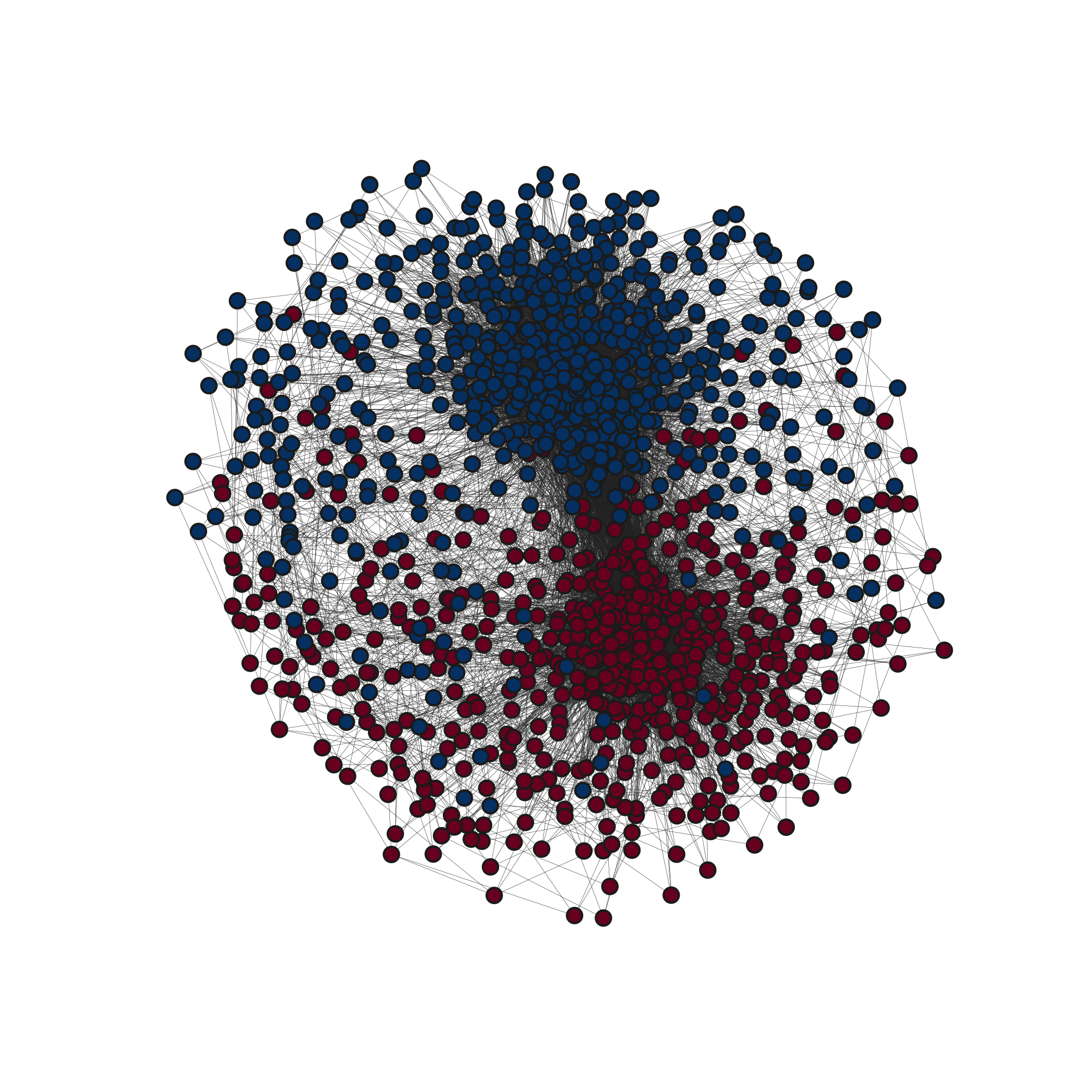}
    	\caption{FD} \label{fig:bg_post_max_fiedler_diff}	
    \end{subfigure} 
    \caption{Evaluation of the planner's heuristics on the political blogs network. Panel (a) shows the reduction achieved as the planner gradually adds edges. Panel (b) shows the initial network, while (c)-(f) visualize the network after the planner has exhausted their budget according to each heuristic. Vertices are colored according to their innate opinions.}
\end{figure}

\begin{figure}[ht]
    \centering
    \begin{subfigure}{0.53\linewidth}
        \centering
	    \includegraphics[width=\linewidth]{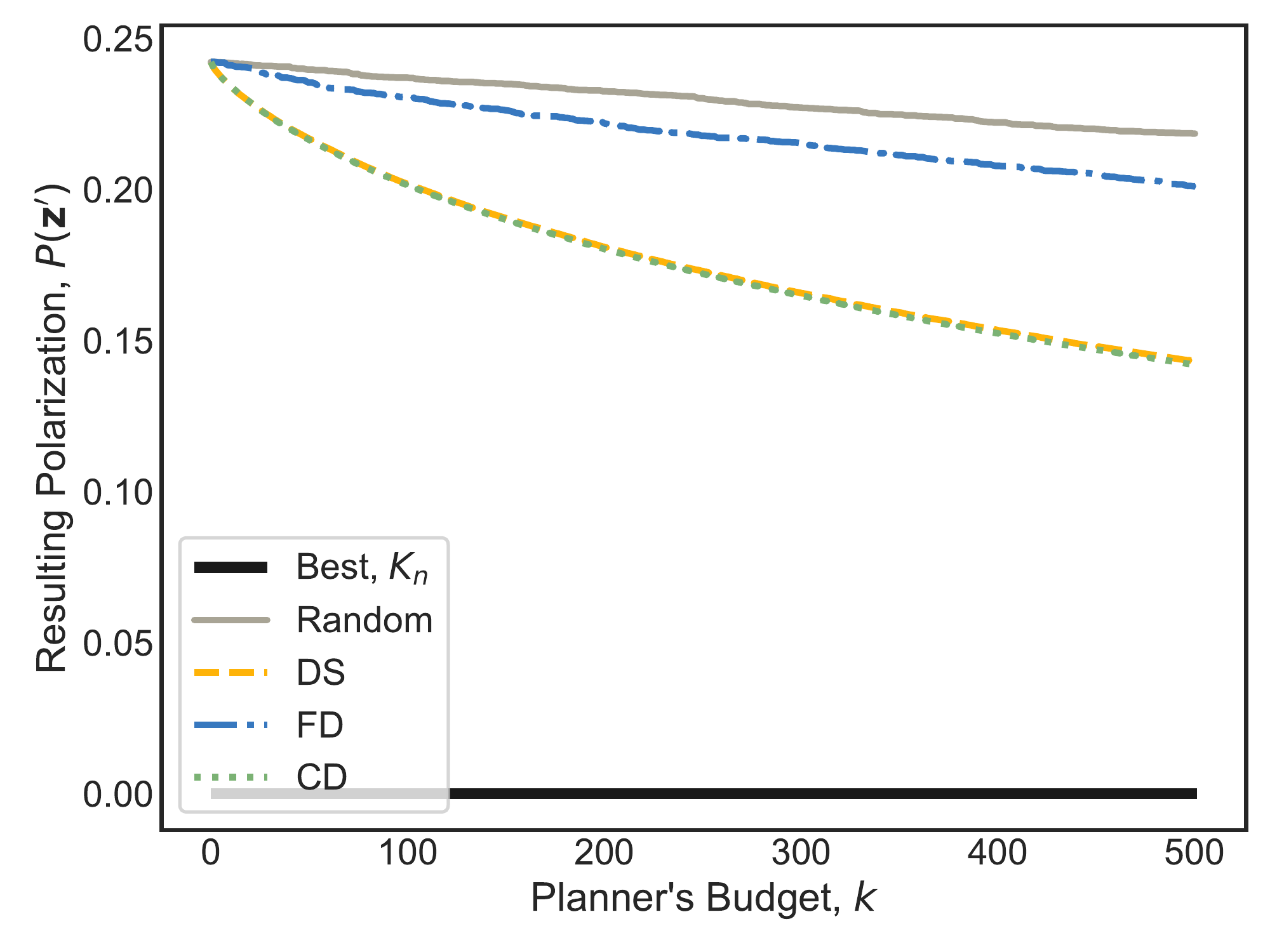}
        \caption{Reduction of Polarization}
        \label{fig:pol_er}
    \end{subfigure}
    \begin{subfigure}{0.05\linewidth}
    	\raisebox{0.2in}{\includegraphics[width=\linewidth]{colormap}}
    \end{subfigure}
    \begin{subfigure}{0.4\linewidth}
    	\includegraphics[trim=0 10 0 0, clip, width=\linewidth]{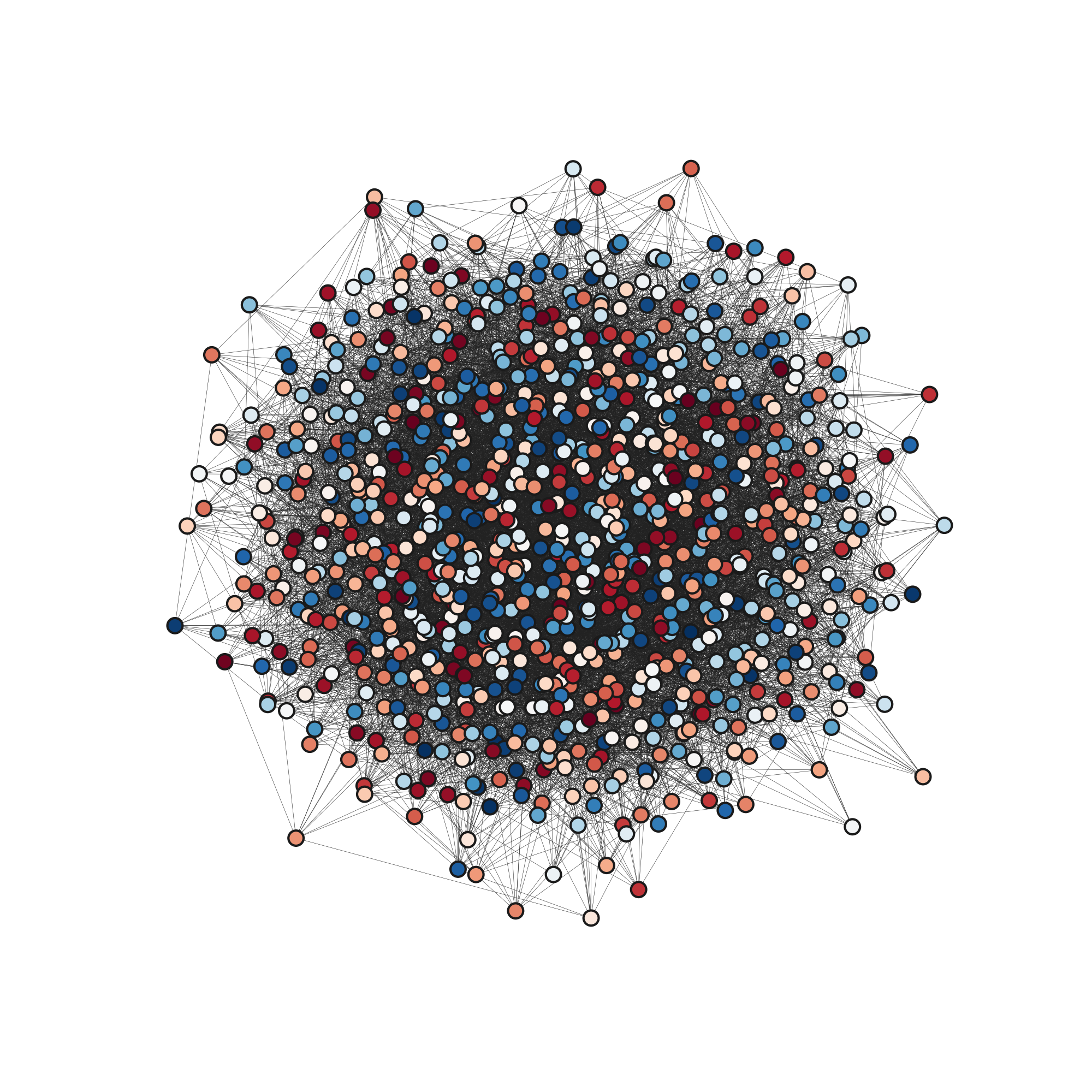}
    	\caption{Initial Graph Structure} \label{fig:er_pre}	
    \end{subfigure}
    \\
    \begin{subfigure}{0.03\linewidth}
    	\raisebox{0.2in}{\includegraphics[width=\linewidth]{colormap}}
    \end{subfigure}
    \begin{subfigure}{0.23\linewidth}
        \centering
    	\includegraphics[width=\linewidth]{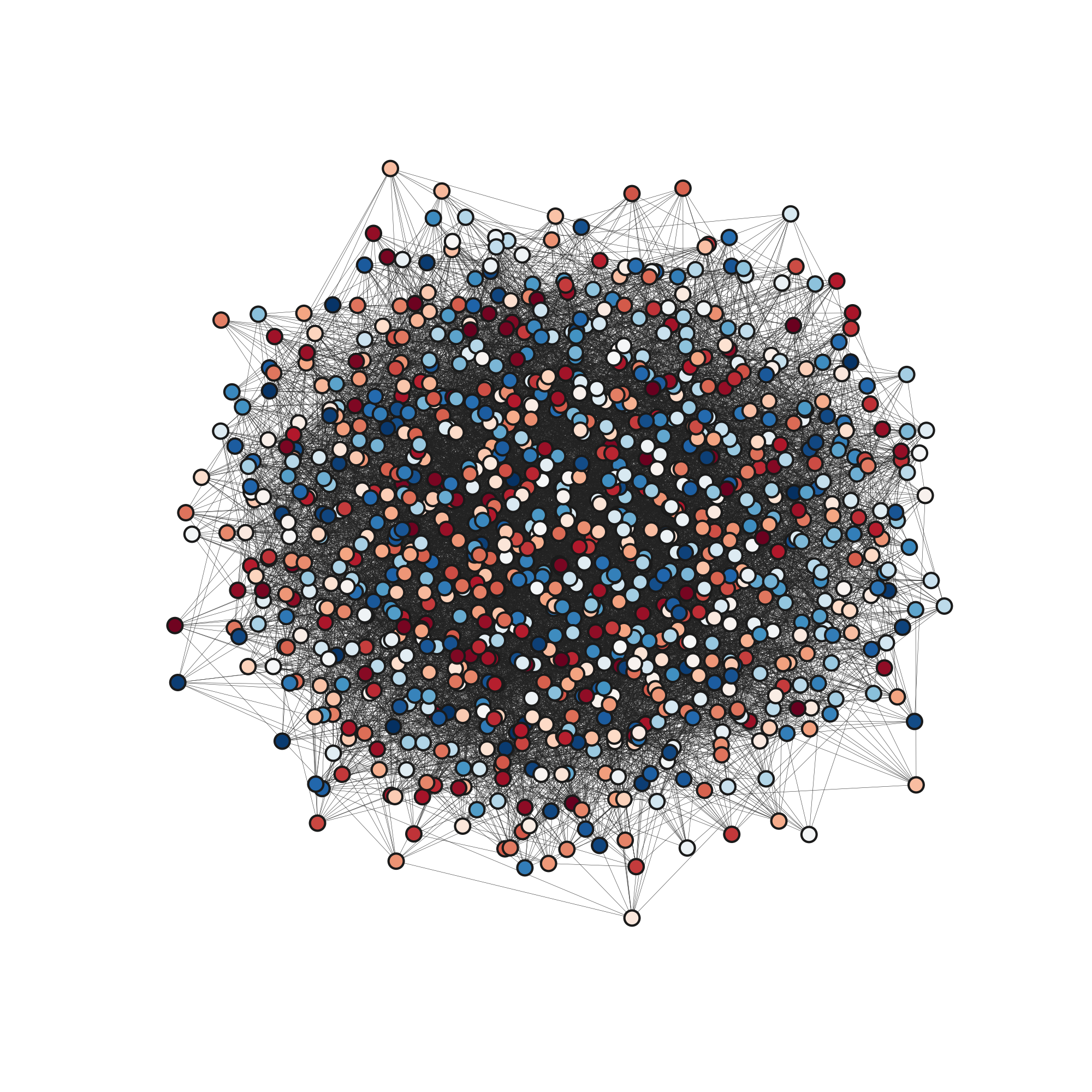}
    	\caption{Random} \label{fig:er_post_random_add}	
    \end{subfigure}
    \begin{subfigure}{0.23\linewidth}
    	\includegraphics[width=\linewidth]{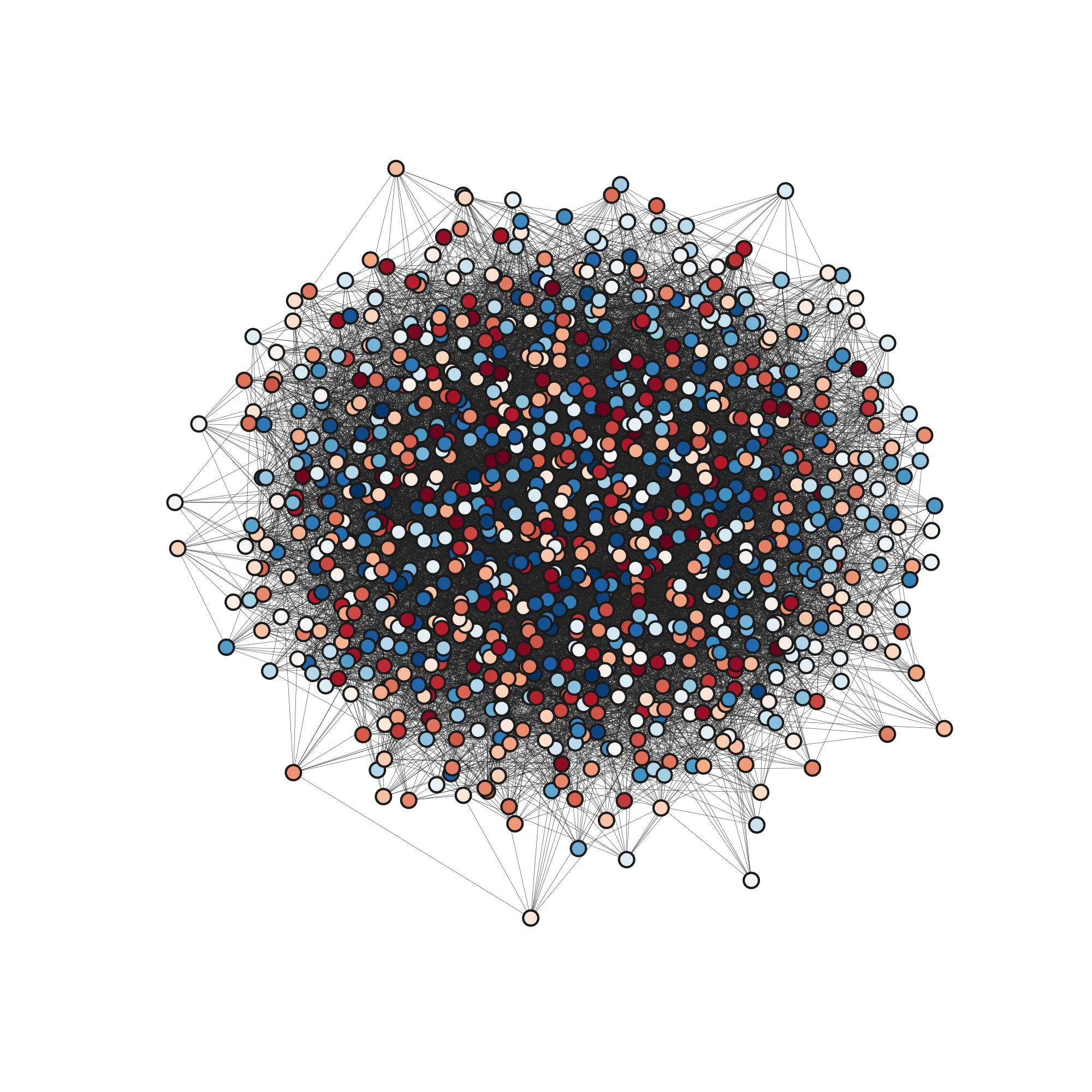}
    	\caption{DS} \label{fig:er_post_max_dis}	
    \end{subfigure} 
    \begin{subfigure}{0.23\linewidth}
    	\includegraphics[width=\linewidth]{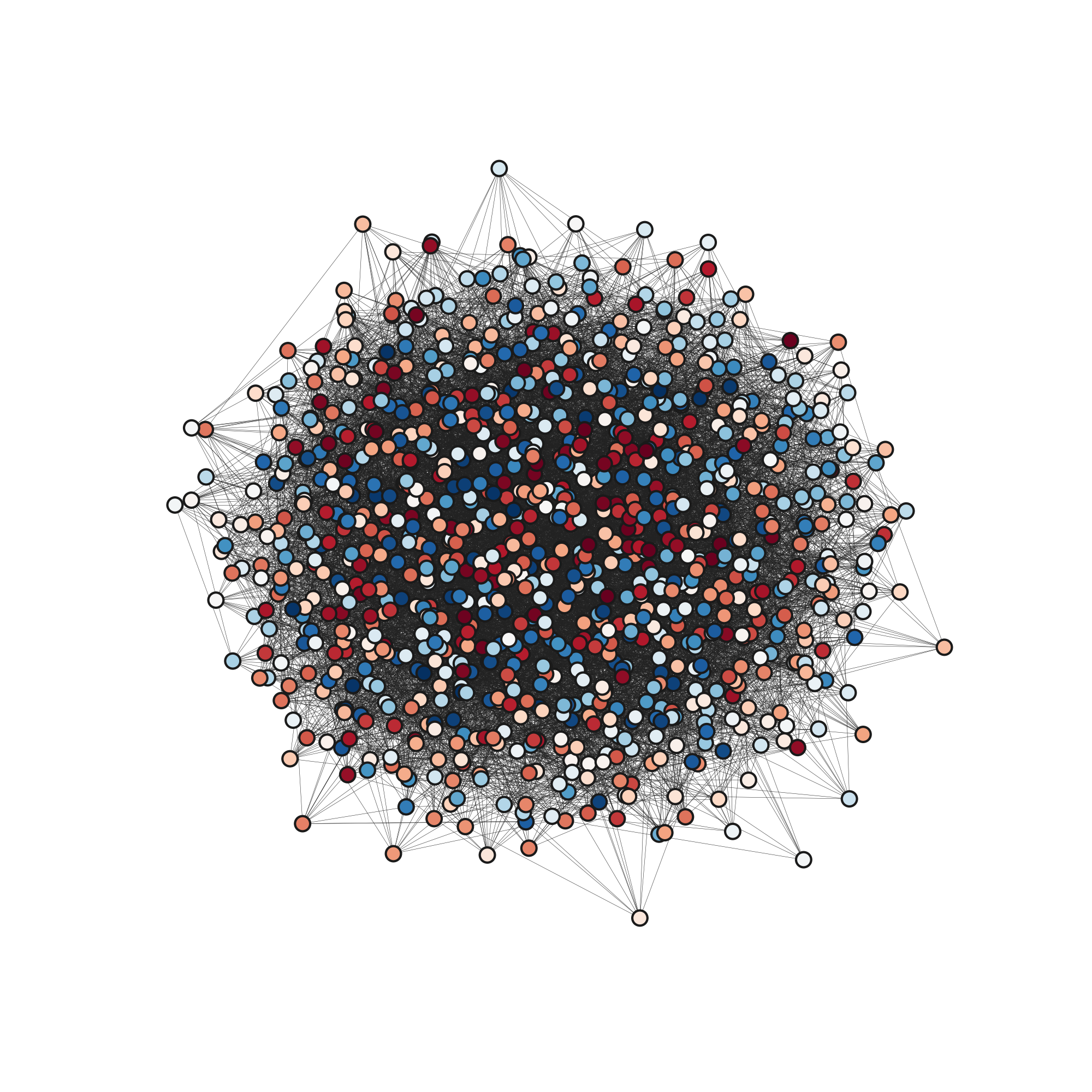}
    	\caption{CD} \label{fig:er_post_max_grad}	
    \end{subfigure}   
    \begin{subfigure}{0.23\linewidth}
    	\includegraphics[width=\linewidth]{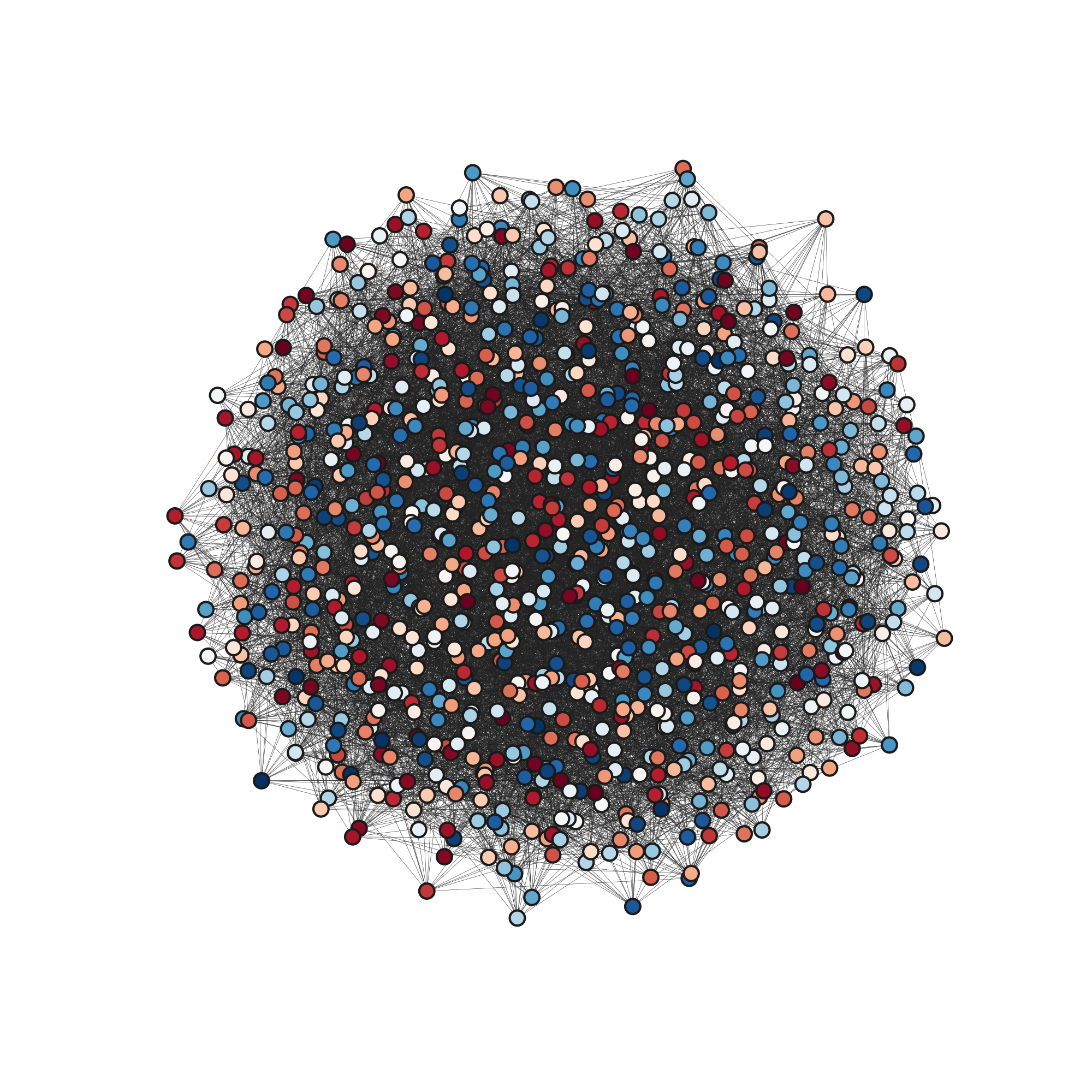}
    	\caption{FD} \label{fig:er_post_max_fiedler_diff}	
    \end{subfigure} 
    \caption{Evaluation of the planner's heuristics on the Erd\H{o}s-R\'enyi graph. Panel (a) shows the reduction achieved as the planner gradually adds edges. Panel (b) shows the initial network, while (c)-(f) visualize the network after the planner has exhausted their budget according to each heuristic. Vertices are colored according to their innate opinions.}
\end{figure}

\begin{figure}[ht]
    \centering
    \begin{subfigure}{0.53\linewidth}
        \centering
	    \includegraphics[width=\linewidth]{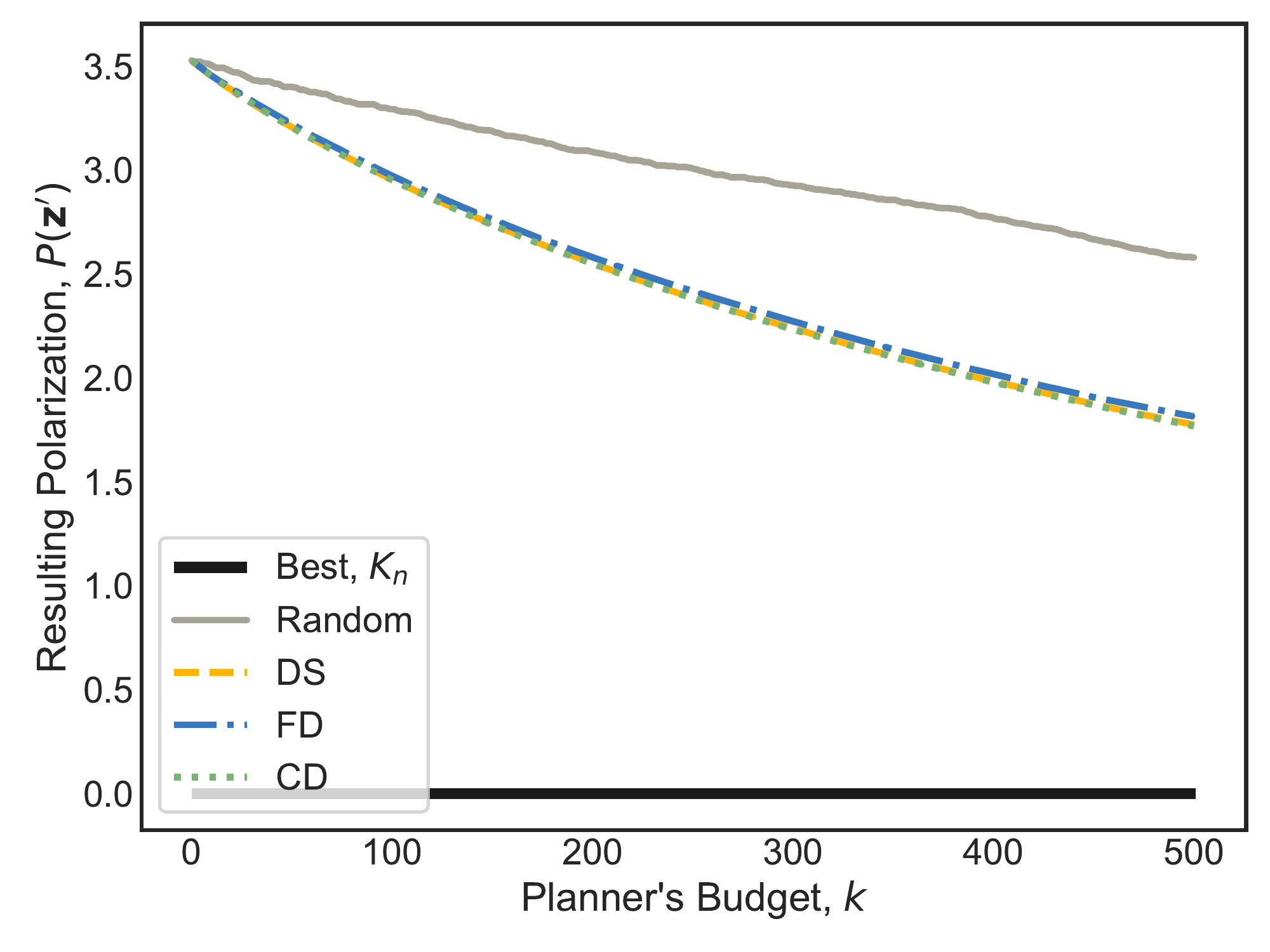}
        \caption{Reduction of Polarization}
        \label{fig:pol_sbm}
    \end{subfigure}
    \begin{subfigure}{0.05\linewidth}
    	\raisebox{0.2in}{\includegraphics[width=\linewidth]{colormap}}
    \end{subfigure}
    \begin{subfigure}{0.4\linewidth}
    	\includegraphics[trim=0 10 0 0, clip, width=\linewidth]{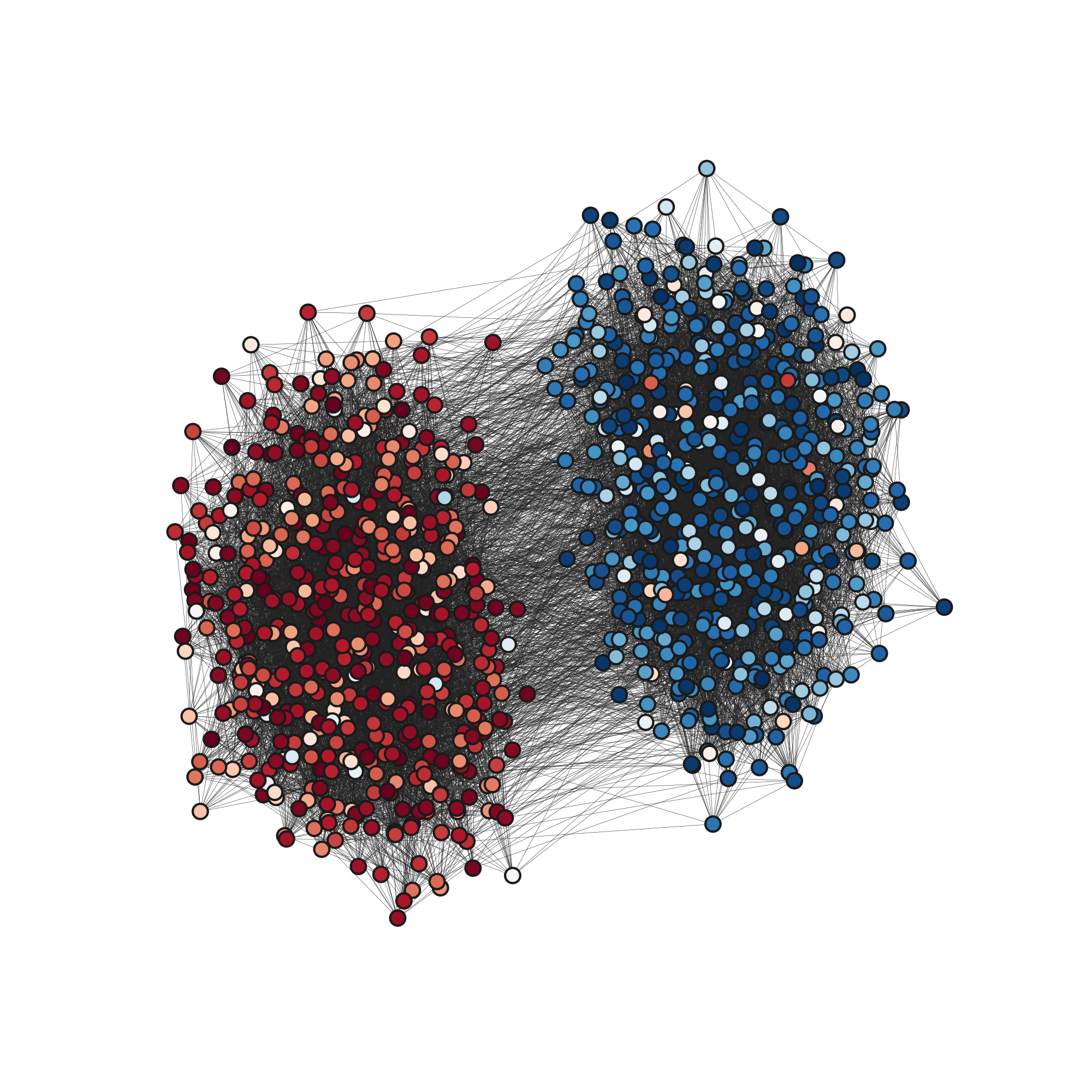}
    	\caption{Initial Graph Structure} \label{fig:sbm_pre}	
    \end{subfigure}
    \\
    \begin{subfigure}{0.03\linewidth}
    	\raisebox{0.2in}{\includegraphics[width=\linewidth]{colormap}}
    \end{subfigure}
    \begin{subfigure}{0.23\linewidth}
        \centering
    	\includegraphics[width=\linewidth]{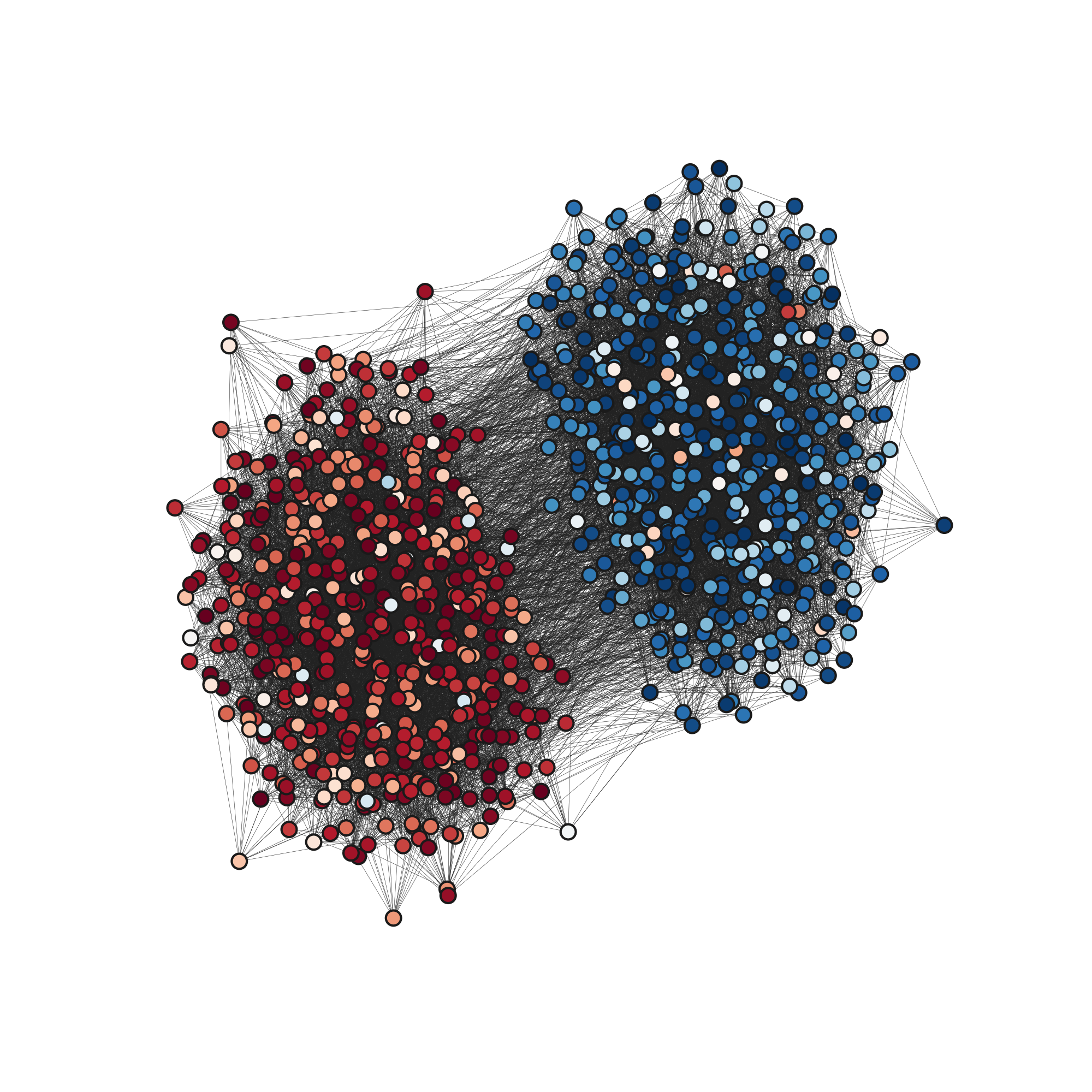}
    	\caption{Random} \label{fig:sbm_post_random_add}	
    \end{subfigure}
    \begin{subfigure}{0.23\linewidth}
    	\includegraphics[width=\linewidth]{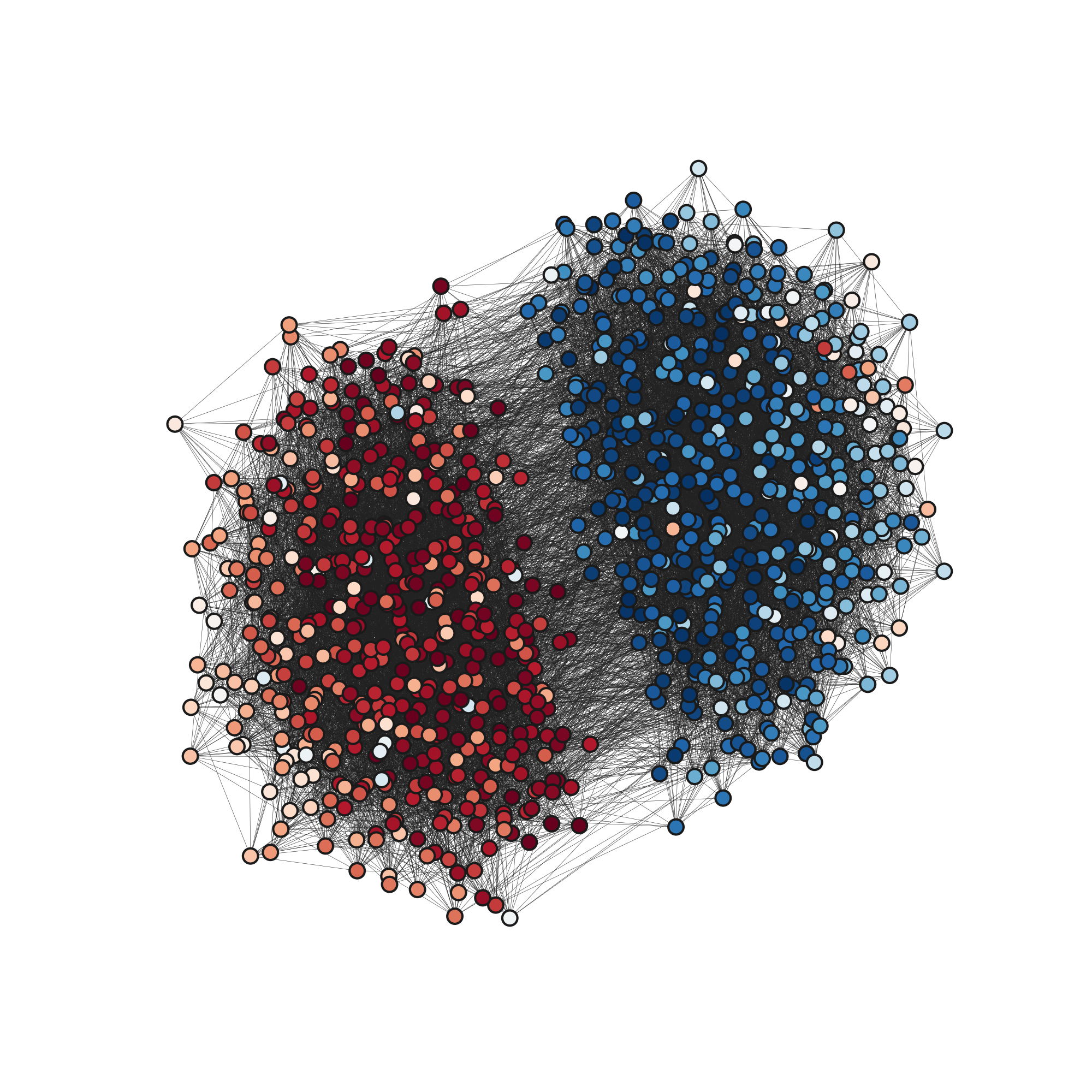}
    	\caption{DS} \label{fig:sbm_post_max_dis}	
    \end{subfigure} 
    \begin{subfigure}{0.23\linewidth}
    	\includegraphics[width=\linewidth]{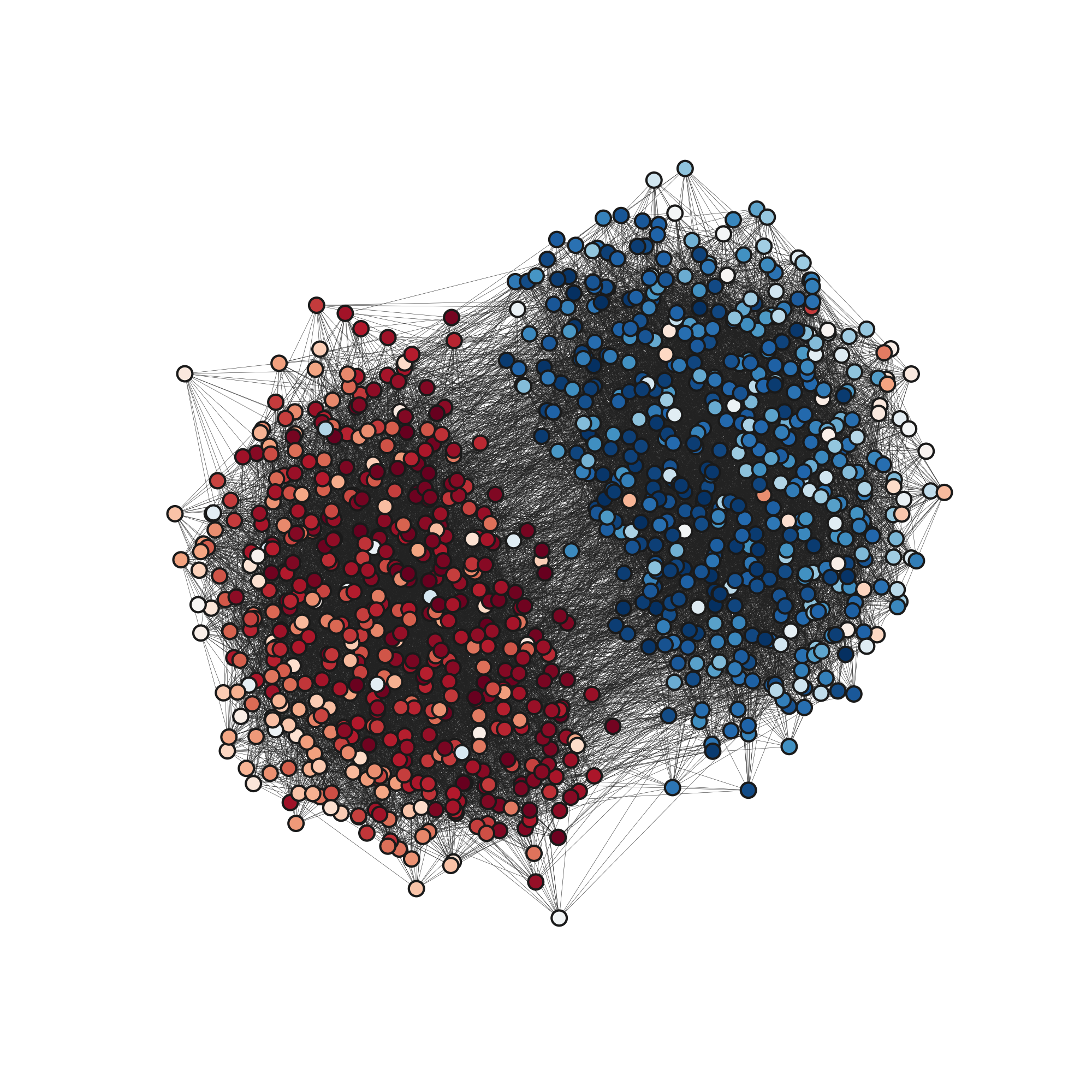}
    	\caption{CD} \label{fig:sbm_post_max_grad}	
    \end{subfigure}   
    \begin{subfigure}{0.23\linewidth}
    	\includegraphics[width=\linewidth]{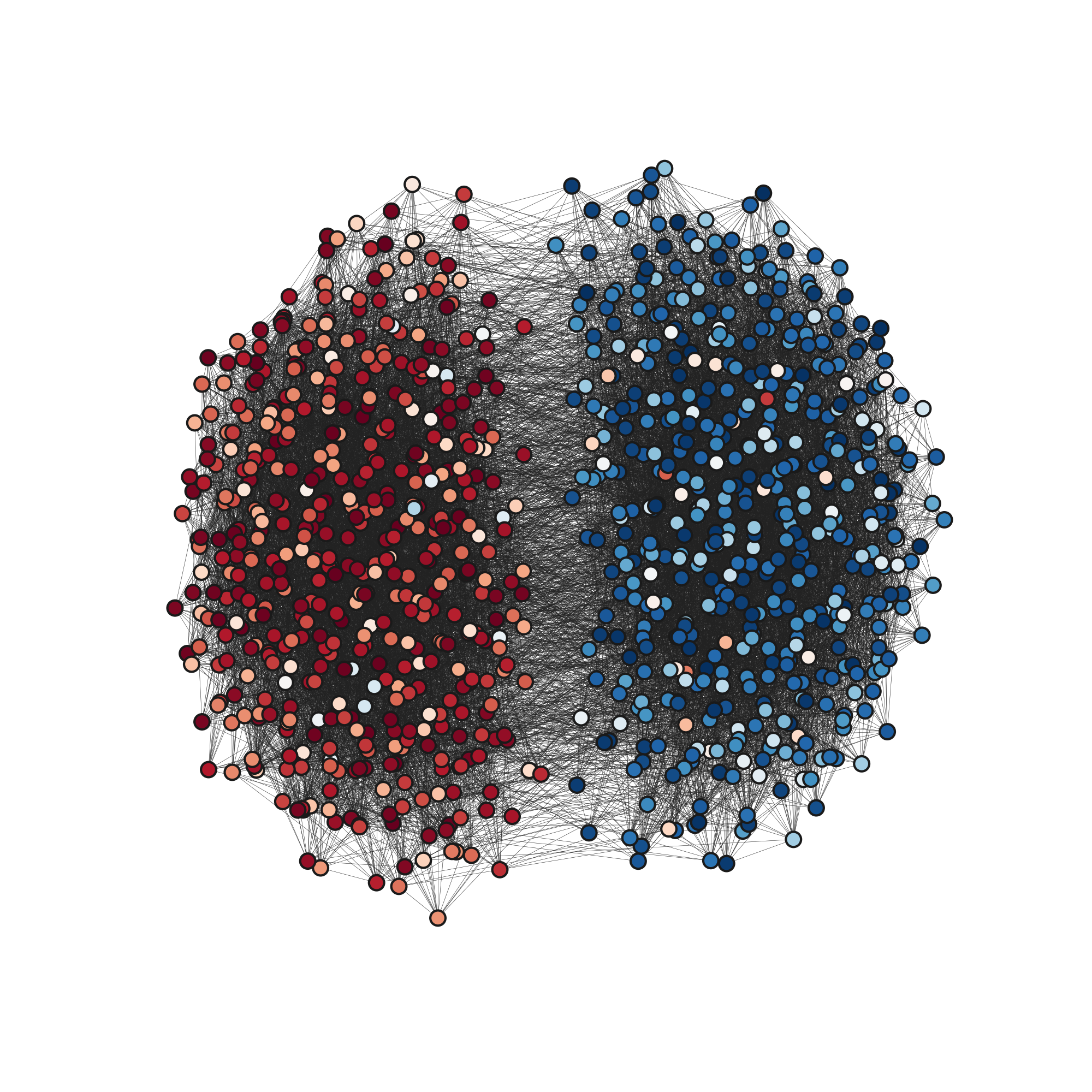}
    	\caption{FD} \label{fig:sbm_post_max_fiedler_diff}	
    \end{subfigure} 
    \caption{Evaluation of the planner's heuristics on the stochastic block model graph. Panel (a) shows the reduction achieved as the planner gradually adds edges. Panel (b) shows the initial network, while (c)-(f) visualize the network after the planner has exhausted their budget according to each heuristic. Vertices are colored according to their innate opinions.}
\end{figure}

\begin{figure}[ht]
    \centering
    \begin{subfigure}{0.53\linewidth}
        \centering
	    \includegraphics[width=\linewidth]{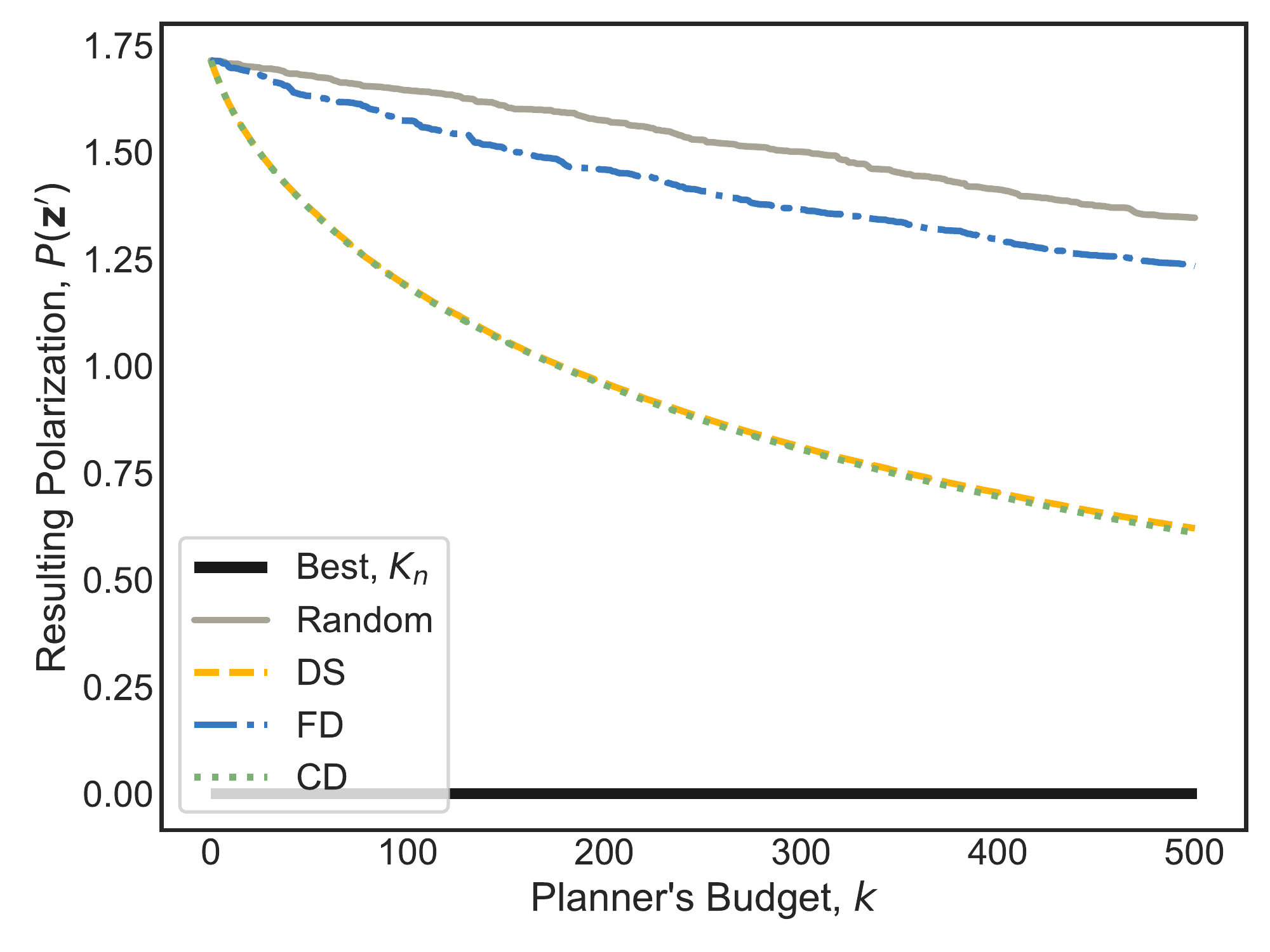}
        \caption{Reduction of Polarization}
        \label{fig:pol_pa}
    \end{subfigure}
    \begin{subfigure}{0.05\linewidth}
    	\raisebox{0.2in}{\includegraphics[width=\linewidth]{colormap}}
    \end{subfigure}
    \begin{subfigure}{0.4\linewidth}
    	\includegraphics[trim=0 10 0 0, clip, width=\linewidth]{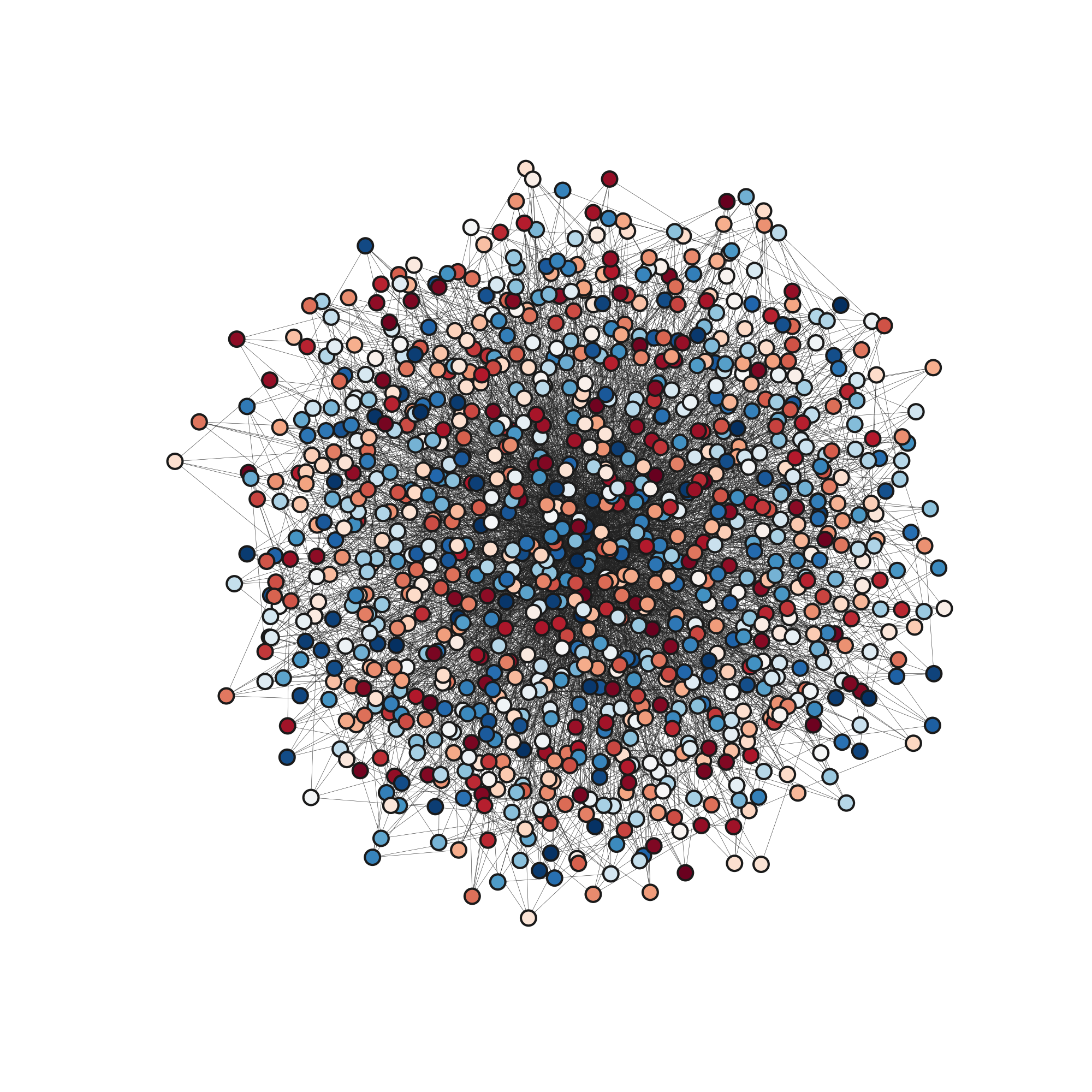}
    	\caption{Initial Graph Structure} \label{fig:pa_pre}	
    \end{subfigure}
    \\
    \begin{subfigure}{0.03\linewidth}
    	\raisebox{0.2in}{\includegraphics[width=\linewidth]{colormap}}
    \end{subfigure}
    \begin{subfigure}{0.23\linewidth}
        \centering
    	\includegraphics[width=\linewidth]{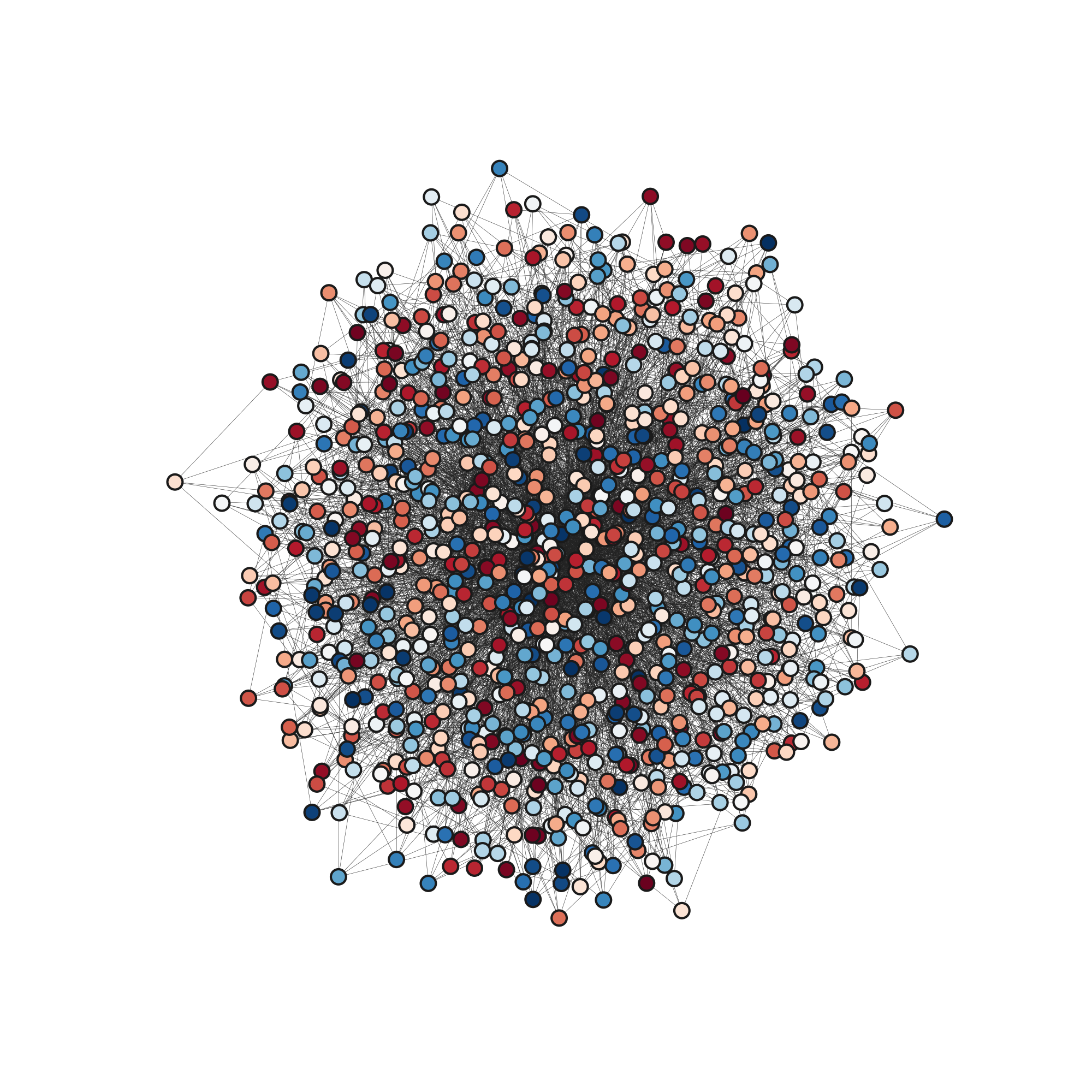}
    	\caption{Random} \label{fig:pa_post_random_add}	
    \end{subfigure}
    \begin{subfigure}{0.23\linewidth}
    	\includegraphics[width=\linewidth]{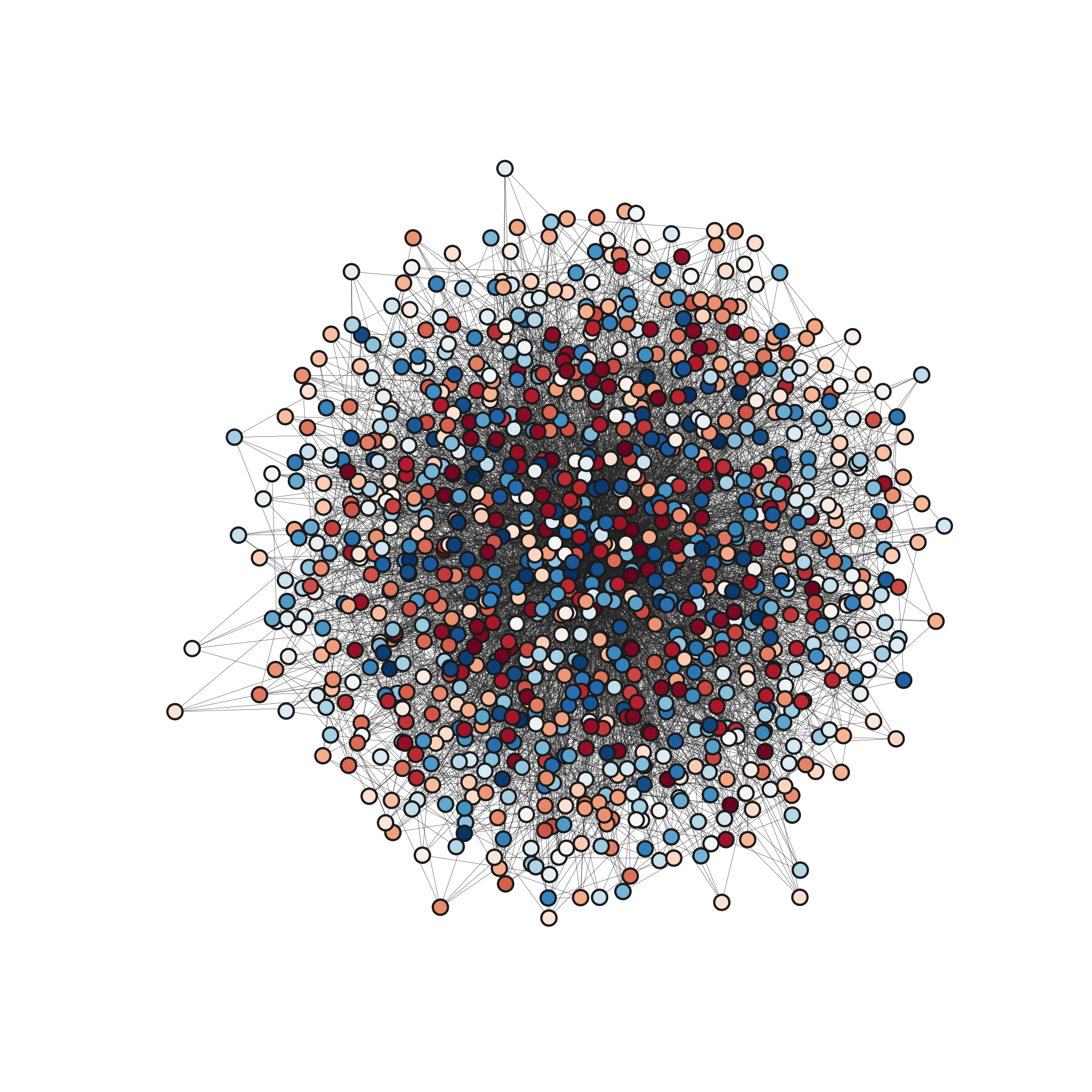}
    	\caption{DS} \label{fig:pa_post_max_dis}	
    \end{subfigure} 
    \begin{subfigure}{0.23\linewidth}
    	\includegraphics[width=\linewidth]{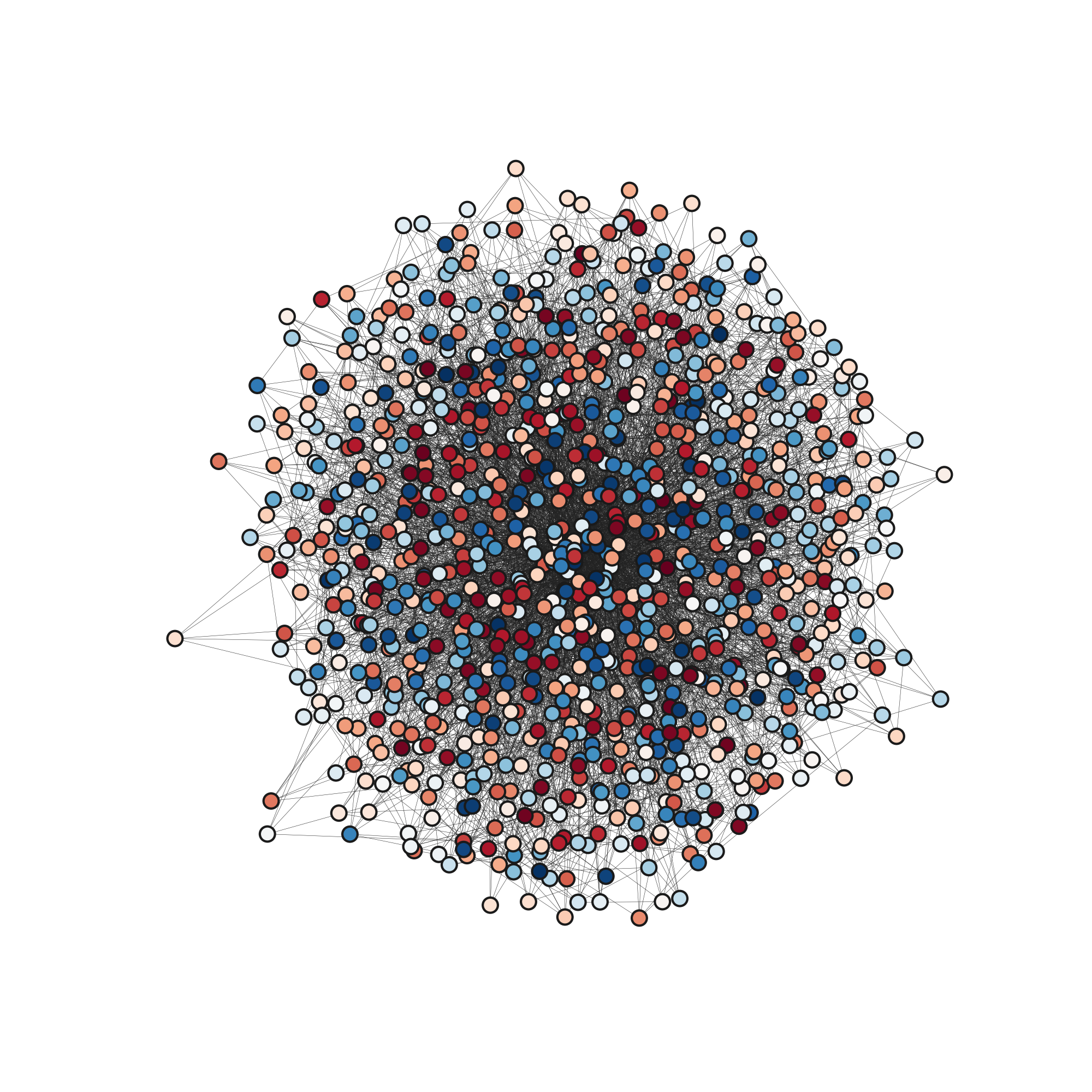}
    	\caption{CD} \label{fig:pa_post_max_grad}	
    \end{subfigure}   
    \begin{subfigure}{0.23\linewidth}
    	\includegraphics[width=\linewidth]{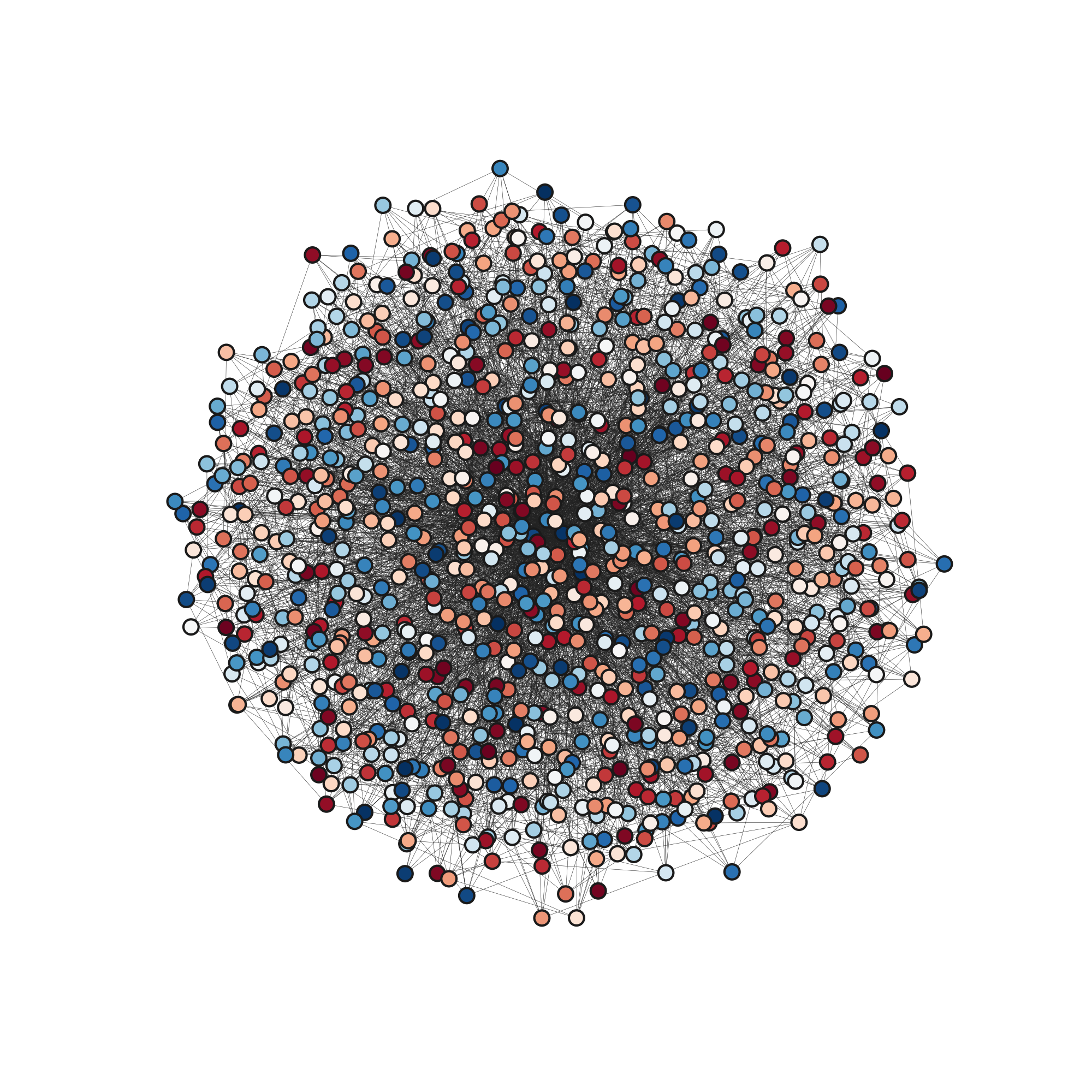}
    	\caption{FD} \label{fig:pa_post_max_fiedler_diff}	
    \end{subfigure} 
    \caption{Evaluation of the planner's heuristics on the preferential attachment graph. Panel (a) shows the reduction achieved as the planner gradually adds edges. Panel (b) shows the initial network, while (c)-(f) visualize the network after the planner has exhausted their budget according to each heuristic. Vertices are colored according to their innate opinions.}
\end{figure}

\subsection{Real-World Networks}
\label{sec:emp_res:real}

The Twitter and Reddit datasets used in this section were first collected by~\citet{De2014}, and used by both~\citet{Chen2022} and~\citet{Musco2018} in recent work. An additional dataset comprised of political blogs was collected by~\citet{adamic2005political} and used in~\citet{matakos2017measuring, matakos2020tell}.

\paragraph{Twitter:} 
This network reflects individuals who tweeted about a Delhi assembly debate in 2013. The network is shown in Fig.~\ref{fig:tw_pre}, and mainly consists of two communities.

Fig.~\ref{fig:pol_tw} shows the reduction in polarization achieved by the planner when applying each of the heuristics. Notably, all heuristics outperform our simple baseline. For the two best-performing heuristics, the first 50 edges modified reduce polarization by about a factor of two, and the subsequent 50 achieve a similar fractional reduction. This highlights both the substantial effect that the planner can have with minimal modifications to the graph, along with the diminishing returns of their budget.

The networks resulting from the planner's heuristics are shown in Figs.~\ref{fig:tw_post_random_add}-f. There are notable reductions in the strength of community structures. While less effective in reducing polarization, the Fiedler vector-based heuristic (FD) appears to smooth out communities the most.

\paragraph{Reddit:} 
This network was generated by following Reddit users who posted in a politics forum. Three isolated vertices are removed in preprocessing. Fig.~\ref{fig:rd_pre} shows that the initial network appears to be tightly clustered, and Table~\ref{tab:comp_results} indicates that it exhibits an extremely small level of polarization.

For any non-baseline heuristic, the full budget reduces polarization by almost a factor of four. For the best-performing heuristics, this reduction is by nearly an order of magnitude. We observe greatly diminishing returns, with the most significant reduction achieved with the first few edges modified. Moreover, the best-performing heuristics come close to achieving the globally optimal solution after fully exhausting the budget.

Only minor changes are observed in the resulting graph structures. Figs.~\ref{fig:rd_post_random_add}, \ref{fig:rd_post_max_dis}, and~\ref{fig:rd_post_max_grad} look almost identical to the initial network. In contrast, the graph in Fig.~\ref{fig:rd_post_max_fiedler_diff} does not have as dense a core, and appears to be more evenly connected. Since maximizing the spectral gap results in the graph behaving similarly to an expander, which (informally) is equally well-connected across all cuts, this is to be expected.

\paragraph{Blogs:} 
This network was collected by aggregating online directories of political blogs around the 2004 US elections. Note that vertices in this network represent blogs -- not individuals as in the previous datasets. Each blog was identified as either `conservative' or `liberal', which we encode by innate opinions of $0$ or $1$, respectively. Observe in Table~\ref{tab:comp_results} that this network exhibits extremely large values of polarization and homophily, and a small spectral gap.

We find consistent reductions in polarization with all heuristics -- including the baseline. This network is unique in that the community structure largely mirrors the innate opinions. That is, the mean-centered innate opinions vector is highly collinear with the Fiedler vector. Hence, both the DS and FD heuristics will choose to modify edges between the two communities. Furthermore, a large fraction of the non-edges span the two communities -- a randomly chosen edge is therefore likely to bridge the two.

Fig.~\ref{fig:bg_post_random_add} shares with Fig.~\ref{fig:bg_pre} a tightly-knit core, with a few vertices at the extremities. In contrast, Figures~\ref{fig:bg_post_max_dis}-f depict networks that are more uniformly connected. As before, we find feature this to be most observable with use of the FD heuristic.

\subsection{Synthetic Datasets}
\label{sec:emp_res:synth}

These heuristics are also applied to three canonical models of random graphs. \change{In the first two models, the number of edges grows quadratically in the number of vertices (for fixed parameters). However, in our final model, the number of edges is always linear in the number of vertices. Therefore, one may expect that the impact of the planner's $O(n)$ edges is greatest in the sparser model -- but we will see that this is not the case.}

\paragraph{Erd\H{o}s-R\'enyi:} 
A graph from this model connects each pair of vertices independently with a fixed probability $p\in[0,1]$. We take $n=1000$ and $p = 0.02$, although the results are qualitatively similar for different values. The innate opinions are independent uniform random variables in $[0,1]$.

This model produces homogeneous, well-connected networks, which are good spectral approximations of the complete graph $K_n$~\citep{hoffman2021spectral}. This can be seen through the large initial spectral gap in Table~\ref{tab:comp_results}. Therefore, according to Proposition~\ref{prop:p_bounds} it is natural to expect polarization to be small. Nonetheless, all heuristics fail to significantly reduce polarization. A comparison with the random baseline is particularly interesting in this model, as it results in another Erd\H{o}s-R\'enyi graph, but with a slightly larger value of $p$.

Few changes can be seen among Figs.~\ref{fig:er_post_random_add}-f. However, there are few vertices with extreme opinions in the fringes of Fig.~\ref{fig:er_post_max_dis}. Instead, these vertices tend to be concentrated in the center of the graph. This aligns with the most negative assortativity seen in Table~\ref{tab:comp_results}. Notably, this feature is not visible in Figures~\ref{fig:er_post_max_fiedler_diff} or~\ref{fig:er_post_random_add}.

\paragraph{Stochastic Block Model:} 
A graph drawn from a stochastic block model can replicate community structures, and is shown in Fig.~\ref{fig:sbm_pre}. This random graph on $n=1000$ vertices with two equal-sized communities is generated by mirroring the methodology in~\citet{Chen2022}. Specifically, the probability of inter-community edges is given by $q = 0.005$, and the probability of intra-community edges is $p=0.05$. Since $p>q$, we expect to see strong communities. The innate opinions of vertices in each community are drawn independently from either $\mathrm{Beta}(1,5)$ or $\mathrm{Beta}(5,1)$, such that the distribution of opinions mirrors the graph's community structure.

In Fig.~\ref{fig:pol_sbm}, a nearly identical reduction in polarization can be seen for all non-random heuristics. This occurs because the mean-centered innate opinions are highly collinear with the Fiedler vector, which partitions the graph into its two communities. Therefore, both the DS and FD strategies will add edges between the two communities. Similarly to the Erd\H{o}s-R\'enyi setting, the random baseline yields another stochastic block graph, but with slightly larger parameters $p$ and $q$.

Qualitatively, all three heuristics can be seen to bring the two communities closer together. However, in Fig.~\ref{fig:sbm_post_max_dis} and~\ref{fig:sbm_post_max_grad}, the vertices with extreme opinions are brought closer to the center. As before, this is not observed in Fig.~\ref{fig:sbm_post_max_fiedler_diff}.

\paragraph{Preferential Attachment Model:} 
This model generates graphs with power-law degree distribution, often known as scale-free or Barab\'asi-Albert networks~\citep{barabasi1999emergence}. Again, we follow a similar procedure to~\citet{Chen2022}, with $n=1000$ vertices added sequentially. Each incoming vertex connects to at most $m=5$ vertices, where the likelihood of connecting to a particular vertex is proportional to its degree.

This graph tends to exhibit a small, highly interconnected core, and many vertices with low degree. This structure is not conducive to low polarization, as we see in Fig.~\ref{fig:pol_pa}. The best-performing heuristics manage to reduce polarization by just over a factor of two, whereas the others see only negligible fractional reductions. Notably, the FD heuristic only slightly outperforms the baseline. \change{These observations are a result of the Friedkin-Johnsen model -- higher-degree nodes experience the smallest marginal effects of increased edge weight. Since the preferential attachment model yields a heavy-tailed degree distribution, a larger fraction of nodes are resilient to the planner's modifications. These nodes will also exert large amounts of influence on their neighbors due to their high degree. We therefore believe that the structural properties of the preferential attachment graph dampen the planner's effectiveness.}

Qualitatively, Figs.~\ref{fig:pa_post_max_dis}-f appear similar to the original network in Fig.~\ref{fig:pa_pre}. We do not see strong changes in the structure, which aligns with the minor differences in homophily and spectral gap in Table~\ref{tab:comp_results}.

\section{Discussion and Conclusion}
\label{sec:conclusion}

In this paper, we analyze the connection between the structure of social and information networks and opinion polarization.

First, we establish a relationship between the ratio of \emph{expressed} to \emph{innate} polarization. This ratio is controlled by structural properties of the graph, such as the degree profile and isoperimetric number (i.e. Cheeger constant). In particular, the worst-case polarization depends directly on the spectral gap of the Laplacian. Consequentially, we show that the complete graph achieves the global minimum for polarization. This result aligns with one's intuition -- bottlenecks in the graph are liabilities to a consensus.

Next, this paper presents two variations of the planner's problem -- one in which the innate opinions of the population are known, and another in which they are chosen adversarially. In the first, an expression is derived for the exact difference in polarization when weight is added to a single edge. We find that strengthening the connections between vertices with large \emph{expressed} disagreement reduces polarization. However, it is seen as costly for individuals to interact with differently-minded others~\citep{Bindel2015}. Therefore, reaching a consensus, while arguably beneficial for the population, may prove costly to individuals. We also present a second setting wherein the planner defends the network against adversarially-controlled innate opinions. Here, we prove that the planner aims to maximize the spectral gap. We then show the effectiveness of a strategy based on the Fiedler vector $\mathbf{v}$ -- the eigenvector corresponding to the spectral gap. Intuitively, this vector partitions the graph based on the signs of its elements, and the planner should strengthen edges across the cut.

Finally, we evaluate the performance of four heuristics on several real-world and synthetic networks. We find that all strategies may smooth out community structures \change{-- often referred to as `echo-chambers'}. Furthermore, when there are no strong communities present, the Fiedler vector-based strategy is able to reduce polarization without simultaneously reducing homophily. With this approach, a reduction in polarization did not necessitate direct connections between opposite-minded individuals. However, this strategy performed significantly worse in several networks. We believe that the difference reflects how much of the polarization is driven by the particular values of opinions. For instance, all three heuristics perform similarly when \change{the profile of opinions mirrors the graph structure, and therefore both contribute similarly to the level of polarization. Specifically, all heuristics behave similarly when the mean-centered innate opinions $\widetilde{\mathbf{s}}$ are highly aligned with the Fiedler vector} -- this observation can be seen most easily in the blogs and stochastic block model networks.

There are several interesting directions for future theoretical work. First, this paper has only derived bounds for single-edge modifications. It is an open problem to characterize the effects on polarization of more substantial perturbations to the graph structure. Furthermore, it may be possible to study the planner's effectiveness within various classes of random graph models. For instance, what fraction of non-edges in an Erd\H{o}s-R\'enyi graph reduce polarization when added?

\change{
At first glance, the results paper are severely limited by the model of opinion dynamics. Experimental research has shown that exposure to differing opinions may increase polarization~\citep{bail2018exposure}. Motivated by these observations, many models incorporate non-attractive forces between opinions -- see \citet{cornacchia2020polarization,rahaman2021model} for extensions of the FJ model, and \citet{hegselmann2002opinion,hazla2019geometric,gaitonde2021polarization} for geometrically-inspired approaches. Within the broader problem of understanding how social network structures relate to polarization, this paper provides only a first step -- the analytical tractability of the FJ model comes at the expense of expressibility. Nonetheless, we believe the results in this paper may be generalizable to a wider class of opinion dynamics models that exhibit attraction -- which includes all of the above examples. For instance, one could modify a `disagreement-seeking' heuristic to only consider non-saturated edges between individuals within each others' radius of attraction. The study of polarization-reducing strategies in these more complex models of opinion interaction is a rich and fruitful area for future work.
}

Several networks showed large reductions in polarization with \change{a small number of edge modifications}. However, in the Erd\H{o}s-R\'enyi and preferential attachment networks, this paper's heuristics did not have as strong of a performance. Beyond our speculation, it remains to be understood what properties of these networks may limit the planner's effectiveness, or what minimal budget is necessary for a fixed fractional reduction in polarization. Moreover, it is not yet clear if this observation is a feature of the heuristics or the graph itself -- are we closely approximating the true optimal solution?

In this study, we have shown that strengthening ties between disagreeing individuals is an effective strategy for reducing social polarization. Therefore, if polarization is instead increasing as society becomes increasingly connected, then both individuals and social media platforms may be failing to contribute to discourse between opposing perspectives.

\bibliography{refs.bib}

\newpage
\appendix

\section{Proofs}
\label{sec:proofs}

First, we specify notation. Let $I$ denote the identity matrix, $\vec{\mathbf{1}}$ the all-ones vector, and $\mathbf{e}_{i}$ the $i$-th standard basis vector -- all of appropriate dimension. Additionally, for $\mathbf{x}\in \mathbb{R}^{n}$, we write $\overline{x} := \frac{1}{n}\sum_{j=1}^n x_j$ to denote the mean of its entries and $\widetilde{\mathbf{x}} := \mathbf{x} - \overline{x} \vec{\mathbf{1}}$ for the mean-centered version of~$\mathbf{x}$. For a square matrix $A\in \mathbb{R}^{n \times n}$, we write $A_{i}$ for the $i$-th column of $A$. The eigenvalues of $A$ in descending order are given by $\lambda_n(A) \ge \lambda_{n-1}(A) \ge \ldots \ge \lambda_1(A)$). We frequently use the notation $\lambda_{\max}(A) = \lambda_{n}(A)$ and $\lambda_{\min}(A) = \lambda_{1}(A)$ to denote the largest and smallest eigenvalue of~$A$, respectively.

Given an initial graph $\mathcal{G}$ and any other graph $\mathcal{G}'$, define $T \equiv T(\mathcal{G}';\mathcal{G}) \in \mathbb{R}^{n \times n}$ to be

\begin{equation}
	\label{eq:T_def}
	T := (I+L)^{-1}(I+L'),
\end{equation}
where $L$ and $L'$ denote the combinatorial Laplacians of $\mathcal{G}$ and $\mathcal{G}'$, respectively. The dependence of~$T$ on $\mathcal{G}$ and $\mathcal{G}'$ will be clear from context and hence omitted. The expressed opinions $\mathbf{z}'$ can be computed in terms of $T$ and the original expressed opinions as follows:

\begin{equation}
    \mathbf{z}'  = T^{-1}\mathbf{z}.
\end{equation}
This matrix is also useful in allowing us to express the new value of polarization in terms of the expressed opinions on the initial graph. After some algebra, we have that

\begin{equation}
    P(\mathbf{z}') = \widetilde{\mathbf{z}}^{T}(T^{-1})^{T}T^{-1}\widetilde{\mathbf{z}},
\end{equation}
where we used~\eqref{eq:T_def}. The spectrum of $T$ will be critical for theoretical results.

Recall the definition of the isoperimetric number (also known as the Cheeger constant) of a graph from~\eqref{eq:isoperimetric}. The following simple Lemma is useful in many of the subsequent proofs.

\begin{lemma}
	\label{lem:L_eigenvalues}
	Let $d_{\max}$ and $d_{\min}$ denote the maximum and minimum weighted degrees of $\cG$. Additionally, let $L$ be the combinatorial Laplacian of $\cG$, and let $\lambda_{n} \geq \lambda_{n-1} \geq \ldots \geq \lambda_{2} \geq \lambda_{1} = 0$ denote its eigenvalues in decreasing order. 
    Then, we have that 
        
	\begin{equation}
        \label{eq:spectral_gap_bounds}
	    \frac{1}{2} d_{\min} h_{\cG}^{2} \le \lambda_{2} \le 2 d_{\max} h_{\cG},
	\end{equation}
	and also 
 
	\change{
    \begin{equation}
        \label{eq:largest_eigenvalue_bound}
        \lambda_{n} \le (2 d_{\max}) \wedge (\bar w n).
	\end{equation}
    }
\end{lemma}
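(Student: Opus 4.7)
The plan is to prove the three claims separately, using the variational (Courant-Fischer) characterization of eigenvalues in each case.

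For the upper bound $\lambda_{2}(L)\le 2d_{\max}h_{\cG}$, I would use a direct test-vector argument. Let $X^*$ achieve the minimum in the definition of $h_\cG$, and take $x=\mathbf{1}_{X^*}-(|X^*|/n)\vec{\mathbf{1}}$, which is automatically orthogonal to $\vec{\mathbf{1}}$. A direct computation gives $x^{T}Lx=\sum_{(i,j)\in E(X^*,X^{*c})}w_{ij}$ and $\|x\|^{2}=|X^*|\,|X^{*c}|/n$, so
\[
\lambda_{2}(L)\le\frac{x^{T}Lx}{\|x\|^{2}}=\frac{n\,|\partial X^*|_{w}}{|X^*|\,|X^{*c}|}\le\frac{2\,|\partial X^*|_{w}}{\min(|X^*|,|X^{*c}|)}.
\]
Since $|X|\ge\mathrm{vol}(X)/d_{\max}$ for every $X$, this converts the vertex-count denominator into the weighted-volume denominator appearing in $h_\cG$, yielding the claim.

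For the lower bound $\lambda_{2}(L)\ge \tfrac{1}{2}d_{\min}h_{\cG}^{2}$, I would adapt the standard Cheeger argument, working directly with $L$ rather than passing through the normalized Laplacian. Let $\phi$ be a unit Fiedler vector of $L$, decompose $\phi=\phi^{+}-\phi^{-}$ with disjoint supports, and (replacing $\phi$ by $-\phi$ if necessary) assume $\mathrm{vol}(\mathrm{supp}(\phi^{+}))\le\mathrm{vol}(V)/2$. A short computation exploiting $\phi^{+}_{i}\phi^{-}_{i}=0$ shows that on every edge $(\phi^{-}_{i}-\phi^{-}_{j})(\phi^{+}_{i}-\phi^{+}_{j})\le 0$, giving $(\phi^{+})^{T}L\phi^{+}\le\phi^{T}L\phi^{+}=\lambda_{2}(L)\|\phi^{+}\|^{2}$, where the last equality uses $L\phi=\lambda_{2}(L)\phi$ together with $\phi^{T}\phi^{+}=\|\phi^{+}\|^{2}$. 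Applying the coarea formula to $(\phi^{+})^{2}$, using that every superlevel set has volume at most $\mathrm{vol}(V)/2$ (so $h_\cG$ applies), and finishing with Cauchy-Schwarz on the factorization $(\phi^{+}_{i})^{2}-(\phi^{+}_{j})^{2}=(\phi^{+}_{i}-\phi^{+}_{j})(\phi^{+}_{i}+\phi^{+}_{j})$, yields $(\phi^{+})^{T}L\phi^{+}\ge\tfrac{h_{\cG}^{2}}{2}\sum_{i}d_{i}(\phi^{+}_{i})^{2}\ge \tfrac{d_{\min}h_{\cG}^{2}}{2}\|\phi^{+}\|^{2}$. Chaining this with the inequality from the eigenvector argument gives the stated bound.

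For the upper bound on $\lambda_{n}(L)$, the first estimate $\lambda_{n}(L)\le 2d_{\max}$ follows immediately from Gershgorin's disc theorem applied to $L$, since each diagonal is $d_{i}$ and the row's off-diagonal $\ell_{1}$ mass is also $d_{i}$. For $\lambda_{n}(L)\le\bar{w}n$, I would use the variational characterization and the identity $\sum_{i,j}(x_{i}-x_{j})^{2}=2n\|x\|^{2}-2(\vec{\mathbf{1}}^{T}x)^{2}$, together with the pointwise bound $w_{ij}\le\bar w$, to get $x^{T}Lx=\tfrac{1}{2}\sum_{i,j}w_{ij}(x_{i}-x_{j})^{2}\le\bar w n\|x\|^{2}$. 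Taking the minimum of the two bounds gives $\lambda_{n}(L)\le(2d_{\max})\wedge(\bar w n)$.

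The main obstacle is the lower Cheeger bound: the seemingly natural route via Cheeger's inequality for $\mathcal{L}$ ($\lambda_{2}(\mathcal{L})\ge h_{\cG}^{2}/2$) does not translate cleanly, because the orthogonality condition $y\perp D^{1/2}\vec{\mathbf{1}}$ in the generalized Rayleigh quotient for $\mathcal{L}$ is not the same as $x\perp\vec{\mathbf{1}}$ for $L$, and attempting to project introduces a degree-variance error term that can dominate $d_{\min}$ for highly irregular graphs. This is why I favor the direct $\phi^{\pm}$ argument, which transparently produces the factor $d_{\min}$ without any such loss.
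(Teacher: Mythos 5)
Your proof is correct, but it takes a genuinely different route from the paper's for the main two-sided bound on $\lambda_{2}(L)$. The paper never works with $L$ directly: it invokes the textbook Cheeger inequality $h_{\cG}^{2}/2 \le \lambda_{2}(\cL) \le 2h_{\cG}$ for the \emph{normalized} Laplacian as a black box, observes that $L = D^{1/2}\cL D^{1/2}$ is similar to $\cL D$, and then transfers the bound using Weyl's multiplicative eigenvalue inequalities for products of positive semidefinite Hermitian matrices ($\lambda_{i+j-n}(AB)\le\lambda_i(A)\lambda_j(B)$ and $\lambda_i(A)\lambda_j(B)\le\lambda_{i+j-1}(AB)$), which is exactly how it picks up the factors $d_{\max}$ and $d_{\min}$. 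So the obstacle you flag at the end --- that the orthogonality constraints for the Rayleigh quotients of $L$ and $\cL$ differ, so one cannot naively project --- is real for the naive route, but the paper circumvents it with the product-eigenvalue inequalities rather than by redoing Cheeger's argument. Your version is more self-contained and elementary: the centered-indicator test vector for the upper bound and the direct $\phi^{\pm}$ coarea/Cauchy--Schwarz argument for the lower bound prove the weighted, degree-dependent statement from scratch, at the cost of several pages of standard but delicate bookkeeping (sign of $(\phi^-)^T L\phi^+$, volume of $\mathrm{supp}(\phi^+)$, superlevel sets); the paper's version is a few lines but leans on two nontrivial external results. For $\lambda_n$, your Gershgorin and quadratic-form arguments are equivalent to the paper's operator-norm triangle inequality and domination $L \preccurlyeq L_{K_n}$, respectively. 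Both proofs are sound.
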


\begin{proof}
	For the normalized Laplacian $\cL$, the well-known Cheeger inequality (see, e.g.,~\citet{chung1997spectral}) gives that
	
    \begin{equation}
	   \frac{h_{\cG}^{2}}{2} \le \lambda_{2}(\cL) \le 2h_{\cG}.
    \end{equation}
    
	Notice that the eigenvalues of $L = D^{1/2}\cL D^{1/2}$ are equal to those of $\cL D$. Additionally, the ordered eigenvalues of $D$ are simply the degrees of $\cG$ in descending order. Since both $\cL$ and $D$ are positive semidefinite and Hermitian, we can apply a  Weyl multiplicative inequality from~\citet{horn1994topics} to establish that 
 
	\begin{equation}
        \label{eq:weyl1}
        \lambda_{i+j-n}(\cL D) \le \lambda_i(\cL)\lambda_j(D), \qquad \text{ if } \quad  i+j-n\ge 1
	\end{equation}
	and
 
	\begin{equation}
        \label{eq:weyl2}
	    \lambda_i(\cL)\lambda_j(D) \le \lambda_{i+j-1}(\cL D), \qquad \text{ if } \quad  i+j-1\le n.
	\end{equation}
 
	Choosing $i = 2$ and $j=n$ in~\eqref{eq:weyl1} gives that 

    \begin{equation}
    	\lambda_{2}(\cL D) \le \lambda_{2}(\cL)d_{\max} \le 2h_{\cG}d_{\max}.
    \end{equation}
	With $i = 2$ and $j = 1$ in~\eqref{eq:weyl2}, we have that

    \begin{equation}
    	\lambda_{2}(\cL D) \ge \lambda_{2}(\cL)d_{\min} \ge \frac{1}{2}h_{\cG}^{2}d_{\min}.
    \end{equation}
	Combining the previous two displays gives~\eqref{eq:spectral_gap_bounds}.
	
	The inequality~\eqref{eq:largest_eigenvalue_bound} can be proved using the triangle inequality. The largest eigenvalue of $\cL$ is equal to the operator norm of $D-A$. Since the norm of both $D$ and $A$ are upper bounded by $d_{\max}$, we conclude that $\lambda_n(L) \le 2d_{\max}$.

    \change{Let $L_{K_n} = \bar w \left( nI - \vec{\mathbf{1}}\vec{\mathbf{1}}^T\right)$ denote the combinatorial Laplacian of the complete graph, where all edge weights are equal to $\bar w$. Since $L_{K_n} \succcurlyeq L$, for any $L$, we have that $\bar w n = \lambda_n(L_{K_n}) \ge \lambda_n(L)$.}
\end{proof}

\subsection{Proofs of Section~\ref{sec:contraction_and_pol}}

\begin{proof}[Proof of Proposition~\ref{prop:p_bounds}]
    We seek to write $P(\mathbf{z})$ in a way that $P(\mathbf{s})$ appears. Recall that $\mathbf{z} = (I+L)^{-1} \mathbf{s}$ and also $\widetilde{\mathbf{z}} = (I+L)^{-1} \widetilde{\mathbf{s}}$. Therefore 
    
    \begin{equation}
        \label{eq:PzPs}
        P(\mathbf{z}) = \widetilde{\mathbf{z}}^{T} \widetilde{\mathbf{z}} = \widetilde{\mathbf{s}}^{T}(I+L)^{-2}\widetilde{\mathbf{s}}. 
    \end{equation}
    Towards the lower bound in the claim, we may use an eigenvalue bound to obtain that 
    
    \begin{equation}
        P(\mathbf{z}) 
        \geq \lambda_{\min} ( (I+L)^{-2} ) \widetilde{\mathbf{s}}^{T} \widetilde{\mathbf{s}} 
        = (1 + \lambda_{\max} (L))^{-2} P(\mathbf{s}). 
    \end{equation}
    From~\eqref{eq:largest_eigenvalue_bound} we have that \change{$\lambda_{\max} (L) \leq (2 d_{\max}) \wedge  \bar w n$}; plugging this into the display above we obtain the claimed lower bound. 
    
    For the upper bound, first note that the eigenvector corresponding to the largest eigenvalue of $(I+L)^{-2}$ is $\vec{\mathbf{1}}$. Since $\widetilde{\mathbf{s}}$ is orthogonal to $\vec{\mathbf{1}}$, we have from~\eqref{eq:PzPs} that 

    \begin{equation}
        P(\mathbf{z}) 
        \leq \lambda_{n-1} ( (I+L)^{-2} ) \widetilde{\mathbf{s}}^{T} \widetilde{\mathbf{s}}
         = (1 + \lambda_{2} (L))^{-2} P(\mathbf{s}).
    \end{equation}
    From~\eqref{eq:spectral_gap_bounds} we have that $\lambda_{2}(L) \geq (1/2) d_{\min} h_{\cG}^{2}$; plugging this into the display above we obtain the desired upper bound. 
\end{proof}

\begin{proof}[Proof of Corollary~\ref{cor:global_min}]
    Take any graph $\cG$ and innate opinions $\mathbf{s}$. Proposition~\ref{prop:p_bounds} implies that 

    \begin{equation}
        P(\mathbf{z}_{\cG}) \ge P(\mathbf{s}) (1 + (2d_{\max}) \wedge (\bar w n))^{-2} \ge P(\mathbf{s}) (1 + \bar w n)^{-2}.
    \end{equation}
    
    Turning to the complete graph $K_{n}$, recall that the spectrum of its Laplacian has $0$ as an eigenvalue with eigenvector $\vec{\mathbf{1}}$. It also has eigenvalue \change{$\bar w n$} with multiplicity $n-1$ and eigenspace containing all vectors orthogonal to $\vec{\mathbf{1}}$. Since $\widetilde{\mathbf{s}}^T\vec{\mathbf{1}} = 0$, we have \change{$(I+L_{K_n})^{-1}\widetilde{\mathbf{s}} = (1+\bar w n)^{-1} \widetilde{\mathbf{s}}$}. Recalling the definition of polarization, we obtain \change{$P(\mathbf{z}_{K_n}) = \norm{(I+L_{K_n})^{-1}\widetilde{\mathbf{s}}}^{2} 
    = (1+\bar w n)^{-2} \norm{\widetilde{\mathbf{s}}}^{2} 
    = (1+\bar w n)^{-2} P(\mathbf{s})$}. 
    Comparing with the display above, we see that $K_{n}$ minimizes polarization over all graphs \change{with maximal weight $\bar w$}. 
\end{proof}

\subsection{Proofs of Section~\ref{sec:given_ops}}

\begin{proof}[Proof of Lemma~\ref{lem:p_diff_exact}]
    To obtain the claim, we expand $P(\mathbf{z}^+)$ in a way that $P(\mathbf{z})$ appears. First, note that \change{$L^{+} = L + \delta L_{ij}$ and $L_{ij}= \mathbf{v}_{ij}\mathbf{v}_{ij}^{T}$, and hence we have that $T = I + \delta(I+L)^{-1}\mathbf{v}_{ij}\mathbf{v}_{ij}^T$}. The Sherman-Morrison formula thus gives that $T^{-1} = I - \frac{\delta}{1+\delta\mathbf{v}_{ij}^T(I+L)^{-1}\mathbf{v}_{ij}} (I+L)^{-1}\mathbf{v}_{ij}\mathbf{v}_{ij}^T$. Plugging this into the formula for polarization, we obtain that 
    
    \change{
    \begin{align}
        P(\mathbf{z}^+) &= \norm{T^{-1}\widetilde{\mathbf{z}}}^2 
        = \norm{\left( I - \frac{\delta}{1+\delta\mathbf{v}_{ij}^T(I+L)^{-1}\mathbf{v}_{ij}} (I+L)^{-1}\mathbf{v}_{ij}\mathbf{v}_{ij}^T \right)\widetilde{\mathbf{z}}}^2\\
        &= \widetilde{\mathbf{z}}^T\widetilde{\mathbf{z}} -  \frac{2 \delta \widetilde{\mathbf{z}}^T(I+L)^{-1}\mathbf{v}_{ij}\mathbf{v}_{ij}^T \widetilde{\mathbf{z}} }{1+\delta \mathbf{v}_{ij}^T(I+L)^{-1}\mathbf{v}_{ij}} + (\widetilde{\mathbf{z}}^T\mathbf{v}_{ij})^2 \frac{\delta^2 \mathbf{v}_{ij}^T(I+L)^{-2}\mathbf{v}_{ij}}{\left(1+\delta \mathbf{v}_{ij}^T(I+L)^{-1}\mathbf{v}_{ij}\right)^2}.
    \end{align} 
    }
    Noting that $P(\mathbf{z}) = \widetilde{\mathbf{z}}^T\widetilde{\mathbf{z}}$, and $D_{ij}(\mathbf{z}) = (\widetilde{\mathbf{z}}^T\mathbf{v}_{ij})^2$ leads to the desired expression after rearranging.
\end{proof}

\begin{proof}[Proof of Proposition~\ref{prop:dP_dw}]
    For simplicity of notation, let $A = I+L$. Then, for any $t>0$, we have
    
    \begin{equation}
        \label{eq:pol_deriv_fixed_t}
        \frac{P(\mathbf{z}_{L+tL_{ij}}) - P(\mathbf{z}_L)}{t}  = \frac{\widetilde{\mathbf{s}}^T\left[ \left(A+t\mathbf{v}_{ij}\mathbf{v}_{ij}^T\right)^{-2} - A^{-2}\right]\widetilde{\mathbf{s}}}{t}  
    \end{equation}
    
    Using the Sherman-Morrison formula, we can compute that
    
    \begin{equation}
        \label{eq:square_inv_sherman_morrison}
        \begin{split}
            \left[\left(A+t\mathbf{v}_{ij}\mathbf{v}_{ij}^T\right)^{-1}\right]^2 &= \left[A^{-1} - \frac{tA^{-1}\mathbf{v}_{ij}\mathbf{v}_{ij}^TA^{-1}}{1+t\mathbf{v}_{ij}^T A^{-1}\mathbf{v}_{ij}}\right]^2 \\
            &= A^{-2} - 2t\frac{A^{-2}\mathbf{v}_{ij}\mathbf{v}_{ij}^TA^{-1}}{1+t\mathbf{v}_{ij}^T A^{-1}\mathbf{v}_{ij}}
            + o(t)
        \end{split}
    \end{equation}
    where $\frac{o(t)}{t} = o(1) \to_{t\to 0} 0$. Plugging~\eqref{eq:square_inv_sherman_morrison} into~\eqref{eq:pol_deriv_def} and taking the limit concludes.
\end{proof}

\begin{proof}[Proof of Corollary~\ref{cor:p_diff_special}]
    Recall that $\mathbf{v}_{ij} = \mathbf{e}_i - \mathbf{e}_j$. Since $N(i) = N(j)$, a direct computation gives \change{$L \mathbf{v}_{ij} = (d_{i} - w_{ij}) \mathbf{v}_{ij}$. Consequently, we have $(I+L)^{-1}\mathbf{v}_{ij} = \frac{1}{1 + d_i - w_{ij}} \mathbf{v}_{ij}$}. Plugging this into Lemma~\ref{lem:p_diff_exact} and simplifying yields the desired result. 
\end{proof}

\begin{proof}[Proof of Theorem~\ref{thm:p_diff_bounds}]
    The proof of this Theorem follows from bounding the terms in Lemma~\ref{lem:p_diff_exact}.
    
    First, we show the upper bound. Notice that \change{$\frac{\delta^2 \mathbf{v}_{ij}^T(I+L)^{-2}\mathbf{v}_{ij}}{1+\delta \mathbf{v}_{ij}^T(I+L)^{-1}\mathbf{v}_{ij}} \ge 0$,} so this term can be dropped. Through an eigenvalue bound we also find that 
    
    \change{
    \begin{equation}
        \frac{1}{1+\delta \mathbf{v}_{ij}^T(I+L)^{-1}\mathbf{v}_{ij}} \le \frac{1}{1+ 2\delta \lambda_{\min}((I+L)^{-1})} = \frac{1+\lambda_n(L)}{1+2\delta+\lambda_n(L)}.
    \end{equation}
    }
    Plugging these two observations into~\eqref{eq:p_diff_exact} and rearranging to find $\partial_{w_{ij}}P(L)$ concludes.
    
    For the lower bound, we have the following sequence of inequalities.
    
    \change{
    \begin{equation}
        \frac{\delta \mathbf{v}_{ij}^T(I+L)^{-2}\mathbf{v}_{ij}}{1+\delta \mathbf{v}_{ij}^T(I+L)^{-1}\mathbf{v}_{ij}} \le \frac{\delta \mathbf{v}_{ij}^T(I+L)^{-2}\mathbf{v}_{ij}}{1+\delta \mathbf{v}_{ij}^T(I+L)^{-2}\mathbf{v}_{ij}} \le \frac{2\delta }{2\delta  + (1+\lambda_{2}(L))^2}.
    \end{equation}
    }
    Therefore, by assumption and Lemma~\ref{lem:p_diff_exact}, we have that
    
    \change{
    \begin{align}
        P(\mathbf{z}) - P(\mathbf{z}^+) & \le \frac{\delta (z_i - z_j)^2}{1+\delta \mathbf{v}_{ij}^T(I+L)^{-1}\mathbf{v}_{ij}} [2\epsilon] \le \frac{2 \delta \epsilon (z_i - z_j)^2}{1+2\delta},
    \end{align}
    }
    where we used $\mathbf{v}_{ij}^T(I+L)^{-1}\mathbf{v}_{ij} \le 2$.
\end{proof}

\subsection{Proofs of Section~\ref{sec:adv_ops}}

\begin{proof}[Proof of Proposition~\ref{prop:robust_pol_min}]
    This proof requires only that we solve explicitly the adversary's optimization problem. 
    
    By construction, $\widetilde{\mathbf{s}}$ is orthogonal to $\vec{\mathbf{1}}$. As a result, the optimal solution for the adversary is $\sqrt{R}\mathbf{v}_2$, where $\mathbf{v}_2$ is the eigenvector corresponding to the spectral gap of $L'$, and their optimal value is:

    \begin{equation}
        \max_{\mathbf{s} \in \mathbb{R}^n: \norm{\mathbf{s}}_2^2 \le R} \ \mathbf{\widetilde{s}}^T\left( I+L'\right)^{-2}\mathbf{\widetilde{s}} = \frac{R}{\left(1+\lambda_2(L')\right)^2}.
    \end{equation}
    To minimize this quantity, it follows that the planner maximizes the spectral gap of~$L'$.
\end{proof}

\begin{proof}[Proof of Theorem~\ref{thm:spectral_gap_bds}]
    This proof uses a variation of a result by~\citet{Maas1987}. \change{The original result states that if a simple (unweighted, undirected) graph~$\cG^+_{\text{s}}$ is constructed by adding a non-edge $(i,j)$ to another simple graph $\cG_{\text{s}}$, we have
    
    \begin{equation}
        \min\left\{ \lambda_{2}(L_{\text{s}}) + \frac{\epsilon \alpha^2}{\epsilon + (2-\alpha^2)}, \lambda_{3}(L_{\text{s}}) - \epsilon \right\} \\
        \le \lambda_{2}(L^+_{\text{s}}) \le \min \{ \lambda_{2}(L_{\text{s}}) + \alpha^2,
        \lambda_{3}(L_{\text{s}}) \},
    \end{equation}
    }
    where $\alpha^2 = (v_i - v_j)^2$, and $\mathbf{v}$ is the eigenvector of $L$ corresponding to $\lambda_{2}(L)$.
    
    \change{
    It is possible to adapt the original proof to consider the case where weight $\delta$ is added to edge $(i,j)$. This result would yield:

    \begin{equation}
        \min\left\{ \lambda_{2}(L) + \frac{\epsilon \delta \alpha^2}{\epsilon + \delta (2-\alpha^2)}, \lambda_{3}(L) - \epsilon \right\}
        \le \lambda_{2}(L^+) \le \min \{ \lambda_{2}(L) + \delta\alpha^2, \lambda_{3}(L) \},
    \end{equation}
    The tightest lower bound is achieved by choosing

    \begin{equation}
        \epsilon^* = \frac{\beta - 2\delta}{2} + \left(  \left( \frac{\beta - 2\delta}{2}\right)^2 + \beta\delta(2-\alpha^2)\right)^{1/2},
    \end{equation}
    where $\beta = \lambda_{3}(L) - \lambda_{2}(L)$, so that both terms in the minimum are equal. First, we note that $\epsilon^* \ge \beta - 2\delta$, with equality when $\alpha = 2$. Additionally, the first term in the minimum is increasing in $\epsilon$, so therefore we have

    \begin{equation}
        \frac{\beta-2\delta}{\beta}\delta \alpha^2  \le \frac{(\beta-2\delta) \delta \alpha^2}{\beta-2\delta + \delta(2-\alpha^2)} \le \lambda_{2}(L^+) - \lambda_{2}(L) \le \alpha^2,
    \end{equation}
    as claimed since $\alpha^2 \ge 0$. Finally, note that $\lambda_2(L^+) \ge \lambda_2(L)$ as \change{$L^+ - L = \delta L_{ij} \succcurlyeq 0$}.
    }
\end{proof}

\begin{proof}[Proof of Corollary~\ref{cor:spectral_gap_red_bds}]
    Recall that we defined $P(L) = \frac{R}{\left(1+\lambda_2(L)\right)^2}$. We first prove the upper bound by using~\eqref{eq:spectral_gap_bounds}:
    
    \change{
    \begin{equation}
        \begin{split}
            \frac{1}{\left(1+\lambda_2(L)\right)^2} - \frac{1}{\left(1+\lambda_2(L^+)\right)^2} &\le \frac{1}{\left(1+\lambda_2(L)\right)^2} - \frac{1}{\left(1+\lambda_2(L)+\delta \alpha^2\right)^2} \\
            &\le \frac{\delta^2\alpha^4 + 2(1+\lambda_2(L))\delta \alpha^2 }{\left(1+\lambda_2(L)\right)^2\left(1+\lambda_2(L)+\delta\alpha^2\right)^2}.
        \end{split}
    \end{equation}
    Since $\delta\alpha^2 \ge 0$ and $\alpha^2 \le 2 \le 2(1+\lambda_2(L))$, we write $\alpha^4 \le 2(1+\lambda_2(L))\alpha^2$, and arrive at
    
    \begin{equation}
        \frac{1}{\left(1+\lambda_2(L)\right)^2} - \frac{1}{\left(1+\lambda_2(L^+)\right)^2} \le \frac{4\alpha^2 \left(\delta \vee \delta^2\right)}{\left(1+\lambda_2(L)\right)^3}
    \end{equation}
    as claimed. 
    
    The lower bound follows similarly by~\eqref{eq:spectral_gap_bounds}. For simplicity of notation, let $c = \max \left\{1 - \frac{2\delta}{\beta}, 0 \right\}$. Then,
    
    \begin{equation}
        \begin{split}
        \frac{1}{\left(1+\lambda_2(L)\right)^2} - \frac{1}{\left(1+\lambda_2(L^+)\right)^2} &\ge \frac{1}{\left(1+\lambda_2(L)\right)^2} - \frac{1}{\left(1+\lambda_2(L)+c\delta \alpha^2\right)^2} \\
        &\ge \frac{c^2\delta^2\alpha^4 + 2c\delta\alpha^2(1+\lambda_2(L)) }{\left(1+\lambda_2(L)\right)^2\left(1+\lambda_2(L)+c\delta\alpha^2\right)^2}.
        \end{split}
    \end{equation}
    
    Observe that $c^2\delta^2\alpha^4 \ge 0$, so this term can be dropped. Furthermore, $c\delta\alpha^2 \le 2\delta$, which gives us:
    
    \begin{equation}
        \frac{1}{\left(1+\lambda_2(L)\right)^2} - \frac{1}{\left(1+\lambda_2(L^+)\right)^2} \ge \frac{2 c\delta\alpha^2}{\left(1+2\delta+\lambda_2(L)\right)^3},
    \end{equation}
    as desired.
    }
\end{proof}

\newpage
\section{Additional Figures}
\label{app:figures}

\change{
In this short section, we present Figures showing how homophily (i.e., assortativity of innate opinions) and the spectral gap are affected by the planner's modifications. These provide greater detail than the initial and final values found in Table~\ref{tab:comp_results}.
}

\begin{figure}[ht]
    \centering
    \captionsetup{justification=centering}
    \begin{subfigure}{0.35\linewidth}
    	\includegraphics[width=\linewidth]{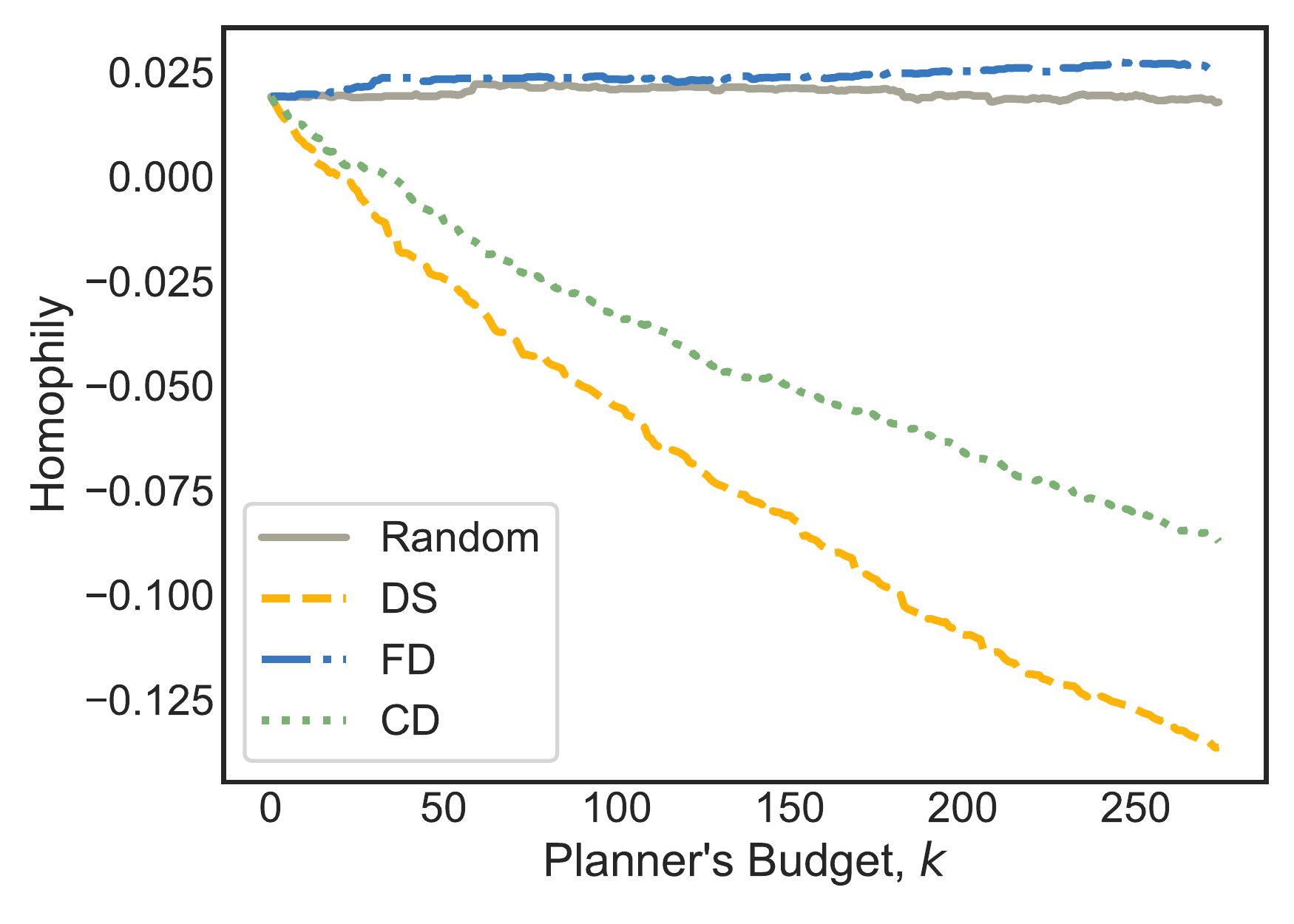}
    	\caption{Assortativity of innate opinions}
    	\label{fig:homophily_tw}	
    \end{subfigure} 
    \begin{subfigure}{0.35\linewidth}
        \centering
    	\includegraphics[width=\linewidth]{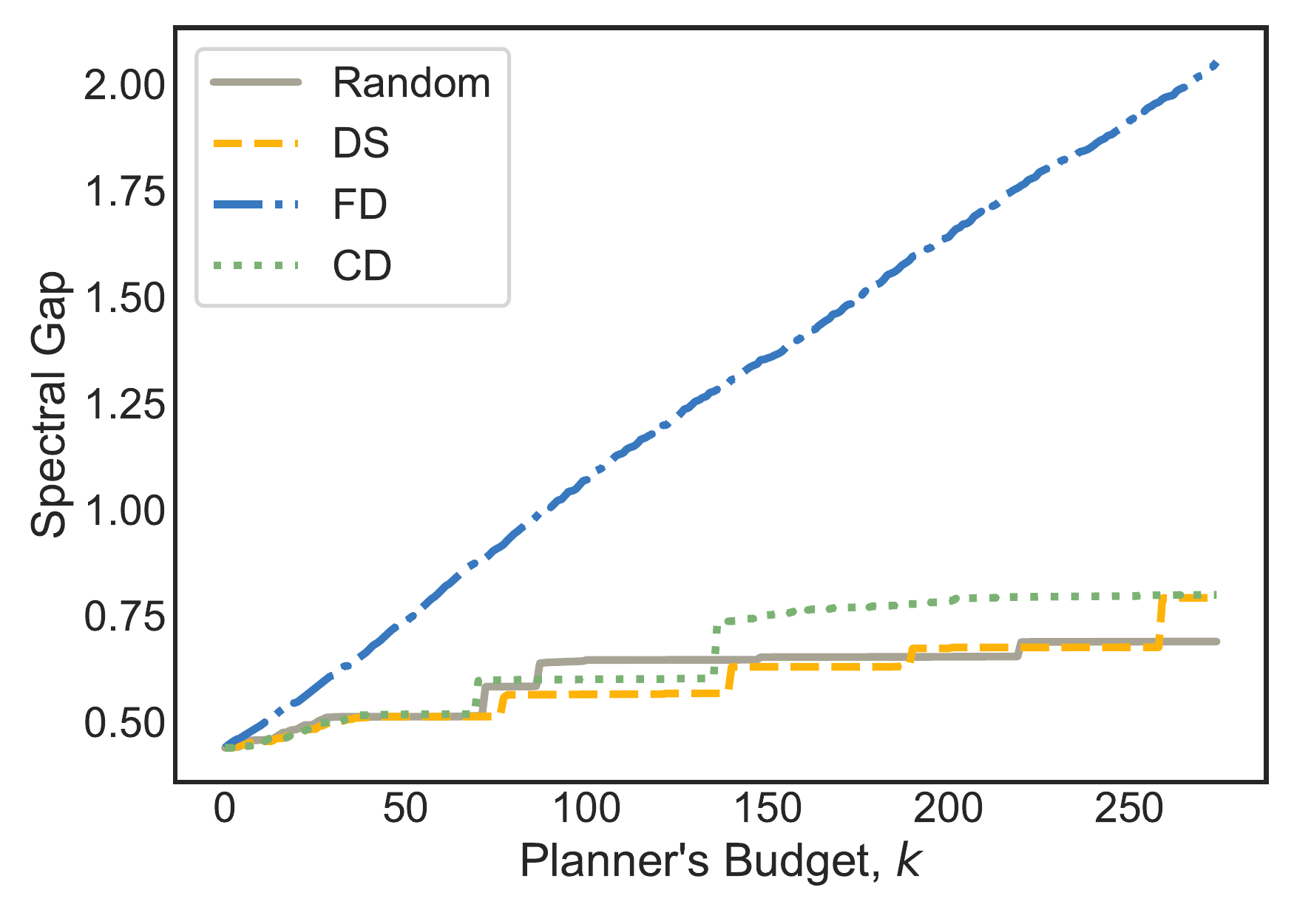}
    	\caption{Spectral gap} \label{fig:s_gap_tw}	
    \end{subfigure}
    \caption{Impacts of the planner's budget on the Twitter network.}
\end{figure}

\begin{figure}[ht]
    \centering
    \captionsetup{justification=centering}
    \begin{subfigure}{0.35\linewidth}
    	\includegraphics[width=\linewidth]{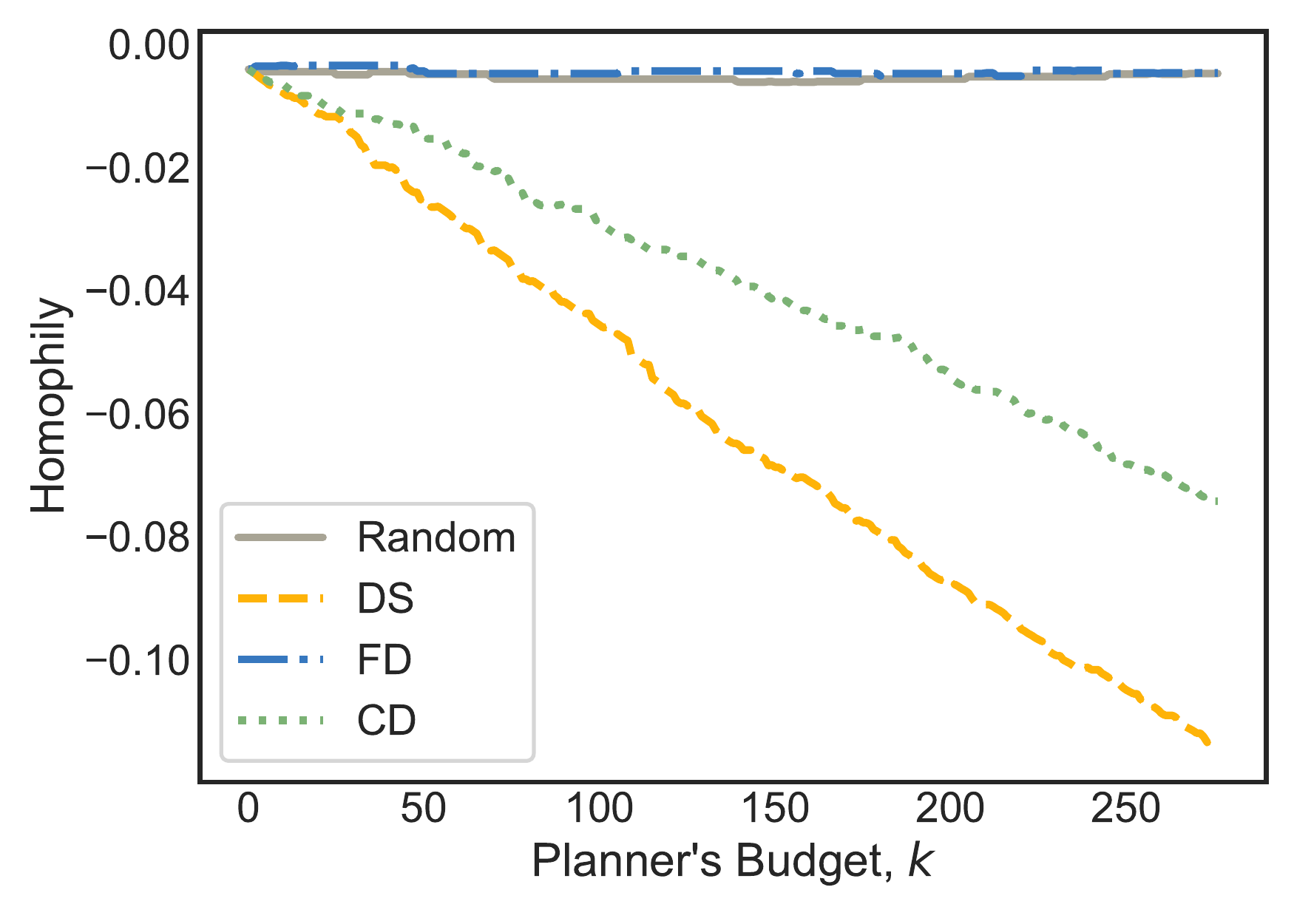}
    	\caption{Assortativity of innate opinions}
    	\label{fig:homophily_rd}	
    \end{subfigure} 
    \begin{subfigure}{0.35\linewidth}
        \centering
    	\includegraphics[width=\linewidth]{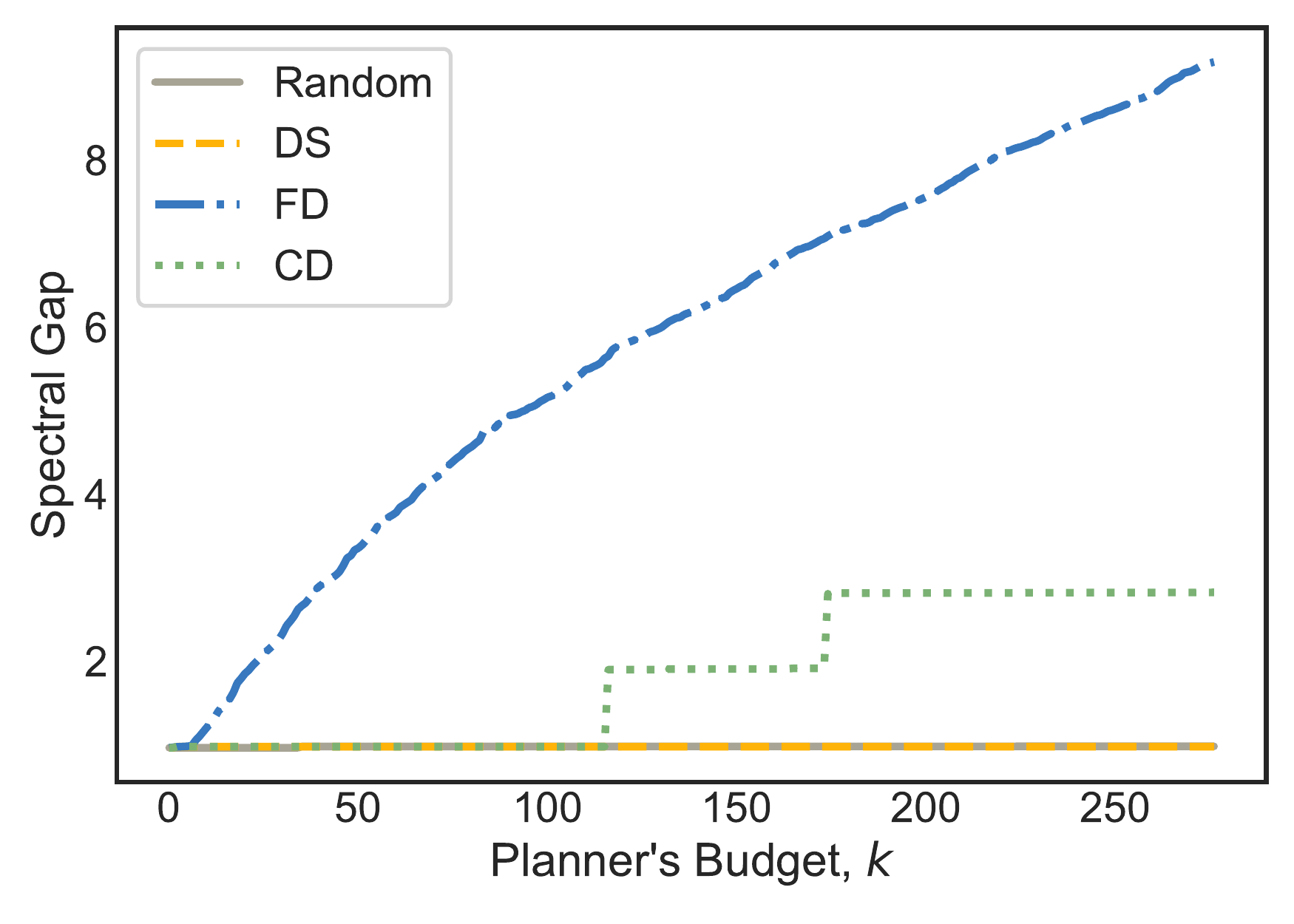}
    	\caption{Spectral gap} \label{fig:s_gap_rd}	
    \end{subfigure}
    \caption{Impacts of the planner's budget on the Reddit network.}
\end{figure}

\begin{figure}[ht]
    \centering
    \captionsetup{justification=centering}
    \begin{subfigure}{0.35\linewidth}
    	\includegraphics[width=\linewidth]{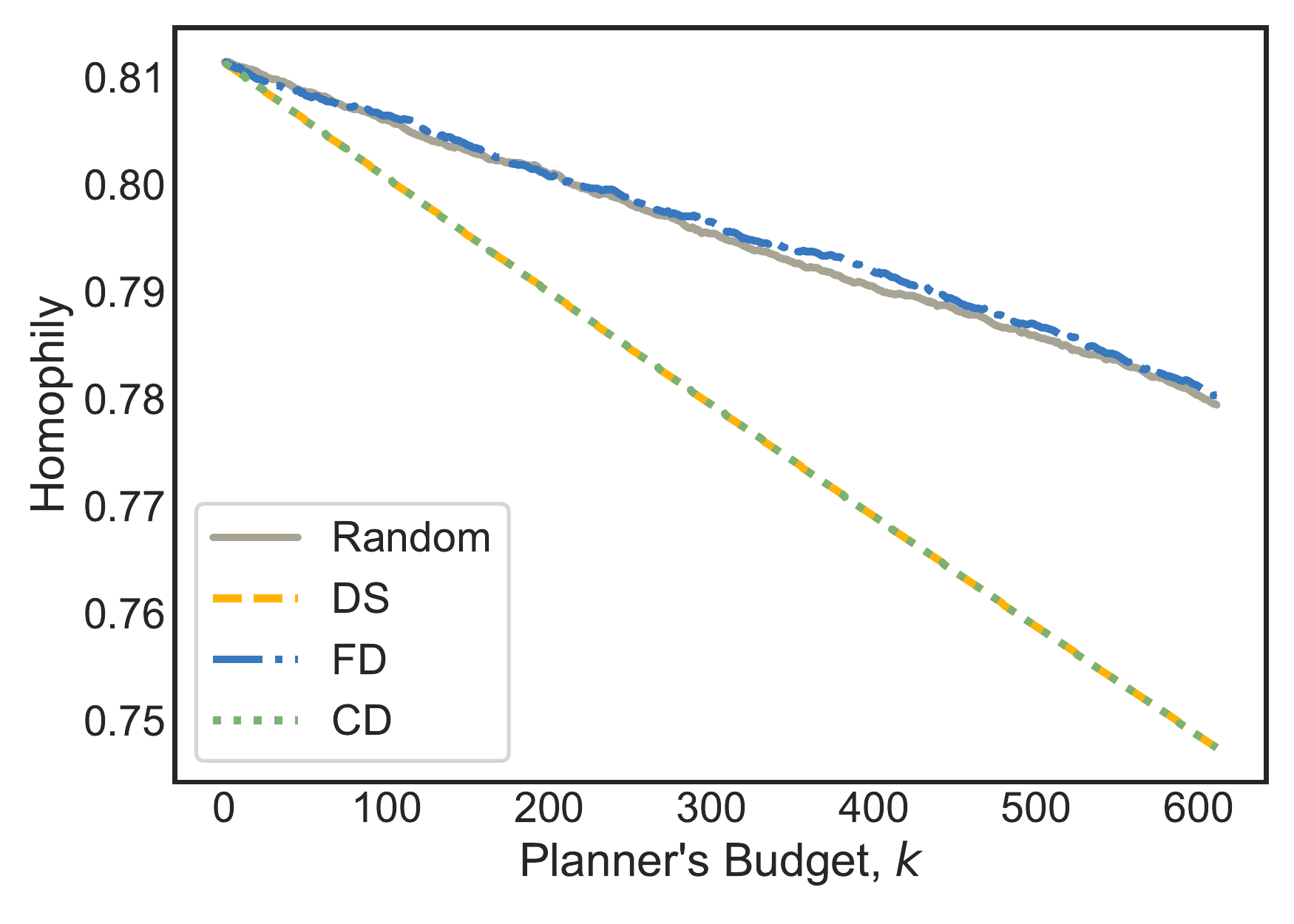}
    	\caption{Assortativity of innate opinions}
    	\label{fig:homophily_bg}	
    \end{subfigure} 
    \begin{subfigure}{0.35\linewidth}
        \centering
    	\includegraphics[width=\linewidth]{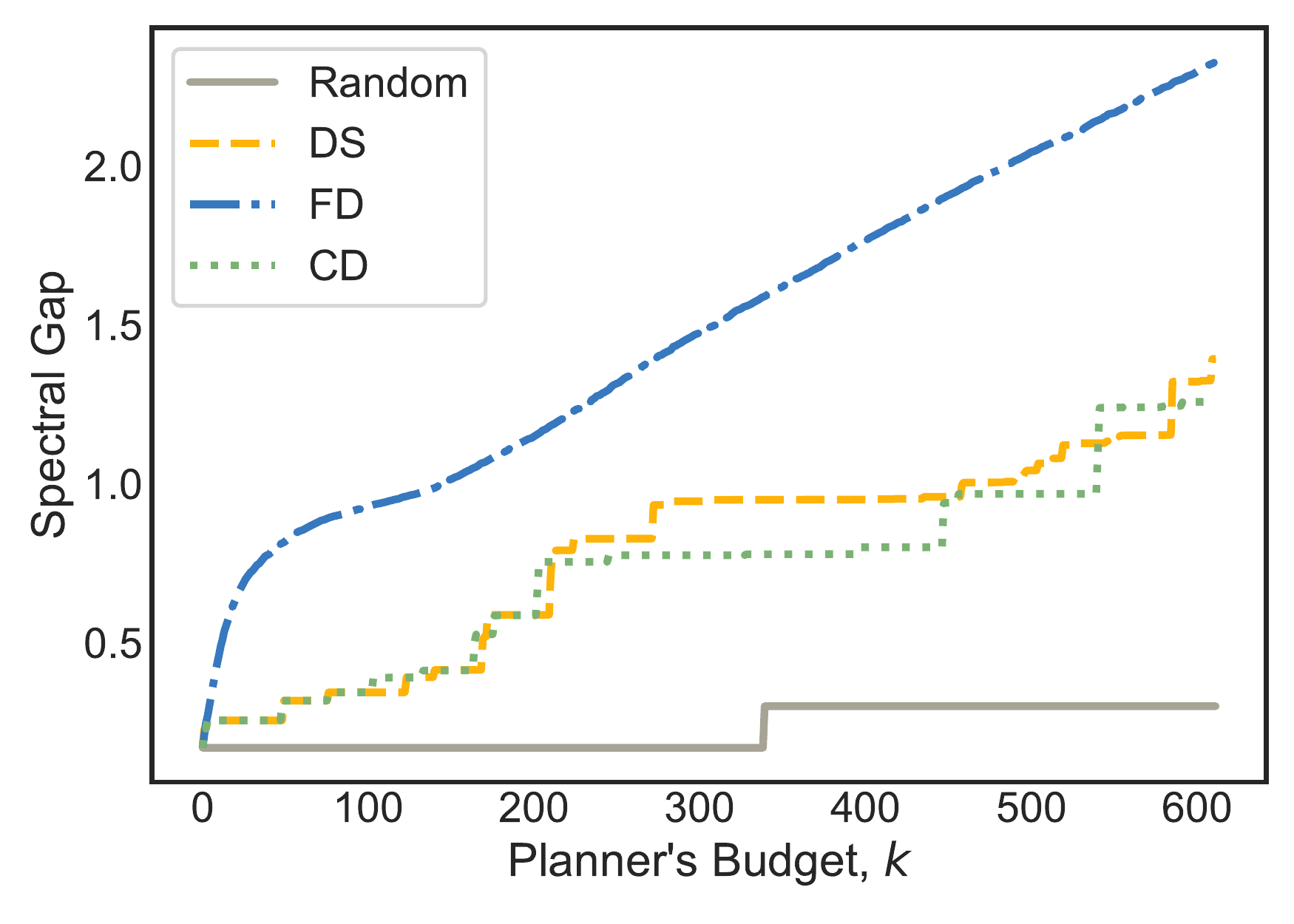}
    	\caption{Spectral gap} \label{fig:s_gap_bg}	
    \end{subfigure}
    \caption{Impacts of the planner's budget on the political blogs network.}
\end{figure}

\begin{figure}[ht]
    \centering
    \captionsetup{justification=centering}
    \begin{subfigure}{0.35\linewidth}
    	\includegraphics[width=\linewidth]{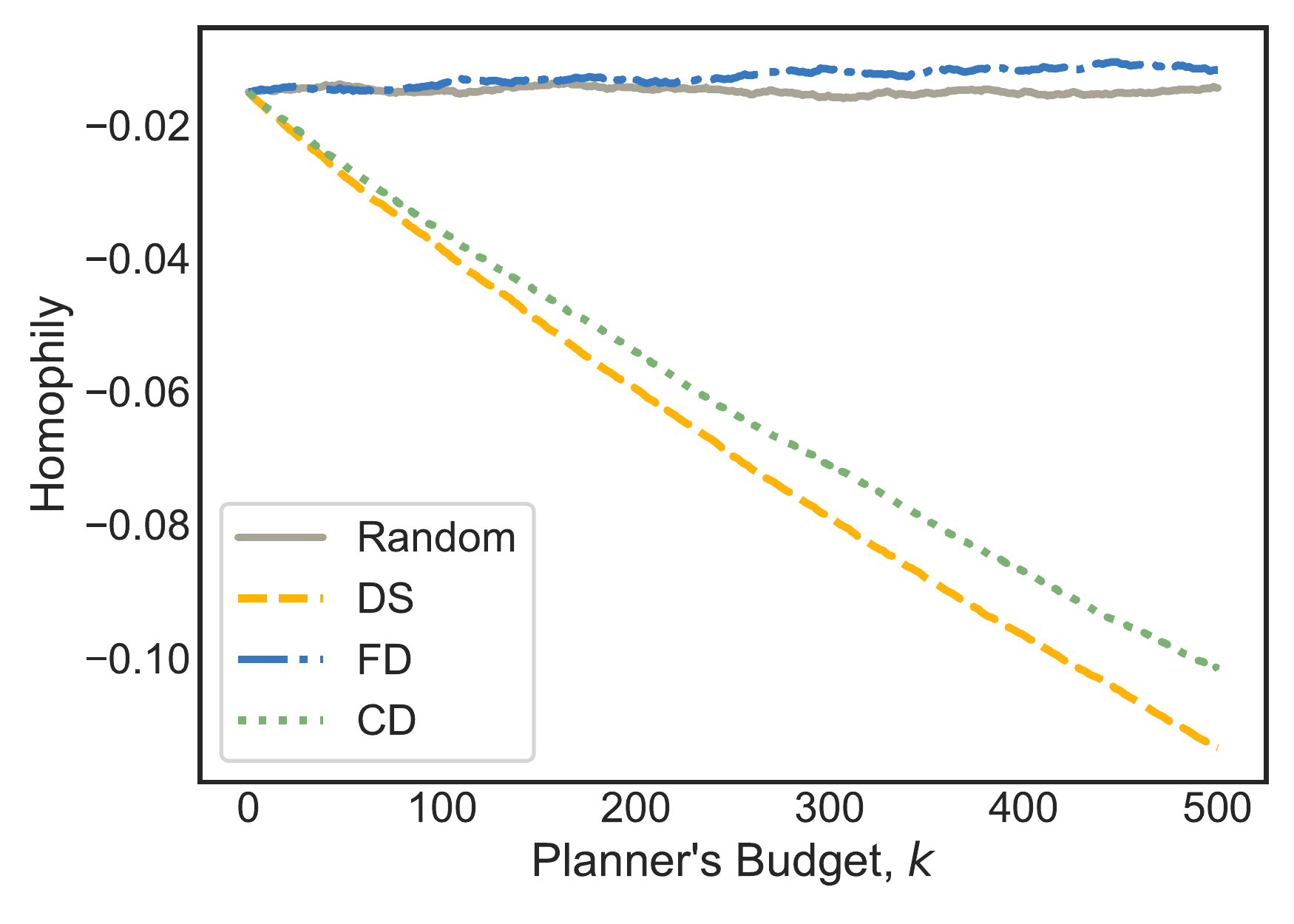}
    	\caption{Assortativity of innate opinions}
    	\label{fig:homophily_er}	
    \end{subfigure} 
    \begin{subfigure}{0.35\linewidth}
        \centering
    	\includegraphics[width=\linewidth]{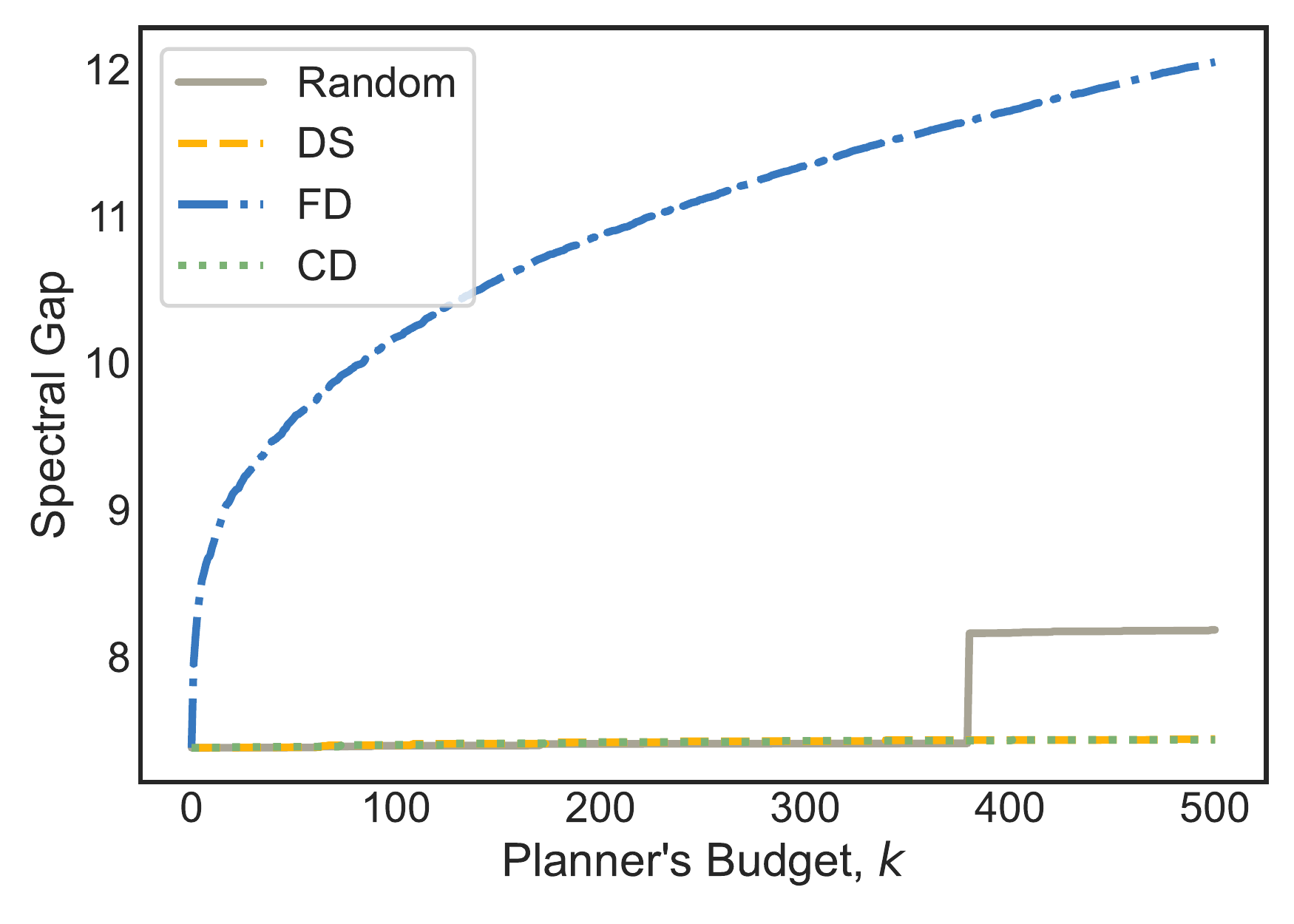}
    	\caption{Spectral gap} \label{fig:s_gap_er}	
    \end{subfigure}
    \caption{Impacts of the planner's budget on the Erd\H{o}s-R\'enyi graph.}
\end{figure}

\begin{figure}[ht]
    \centering
    \captionsetup{justification=centering}
    \begin{subfigure}{0.35\linewidth}
    	\includegraphics[width=\linewidth]{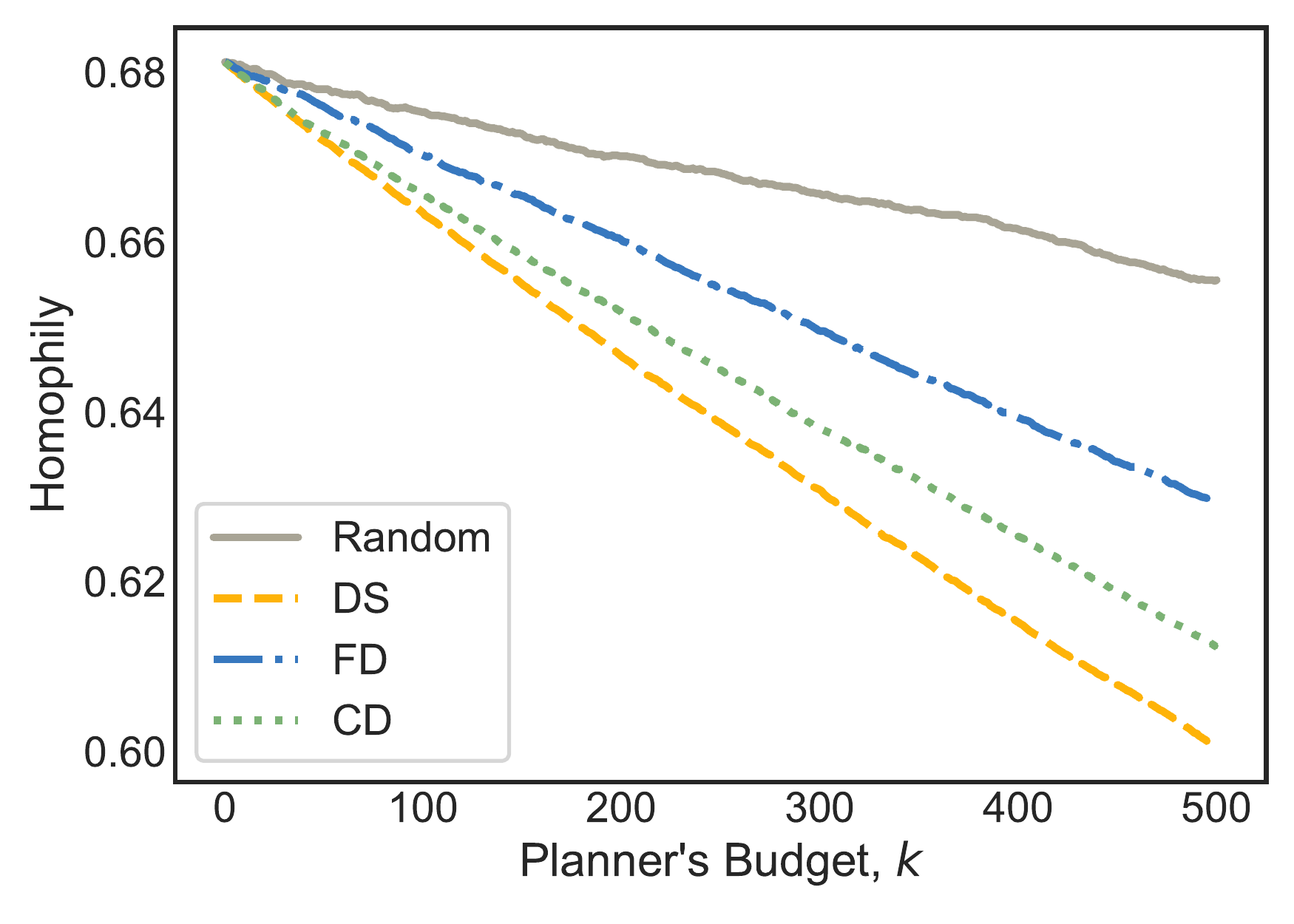}
    	\caption{Assortativity of innate opinions}
    	\label{fig:homophily_sbm}	
    \end{subfigure} 
    \begin{subfigure}{0.35\linewidth}
        \centering
    	\includegraphics[width=\linewidth]{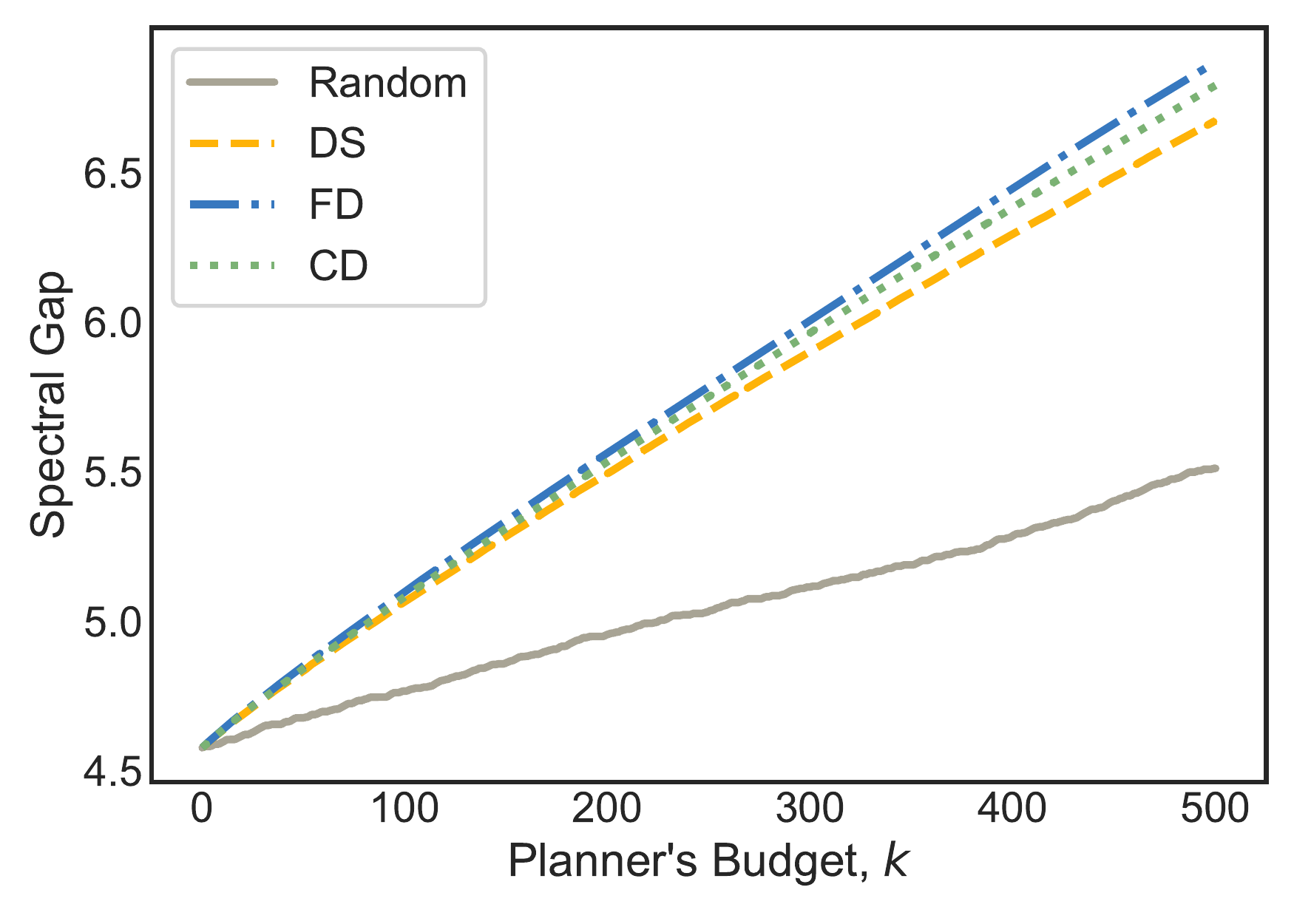}
    	\caption{Spectral gap} \label{fig:s_gap_sbm}	
    \end{subfigure}
    \caption{Impacts of the planner's budget on the stochastic block model graph.}
\end{figure}

\begin{figure}[ht]
    \centering
    \captionsetup{justification=centering}
    \begin{subfigure}{0.35\linewidth}
    	\includegraphics[width=\linewidth]{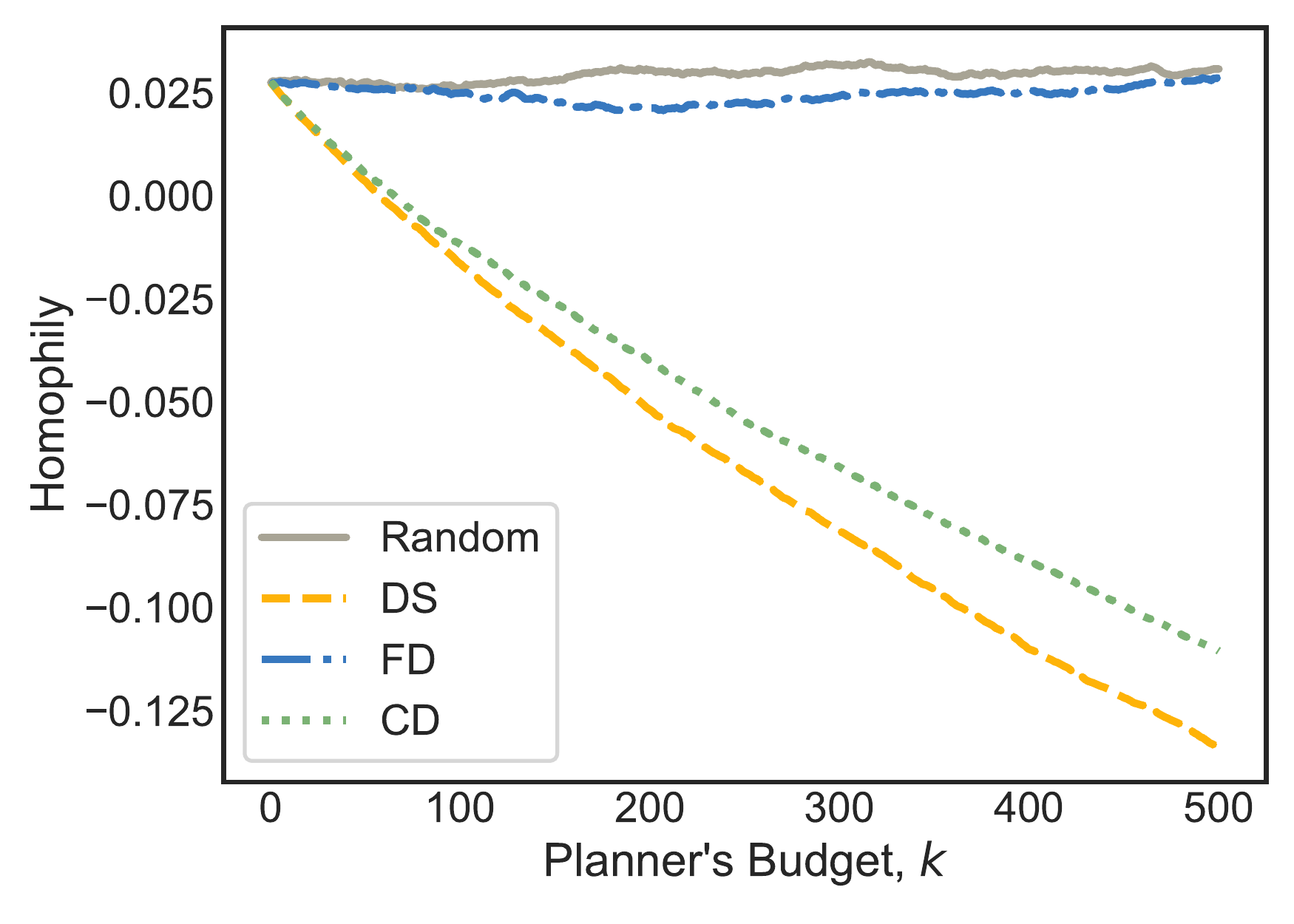}
    	\caption{Assortativity of innate opinions}
    	\label{fig:homophily_pa}	
    \end{subfigure} 
    \begin{subfigure}{0.35\linewidth}
        \centering
    	\includegraphics[width=\linewidth]{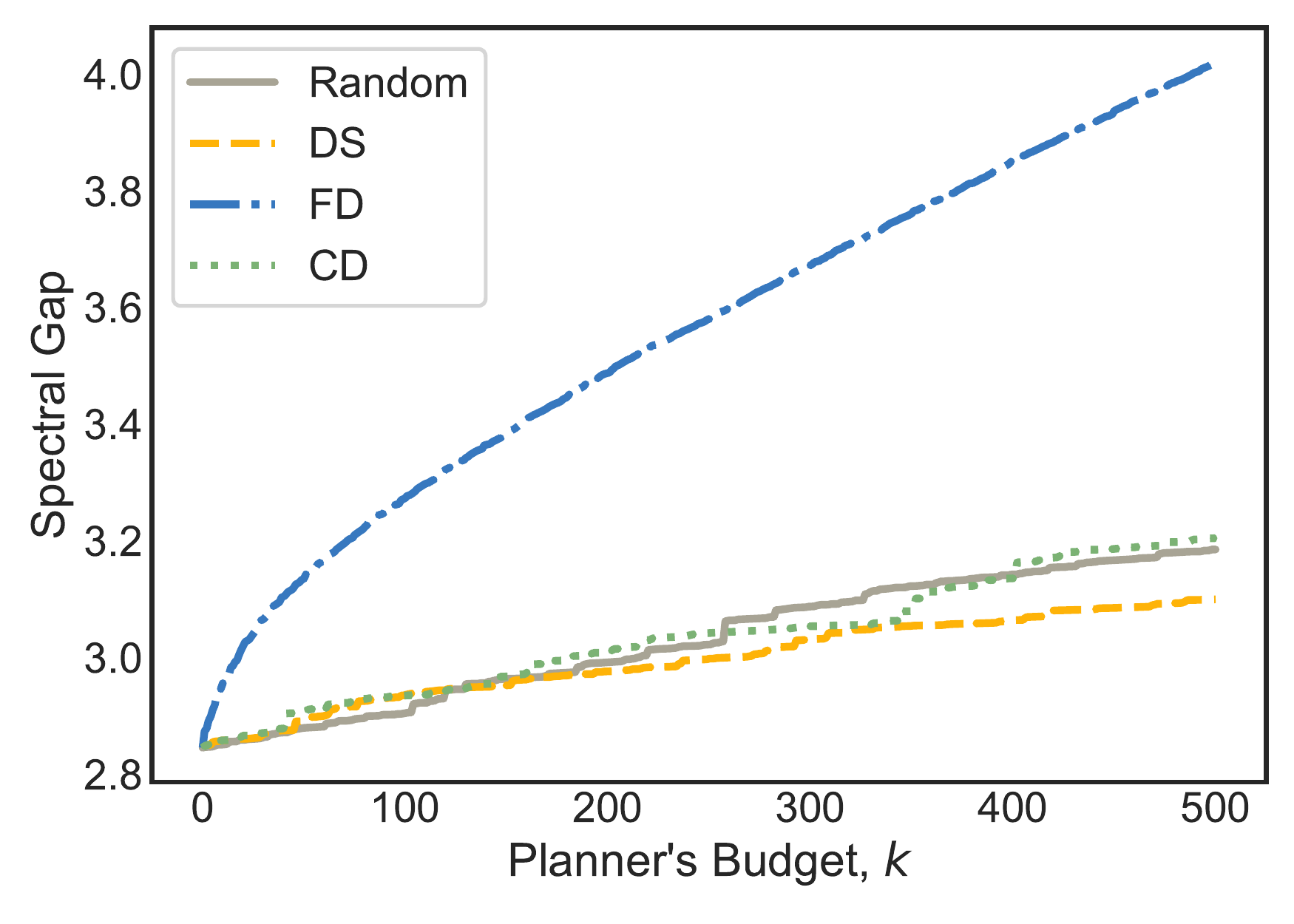}
    	\caption{Spectral gap} \label{fig:s_gap_pa}	
    \end{subfigure}
    \caption{Impacts of the planner's budget on the preferential attachment graph.}
\end{figure}

\end{document}